\documentclass[aps,prl,reprint,superscriptaddress,longbibliography]{revtex4-2}

\usepackage{graphicx}
\graphicspath{{./}{./figure/}}

\usepackage{amsmath,amssymb,bm,braket}
\usepackage{enumitem}

\usepackage{hyperref}
\hypersetup{colorlinks=true, citecolor=magenta, linkcolor=blue, urlcolor=cyan}

\usepackage{amsthm}
\newtheorem{theorem}{Theorem}
\newtheorem{lemma}{Lemma}
\newtheorem{proposition}{Proposition}
\newtheorem{theoremNonint}{Theorem}

\makeatletter
\newcommand{\fmarki}{}
\newcommand{\fmarkii}{\ensuremath{\dagger}}
\newcommand{\fmarkiii}{\ensuremath{\ddagger}}
\def\@fnsymbol#1{{\ifcase#1\or \fmarki\or \fmarkii\or \fmarkiii \else\@ctrerr\fi}}
\makeatother

\begin{document}

\title{Exact Thermal Eigenstates of Nonintegrable Spin Chains at Infinite Temperature}

\thanks{
    \hspace{-0.88em}%
    \normalfont\textsuperscript{*}%
    \hspace{+0.43em}%
    These authors contributed equally to this work.
    \phantomsection\label{thanks}
}

\author{Yuuya Chiba\normalfont\textsuperscript{\hyperref[thanks]{*},}}
\email{yuya.chiba@riken.jp}
\affiliation{Nonequilibrium Quantum Statistical Mechanics RIKEN Hakubi Research Team, RIKEN Cluster for Pioneering Research (CPR), 2-1 Hirosawa, Wako, Saitama 351-0198, Japan}

\author{Yasushi Yoneta\normalfont\textsuperscript{\hyperref[thanks]{*},}}
\email{yasushi.yoneta@riken.jp}
\affiliation{Center for Quantum Computing, RIKEN, 2-1 Hirosawa, Wako, Saitama 351-0198, Japan}

\date{\today}

\begin{abstract}
The eigenstate thermalization hypothesis (ETH) plays a major role in explaining thermalization of isolated quantum many-body systems. However, there has been no proof of the ETH in realistic systems due to the difficulty in the theoretical treatment of thermal energy eigenstates of nonintegrable systems. Here, we write down analytically
thermal eigenstates of nonintegrable spin chains. We consider a class of theoretically tractable volume-law states, which we call entangled antipodal pair (EAP) states. These states are thermal, in the most fundamental sense
that they are indistinguishable from the Gibbs state with respect to all local observables, with infinite temperature. We then identify Hamiltonians having the EAP state as an eigenstate and rigorously show that some of these Hamiltonians are nonintegrable. Furthermore, a thermal pure state at an arbitrary temperature is obtained by the imaginary-time evolution of an EAP state. Our results offer a potential avenue for providing a provable example of the ETH.
\end{abstract}

\maketitle

\textit{Introduction---}
Understanding the mechanism of thermalization in quantum many-body systems has been a pivotal issue in statistical physics~\cite{DAlessio2016,Gogolin2016,Mori2018}. Notably, the eigenstate thermalization hypothesis (ETH)~\cite{Neumann1929,Deutsch1991,Srednicki1994} has served as a cornerstone in this field. It posits that all the energy eigenstates in the bulk of the spectrum of quantum many-body systems exhibit thermal properties, thereby giving a plausible explanation of thermalization.

While the ETH is anticipated to hold in most nonintegrable systems, the verification of whether this hypothesis holds in realistic many-body systems relies on numerical calculations, and a theoretical verification has remained elusive~\cite{Rigol2008,Kim2014,Beugeling2014,Steinigeweg2014}. Thus, a significant challenge lies in theoretically addressing the nature of energy eigenstates, particularly in nonintegrable systems. However, it has not been clear whether thermal eigenstates of nonintegrable systems can be treated theoretically. This stems from the difficulty of writing down quantum states whose entanglement entropy obeys a volume law.

One approach to treat quantum many-body states theoretically is to use variational wave functions. Particularly for states that contain a small amount of entanglement, they can be represented via tensor network states such as a matrix product state~\cite{Verstraete2008,Schollwock2011}. Tensor network states are highly tractable, making them not only practical but also significantly contributing to theoretical advancements. Indeed, by utilizing the matrix product state, it has been successful to exactly describe finite-energy-density low-entangled (thus nonthermal) eigenstates even for nonintegrable systems~\cite{Moudgalya2018_1,Moudgalya2018_2,Moudgalya2020}, which are examples of many-body scars~\cite{Shiraishi2017,Mori2017,Bernien2017,Turner2018a,Turner2018b}. However, there has been a lack of variational wave functions suitable for theoretical analysis of volume-law states, which is one of the reasons why thermal eigenstates have not yet been obtained. Hence, there is a craving for a class of volume-law states amenable to the theoretical treatment~\cite{Vitagliano2010,Ramirez2014,Ramirez2015,Langlett2022,Bettaque2024}.

In this Letter, we provide
pairs of a nonintegrable Hamiltonian and its thermal eigenstate at infinite temperature.
We consider a class of volume-law states, which we call the entangled antipodal pair (EAP) states, that are amenable to theoretical calculations.
Then we fully characterize Hamiltonians having the EAP state as an eigenstate.
It is rigorously shown that some of these Hamiltonians are nonintegrable.
In addition, by evolving an EAP state in imaginary time, we construct a thermal pure state at arbitrary temperature, which is locally indistinguishable from the Gibbs state.

\begin{figure}
    \centering
    \includegraphics[width=\linewidth]{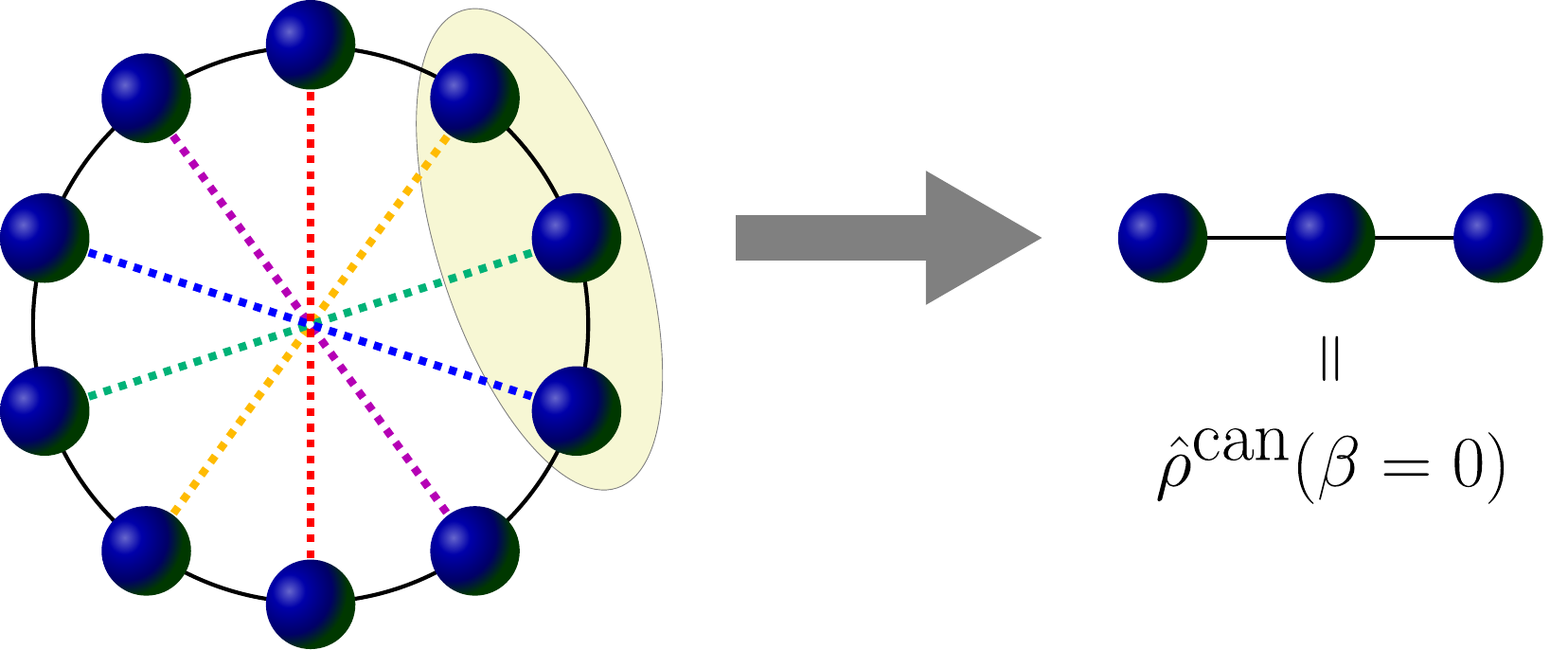}
    \caption{Schematic diagram depicting an entangled antipodal pair state $\ket{\mathrm{EAP}}$. The antipodal pairs of spins linked by dotted lines are in the Bell states $\ket{\Phi_{pq}}$. For any subsystem with diameter smaller than or equal to half of the size of the entire system $N$, the reduced density matrix coincides with the maximally mixed state, i.e., the Gibbs state $\hat{\rho}^\mathrm{can}$ at the inverse temperature $\beta=0$.}
    \label{fig:EAP}
\end{figure}

\textit{Entangled antipodal pair state---}
We consider quantum spin-$1/2$ systems on a one-dimensional lattice $\Lambda=\{1,2,\cdots,N\}$ with periodic boundary conditions. We assume that the number of lattice sites $N$ is even. Let $\hat{\sigma}_{j}^\mu (\mu=x,y,z)$ be the Pauli matrices acting on the $j$-th site, and $\ket{0}_j$ and $\ket{1}_j$ be the eigenvectors of $\hat{\sigma}_{j}^z$.

Our goal is to obtain pairs of a nonintegrable Hamiltonian and its thermal eigenstate. To achieve this, we adopt the following strategy. First, we introduce a class of volume-law states that are theoretically tractable. Next, for each of these volume law states, we search for Hamiltonians that have it as a thermal eigenstate. Finally, we prove that some of these Hamiltonians are nonintegrable.
Following this strategy, as the first step, we introduce the {\em entangled antipodal pair (EAP) state} as~\footnote{Some special EAP states appear in the study of 
integrable systems~\cite{Caetano2022,Ekman2022}.}
\begin{align}
    \ket{\mathrm{EAP}} = \bigotimes_{j=1}^{N/2} \ket{\Phi_{p_jq_j}}_{j,j+N/2}.
\end{align}
Here $\ket{\Phi_{p_jq_j}}_{j,j+N/2} (p_j,q_j=0,1)$ are the Bell states between antipodal sites $j$ and $j+N/2$~\footnote{It is possible to extend EAP states to higher-dimensional systems. However, unlike the one-dimensional case, there exists arbitrariness in the choice of antipodal sites.} defined by
\begin{align}
    &\ket{\Phi_{pq}}_{j,j'} = \frac{\ket{0}_{j}\ket{p}_{j'}+(-1)^q\ket{1}_{j}\ket{{\bar{p}}}_{j'}}{\sqrt{2}} \quad (p,q=0,1),
\end{align}
where $\bar{p}$ represents the negation of $p$.
It is straightforward to check that
\begin{align}
    \hat{\sigma}_{j}^\mu \hat{\sigma}_{j+N/2}^\mu \ket{\Phi_{p_jq_j}}_{j,j+N/2}
    = \omega_j^\mu \ket{\Phi_{p_jq_j}}_{j,j+N/2},
    \label{eq:sigma-sigma-Phi}
\end{align}
where $\omega_j^z=(-1)^{p_j}$, $\omega_j^x=(-1)^{q_j}$, and $\omega_j^y = - \omega_j^x \omega_j^z$. Hence, the EAP state is uniquely characterized by $(\omega_j^x)_{j\in\Lambda}$ and $(\omega_j^z)_{j\in\Lambda}$.
As a consequence of Eq.~\eqref{eq:sigma-sigma-Phi}, the action of $\hat{\sigma}_{j}^\mu$ on the EAP state is equivalent to the action of $\hat{\sigma}_{j+N/2}^\mu$, except for the factor of $\omega_j^\mu$, i.e.,
\begin{align}
    \hat{\sigma}_{j}^\mu \ket{\mathrm{EAP}}
    = \omega_j^\mu \hat{\sigma}_{j+N/2}^\mu \ket{\mathrm{EAP}}.
    \label{eq:sigma-EAP}
\end{align}
This plays a key role in the proof of our main results.

Now we explain that EAP states are thermal.
Since an EAP state consists of Bell pairs between sites separated by $N/2$~(Fig.~\ref{fig:EAP}), for any subsystem $X$ with diameter
\begin{align}
    D(X) = \max_{j,j' \in X} |j-j'|
\end{align}
satisfying $D(X)<N/2$, the entanglement entropy obeys a volume law with a maximum coefficient of $\log 2$.
Consequently, the reduced density matrix of an EAP state for the subsystem $X$ is the maximally mixed state $\propto \hat{I}_X$, which coincides with that of the Gibbs state $\hat{\rho}^\mathrm{can} \propto e^{- \beta \hat{H}}$ at the inverse temperature $\beta=0$.
Thus, EAP states cannot be distinguished from the thermal equilibrium state at $\beta=0$ by measurements of any local observable, i.e., observable whose support size is independent of $N$. This means that EAP states represent ``microscopic thermal equilibrium'' (MITE)~\cite{Goldstein2015,Goldstein2017,Mori2018}, which is one of the most fundamental definitions of equilibrium states, e.g., in the context of thermalization.

This should be contrasted with the rainbow state~\cite{Vitagliano2010,Ramirez2014,Ramirez2015,Langlett2022}, which is a product of the Bell states between sites $j$ and $N-j+1$. 
If we focus only on a subsystem $\{1,2,\cdots,\ell\}$ (with $\ell\le N/2$), the reduced density matrix of a rainbow state is maximally mixed, and hence
coincides with that of the Gibbs state $\hat{\rho}^\mathrm{can} \propto e^{- \beta \hat{H}}$ at $\beta=0$. However, for instance, by considering a local observable $\hat{\sigma}^{x}_{N/2}\hat{\sigma}^{x}_{N/2+1}$, rainbow states can be distinguished from the Gibbs state (because the expectation value of $\hat{\sigma}^{x}_{N/2}\hat{\sigma}^{x}_{N/2+1}$ in a rainbow state takes $\pm 1$ while that in the Gibbs state takes $0$). Thus, rainbow states do not represent MITE.

Note that, although EAP states represent MITE, they are very different from typical thermal eigenstates of nonintegrable systems. 
For instance, 
the expectation value of a few-body but nonlocal observable $\hat{\sigma}_{j}^x \hat{\sigma}_{j+N/2}^x$ in an EAP state 
differs from that in the Gibbs state $\hat{\rho}^{\mathrm{can}}$ at $\beta=0$. (The former takes $\pm 1$, while the latter takes $0$.)
In other words, EAP states are examples of states representing MITE but not ``few-body thermal equilibrium~\cite{Mori2018}.''
Although MITE is more common, some studies~\cite{Rigol2008,Rigol2009,Rigol2009a,Santos2010,Beugeling2014,Sugimoto2023} have also examined few-body thermal equilibrium and shown that thermal eigenstates of nonintegrable systems often represent it. These facts indicate that EAP states are sort of special thermal eigenstates~\footnote{ In addition, we can show that EAP states are not thermal in a ``deep sense''~\cite{Ippoliti2023} (even when the moment $k$ of the projected ensemble is not so large, such as $k=2$) in contrast to typical eigenstates of nonintegrable systems~\cite{Cotler2023}.}.

\textit{EAP state as an energy eigenstate---}
As explained above, in the following, we search for Hamiltonians that have the EAP state as a thermal eigenstate. To this end, we write the Hamiltonian in the most general form as
\begin{align}
    \hat{H}
    &= \sum_{X(\subset\Lambda)} \sum_{\vec{\mu}\in\{x,y,z\}^{X}}
    J_{X}^{\vec{\mu}} \bigotimes_{j\in X} \hat{\sigma}_{j}^{\mu_{j}}.
    \label{eq:Hamiltonian}
\end{align}
In the first sum, $X$ represents a subset of $\Lambda$. In the second sum, $\vec{\mu}$ represents a combination of $\mu_{j}=x,y,z$ for $j \in X$. Here the origin of the energy is taken such that $\hat{H}$ is traceless.
We are interested in the Hamiltonian where some EAP state $\ket{\mathrm{EAP}}$ is an energy eigenstate. 
Imposing a very mild condition that the locality of interactions in $\hat{H}$ is less than $N/4$, we can characterize such Hamiltonians completely as follows~\footnote{See Supplemental Material for proofs of theorems, additional numerical results, and discussions, which includes Refs.~\cite{Shiraishi2019,Park2024,Mehta2004,Atas2013,Araki1969,Araki1975}}:
\begin{theorem} \label{theorem:eigenstate-condition}
Suppose that coefficients $J_{X}^{\vec{\mu}}$ defined in Eq.~\eqref{eq:Hamiltonian} are zero for all subsets $X$ with $D(X)\ge N/4$. Then the following three statements are equivalent:
\begin{enumerate}[label={(\roman*)},ref={\roman*}]
    \item \label{statement:eigenstate-condition_1}
        An EAP state $\ket{\mathrm{EAP}}$ is an eigenstate of $\hat{H}$.
    \item \label{statement:eigenstate-condition_2}
        An EAP state $\ket{\mathrm{EAP}}$ is an eigenstate of $\hat{H}$ with the eigenvalue $0$.
    \item \label{statement:eigenstate-condition_3}
        For all $X \subset \Lambda$ and $\vec{\mu} \in \{x,y,z\}^{X}$, 
        \begin{align}
            J_{Y}^{\vec{\nu}}
            = - J_{X}^{\vec{\mu}} \prod_{j\in X} \omega_{j}^{\mu_{j}},
            \label{eq:eigenstate-condition}
        \end{align}
        where $Y=X+N/2$ is the translation of $X$ by $N/2$ sites and $\nu_{j} = \mu_{j-N/2}$ for $j\in Y$.
\end{enumerate}
\end{theorem}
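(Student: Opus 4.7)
Writing $\hat{P}_X^{\vec{\mu}} \equiv \bigotimes_{j \in X}\hat{\sigma}_j^{\mu_j}$ for brevity, the plan is: (ii)$\Rightarrow$(i) trivially; (i)$\Rightarrow$(ii) via the expectation value $\braket{\mathrm{EAP}|\hat{H}|\mathrm{EAP}}$; (iii)$\Rightarrow$(ii) by applying Eq.~\eqref{eq:sigma-EAP} to each Pauli string and pairing antipodal partners; and (ii)$\Rightarrow$(iii) by examining the inner products $\braket{\mathrm{EAP}|\hat{P}_X^{\vec{\mu}}\hat{H}|\mathrm{EAP}}$ combined with a locality argument.

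For (i)$\Rightarrow$(ii), the eigenvalue must equal $E = \braket{\mathrm{EAP}|\hat{H}|\mathrm{EAP}}$. Every $\hat{P}_X^{\vec{\mu}}$ appearing in $\hat{H}$ has $X \neq \emptyset$ and $D(X) < N/4 \leq N/2$, so the reduced density matrix of $\ket{\mathrm{EAP}}$ on $X$ is maximally mixed and $\braket{\mathrm{EAP}|\hat{P}_X^{\vec{\mu}}|\mathrm{EAP}} = 0$; hence $E = 0$. For (iii)$\Rightarrow$(ii), I would iterate Eq.~\eqref{eq:sigma-EAP} (using that operators on different sites commute) to obtain $\hat{P}_X^{\vec{\mu}}\ket{\mathrm{EAP}} = \bigl(\prod_{j \in X}\omega_j^{\mu_j}\bigr)\hat{P}_{X+N/2}^{\vec{\nu}}\ket{\mathrm{EAP}}$, with $\vec{\nu}$ the shifted Pauli assignment. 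Grouping the sum $\hat{H}\ket{\mathrm{EAP}}$ into antipodal pairs $X \leftrightarrow X+N/2$ (distinct for $X \neq \emptyset$ since $D(X) < N/2$), each pair collapses to $\bigl[J_X^{\vec{\mu}}\prod_{j \in X}\omega_j^{\mu_j} + J_{X+N/2}^{\vec{\nu}}\bigr]\hat{P}_{X+N/2}^{\vec{\nu}}\ket{\mathrm{EAP}}$, which vanishes by (iii).

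The main work---and the expected bottleneck---is (ii)$\Rightarrow$(iii). For each $(X, \vec{\mu})$ with $D(X) < N/4$, I would use $\braket{\mathrm{EAP}|\hat{P}_X^{\vec{\mu}}\hat{H}|\mathrm{EAP}} = 0$ and expand as $\sum_{Y,\vec{\nu}} J_Y^{\vec{\nu}} \braket{\mathrm{EAP}|\hat{P}_X^{\vec{\mu}}\hat{P}_Y^{\vec{\nu}}|\mathrm{EAP}}$. The product $\hat{P}_X^{\vec{\mu}}\hat{P}_Y^{\vec{\nu}}$ is, up to a phase, a Pauli string supported on $X \triangle Y$, and its $\ket{\mathrm{EAP}}$-expectation is nonzero precisely when $X \triangle Y$ is a disjoint union of antipodal pairs $\{j, j+N/2\}$ with matching Paulis on each pair. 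The crucial step is to argue, using the constraints $D(X), D(Y) < N/4$, that the only $Y$'s producing a nonzero contribution are $Y = X$ (yielding $J_X^{\vec{\mu}}$) and $Y = X+N/2$ with the shifted Paulis (yielding $J_{X+N/2}^{\vec{\nu}}\prod_{j \in X}\omega_j^{\mu_j}$ via Eq.~\eqref{eq:sigma-sigma-Phi}). Any other admissible $Y$ would be a partial antipodal swap of $X$ and would therefore contain both a site inside an arc of length $< N/4$ holding $X$ \emph{and} a site in its antipodal arc---separated by more than $N/4$---forcing $D(Y) > N/4$ and $J_Y = 0$. Setting the two surviving terms' sum to zero then reproduces Eq.~\eqref{eq:eigenstate-condition}. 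The main obstacle is making this partial-swap/arc argument fully rigorous, in particular handling the periodic structure of $\Lambda$ so that $D(X+N/2) < N/4$ whenever $D(X) < N/4$; once this is in place, what remains is routine bookkeeping with Pauli-string products and Bell-state expectation values.
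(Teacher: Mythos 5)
Your proposal is correct and follows essentially the same route as the paper: (i)$\Rightarrow$(ii) by vanishing expectation values of nontrivial Pauli strings, (ii)$\Rightarrow$(iii) by pairing $\braket{\mathrm{EAP}|\hat{P}_X^{\vec{\mu}}\hat{H}|\mathrm{EAP}}=0$ with the observation that only $Y=X$ and $Y=X+N/2$ survive under the diameter constraint, and (iii)$\Rightarrow$(i) by iterating Eq.~\eqref{eq:sigma-EAP}. The paper merely packages your ``partial-swap/arc'' step as a standalone lemma on two-string expectation values, proved by splitting $\Lambda$ into four quarters, which is the same diameter bookkeeping you outline (your only cosmetic slip is that the product string is supported on $X\triangle Y$ together with the sites of $X\cap Y$ where the Paulis differ, not on $X\triangle Y$ alone, but those extra sites are excluded by the same antipodal-partner argument).
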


Note that we can easily check whether a given EAP state is an eigenstate of a given Hamiltonian by testing Eq.~\eqref{eq:eigenstate-condition}. This provides a simple explanation to previous results~\cite{Udupa2023,Ivanov2024} that construct thermal energy eigenstates in certain nonintegrable systems~\cite{Note4}.

Note also that Eq.~\eqref{eq:eigenstate-condition} contains only a pair of coefficients $\{J_{X}^{\vec{\mu}},J_{Y}^{\vec{\nu}}\}$, but no other coefficients. 
This means that any coefficients $J_{X_1}^{\vec{\mu}_1}$ and $J_{X_2}^{\vec{\mu}_2}$ belonging to different pairs $\{J_{X_1}^{\vec{\mu}_1},J_{Y_1}^{\vec{\nu}_1}\}$ and $\{J_{X_2}^{\vec{\mu}_2},J_{Y_2}^{\vec{\nu}_2}\}$ can be taken independently.
Thus, for any EAP state, we can construct a Hamiltonian that has the EAP state as an eigenstate by just taking the coefficients such that every pair $\{J_{X}^{\vec{\mu}},J_{Y}^{\vec{\nu}}\}$ satisfies Eq.~\eqref{eq:eigenstate-condition}. [In order to restrict the interactions in $\hat{H}$ to a finite range, we can take coefficients $J_{X}^{\vec{\mu}}$ with large $D(X)$ to be zero because $J_{X}^{\vec{\mu}}=J_{Y}^{\vec{\nu}}=0$ trivially satisfies Eq.~\eqref{eq:eigenstate-condition}.] However, it is not obvious whether such a Hamiltonian can be translation invariant, even if the EAP state is translation invariant and coefficients are appropriately chosen. Therefore, in the following, we impose translation invariance on the Hamiltonian and show that only several EAP states are allowed as solutions of Eq.~\eqref{eq:eigenstate-condition}.

\textit{Translation-invariant nonintegrable Hamiltonians---}
Suppose that $\hat{H}$ defined in Eq.~\eqref{eq:Hamiltonian} is translation invariant and consists of interactions up to nearest neighbor sites. Then, $\hat{H}$ is characterized by nine nearest-neighbor-interaction coefficients $\{J^{\mu \nu}\}_{\mu,\nu=x,y,z}$ and three magnetic field coefficients $\{h^{\mu}\}_{\mu=x,y,z}$.
Since Eq.~\eqref{eq:eigenstate-condition} reduces to 
\begin{align}
    &J^{\mu_1 \mu_2}(1+\omega_{j}^{\mu_1}\omega_{j+1}^{\mu_2}) = 0, \quad
    h^{\mu_1}(1+\omega_{j}^{\mu_1}) = 0,\nonumber\\
    &\quad \text{for all $\mu_1,\mu_2\in\{x,y,z\}$ and $j\in\Lambda$},
    \label{eq:Reduced_eigenstate-condition}
\end{align}
we can solve them and find all
possible choices of $J^{\mu_1\mu_2},h^{\mu_1}\neq 0$ and $(\omega_{j}^{x},\omega_{j}^{z},\omega_{j}^{y}=-\omega_{j}^{x}\omega_{j}^{z})_{j\in\Lambda}$, as will be described in the following Theorem~\ref{theorem:model}.
Interestingly, there are some nontrivial solutions whose $(\omega_{j}^{x},\omega_{j}^{z},\omega_{j}^{y})_{j\in\Lambda}$ are not invariant by single-site shift but invariant by $n$-site shift for $n>1$. To express such solutions efficiently, we introduce the following notation for EAP states that are invariant by $n$-site shift~\footnote{Note that, for some of states~(\ref{eq:alt-notation}), $n$-site shift can cause a phase change, such as $\hat{\mathcal{T}}^3\ket{3;01,10,11}=-\ket{3;01,10,11}$, where $\hat{\mathcal{T}}$ is the one-site translation operator.}:
when $N/2$ is a multiple of $n$,
\begin{align}
   \ket{n;p_{1}^{*}q_{1}^{*},\cdots,p_{n}^{*}q_{n}^{*}}
   \label{eq:alt-notation}
\end{align}
denotes the EAP state characterized by $(p_{j},q_{j})_{j\in\Lambda}$ satisfying $p_{mn+j}=p_{j}^{*}$ and $q_{mn+j}=q_{j}^{*}$ for $m=0,1,\cdots,N/2n-1$ and $j=1,2,\cdots,n$.
For example, when $N/2$ is even, $\ket{2;10,11}=\bigotimes_{j=1}^{N/4}\ket{\Phi_{10}}_{2j-1,2j-1+N/2}\bigotimes_{j=1}^{N/4}\ket{\Phi_{11}}_{2j,2j+N/2}$.
Using this notation, we can list all nontrivial solutions as follows~\cite{Note4}:
\begin{theorem} \label{theorem:model}
By excluding noninteracting Hamiltonians~\footnote{There are only two solutions whose Hamiltonians are noninteracting~\cite{Note4}.}, 
the solution of Eq.~\eqref{eq:Reduced_eigenstate-condition}
is restricted to the following three cases (and their equivalents obtained by appropriate permutations of the directions of the Pauli matrices):
\begin{enumerate}
    \item \label{model:period1}
        The EAP state $\ket{1;00}$ is an eigenstate of
        \begin{align}
            \hat{H}_{1}=\sum_{j=1}^{N}\bigl(
            &J^{xy}\hat{\sigma}_{j}^{x}\hat{\sigma}_{j+1}^{y}
            +J^{yx}\hat{\sigma}_{j}^{y}\hat{\sigma}_{j+1}^{x}
            \nonumber\\
            &+J^{yz}\hat{\sigma}_{j}^{y}\hat{\sigma}_{j+1}^{z}
            +J^{zy}\hat{\sigma}_{j}^{z}\hat{\sigma}_{j+1}^{y}
            +h^{y}\hat{\sigma}_{j}^{y}
            \bigr)
        \end{align}
        for arbitrary values of $J^{xy},J^{yx},J^{yz},J^{zy},h^{y}$.
    \item \label{model:period3}
        When $N/2$ is a multiple of three, the EAP state $\ket{3;01,10,11}$ (and its translations) is an eigenstate of
        \begin{align}
            \hat{H}_{2}=\sum_{j=1}^{N}\bigl(
            J^{xy}\hat{\sigma}_{j}^{x}\hat{\sigma}_{j+1}^{y}
            +J^{yz}\hat{\sigma}_{j}^{y}\hat{\sigma}_{j+1}^{z}
            \bigr)
        \end{align}
        for arbitrary values of $J^{xy},J^{yz}$.
    \item \label{model:period4}
        When $N/2$ is a multiple of four,
        the EAP state $\ket{4;00,01,10,11}$ (and its translations) is an eigenstate of 
        \begin{align}
            \hat{H}_{3}=\sum_{j=1}^{N}\bigl(
            J^{xx}\hat{\sigma}_{j}^{x}\hat{\sigma}_{j+1}^{x}
            +J^{yz}\hat{\sigma}_{j}^{y}\hat{\sigma}_{j+1}^{z}
            \bigr)
        \end{align}
        for arbitrary values of $J^{xx},J^{yz}$.
\end{enumerate}
\end{theorem}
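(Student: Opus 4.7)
The plan is to combine Theorem~\ref{theorem:eigenstate-condition} with translation invariance and then systematically enumerate solutions. Under translation invariance, $J_{Y}^{\vec{\nu}}=J_{X}^{\vec{\mu}}$ whenever $Y=X+N/2$ and $\vec{\nu}$ is the corresponding translate of $\vec{\mu}$, so Eq.~\eqref{eq:eigenstate-condition} collapses to $J_{X}^{\vec{\mu}}\bigl(1+\prod_{j\in X}\omega_{j}^{\mu_{j}}\bigr)=0$. Hence a nonzero translation-invariant coupling must satisfy $\prod_{j\in X}\omega_{j}^{\mu_{j}}=-1$ for every translate of $X$. Specialized to single-site and nearest-neighbor terms, this produces two families of constraints: $h^{\alpha}\ne 0$ requires $\omega_{j}^{\alpha}=-1$ for all $j\in\Lambda$, while $J^{\alpha\beta}\ne 0$ requires $\omega_{j}^{\alpha}\omega_{j+1}^{\beta}=-1$ for all $j\in\Lambda$.

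The second step is to classify, for each admissible set of nonzero couplings, the EAP patterns $(\omega_{j}^{x},\omega_{j}^{z})_{j\in\Lambda}$ consistent with all the resulting constraints. Each two-site constraint is a functional relation tying $\omega^{\beta}$ at site $j+1$ to $\omega^{\alpha}$ at site $j$, so iterating along the ring forces periodicity of the $\omega$-sequences. I would then exploit the global $S_{3}$ symmetry permuting the Pauli labels (which simultaneously permutes the corresponding $\omega$'s and couplings) to reduce the analysis to a short list of canonical cases, indexed by which pairs $(\alpha,\beta)$ appear among the nonzero $J^{\alpha\beta}$ and which fields $h^{\alpha}$ are present. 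For each case, the constraints either pin down $(\omega_{j}^{x},\omega_{j}^{z})$ uniquely up to relabeling, or are internally inconsistent and force every coupling to vanish; the surviving nontrivial patterns should be precisely $\ket{1;00}$, $\ket{3;10,11,01}$, and $\ket{4;10,11,00,01}$, while the two excluded noninteracting solutions arise when only a single-site constraint of the form $\omega_j^\alpha=-1$ can be satisfied.

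The final step is, for each of these three patterns, to collect \emph{every} coupling constant $J^{\mu\nu}$ and $h^{\mu}$ whose product of $\omega$'s equals $-1$ on all sites, and to verify that this reproduces exactly $\hat{H}_{1}$, $\hat{H}_{2}$, and $\hat{H}_{3}$. I expect the chief obstacle to be the combinatorial bookkeeping: a priori there are twelve independent couplings and $4^{N/2}$ candidate EAP states, and one must systematically rule out all periods other than $n=1,3,4$ without overlooking cases. A clean way to organize this is to use each nonzero $J^{\alpha\beta}$ as the recursion $\omega_{j+1}^{\beta}=-\omega_{j}^{\alpha}$, so that the simultaneous presence of several interaction terms rapidly over-constrains the sequence and either fixes the period exactly or collapses all but one coupling to zero; the $S_{3}$ symmetry then keeps the number of distinct subcases tractable and, once the period and pattern are fixed, enumerating the compatible couplings becomes routine.
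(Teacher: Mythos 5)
Your overall strategy is the right one and is essentially what the paper does (the Supplemental Material simply enumerates the solutions of Eq.~\eqref{eq:eigenstate-condition} under translation invariance): combining translation invariance $J_{X+N/2}^{\vec{\nu}}=J_{X}^{\vec{\mu}}$ with Eq.~\eqref{eq:eigenstate-condition} correctly yields $J_{X}^{\vec{\mu}}\bigl(1+\prod_{j\in X}\omega_{j}^{\mu_{j}}\bigr)=0$, hence $\prod_{j\in X}\omega_{j}^{\mu_{j}}=-1$ on every translate for each nonzero coupling, and the classification then reduces to finding which sign patterns $(\omega_{j}^{x},\omega_{j}^{y},\omega_{j}^{z})$ (with $\omega_{j}^{x}\omega_{j}^{y}\omega_{j}^{z}=-1$) admit a nonempty, over-determining set of couplings. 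Exploiting the $S_{3}$ relabeling symmetry is also how the paper organizes the result (``their equivalents obtained by appropriate permutations'').

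However, there is a concrete gap in your final step: your enumeration, carried out faithfully, does \emph{not} terminate at the three patterns $\ket{1;00}$, $\ket{3;10,11,01}$, $\ket{4;10,11,00,01}$. It also produces two period-2 families, namely $\ket{2;10,01}$ with $\hat{H}=\sum_{j}(J^{xx}\hat{\sigma}_{j}^{x}\hat{\sigma}_{j+1}^{x}+J^{zz}\hat{\sigma}_{j}^{z}\hat{\sigma}_{j+1}^{z})$ and $\ket{2;10,11}$ with $\hat{H}=\sum_{j}(J^{xx}\hat{\sigma}_{j}^{x}\hat{\sigma}_{j+1}^{x}+J^{yy}\hat{\sigma}_{j}^{y}\hat{\sigma}_{j+1}^{y}+J^{xy}\hat{\sigma}_{j}^{x}\hat{\sigma}_{j+1}^{y}+J^{yx}\hat{\sigma}_{j}^{y}\hat{\sigma}_{j+1}^{x}+h^{z}\hat{\sigma}_{j}^{z})$; you can check directly that both satisfy all the sign constraints $\omega_{j}^{\alpha}\omega_{j+1}^{\beta}=-1$. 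These are the two ``noninteracting'' solutions referred to in the footnote: they are excluded not because they reduce to a single-site constraint $\omega_{j}^{\alpha}=-1$ (as you assert --- that would be the free-spin case, which is excluded separately as trivial), but because their Hamiltonians are quadratic in fermions under the Jordan--Wigner transformation. Your plan as written would therefore either miss these solutions (making the enumeration incomplete) or find them without a criterion for discarding them. To close the argument you need (a) to retain the period-2 patterns in the list produced by the recursion analysis, and (b) to supply the separate observation that their Hamiltonians map to free fermions, which is the sense of ``noninteracting'' used in the theorem. With that amendment the proof goes through and coincides with the paper's.
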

The fact that there exist not so many solutions implies that our condition Eq.~\eqref{eq:eigenstate-condition} is strong enough for translation-invariant Hamiltonians.

Considering the importance of these models, we call the models described by $\hat{H}_{1}$, $\hat{H}_{2}$ and $\hat{H}_{3}$ Models~\ref{model:period1}, \ref{model:period3} and \ref{model:period4}, respectively. Note that, since Model~\ref{model:period3} is included in Model~\ref{model:period1}, the EAP state $\ket{1;00}$ is also an eigenstate of $\hat{H}_{2}$.

Furthermore, we can obtain the following theorem regarding the nonintegrability of Models~\ref{model:period3} and \ref{model:period4}~\cite{Note4}:
\begin{theorem} \label{theorem:nonintegrability}
For Model~\ref{model:period3} with $J^{xy},J^{yz} \neq 0$ and for Model~\ref{model:period4} with $J^{xx},J^{yz} \neq 0$, there exists no local conserved quantity other than a linear combination of the identity and the Hamiltonian (i.e., trivial one).
\end{theorem}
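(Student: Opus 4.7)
The plan is to apply Shiraishi's method for ruling out local conservation laws to Models~\ref{model:period3} and~\ref{model:period4}. Suppose $\hat{Q}$ is a local conserved quantity each of whose terms is supported on at most $k$ consecutive sites. Expand $\hat{Q}$ in the Pauli-string basis as
\begin{align}
\hat{Q}=\sum_{j=1}^{N}\sum_{\vec{\mu}}c_{j,\vec{\mu}}\,\hat{\sigma}_{j}^{\mu_{1}}\hat{\sigma}_{j+1}^{\mu_{2}}\cdots\hat{\sigma}_{j+k-1}^{\mu_{k}},
\end{align}
restricting to $\mu_{1},\mu_{k}\neq I$ so that the range is genuinely $k$. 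The constraint $[\hat{H},\hat{Q}]=0$ yields a large linear system for the coefficients $c_{j,\vec{\mu}}$, and the goal is to show that its only solutions have $\hat{Q}=\alpha\hat{H}+\beta\hat{I}$.

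I would proceed by descent on $k$. Pauli strings of range exactly $k+1$ appearing in $[\hat{H},\hat{Q}]$ can arise only when a nearest-neighbor term of $\hat{H}$ acts on a bond straddling the boundary of the support of a range-$k$ term in $\hat{Q}$, extending it by one site on the left or the right. Setting the coefficient of every such range-$(k+1)$ string to zero produces boundary recursions that couple different $c_{j,\vec{\mu}}$ through their end labels $\mu_{1}$ and $\mu_{k}$. The central task is to show that, for Model~\ref{model:period3} with $J^{xy},J^{yz}\neq 0$ and Model~\ref{model:period4} with $J^{xx},J^{yz}\neq 0$, this boundary linear system is nondegenerate, so that every coefficient of a range-$k$ Pauli string must vanish. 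The argument then iterates, reducing $k$ by one at each step, until only range-$\leq 2$ strings remain. A direct inspection of the surviving relations at small $k$ will pin $\hat{Q}$ down to a linear combination of $\hat{H}$ and $\hat{I}$.

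The main obstacle is the boundary case analysis. For each end label $(\mu_{1},\mu_{k})$ one must explicitly compute which range-$(k+1)$ products survive after taking the Pauli commutator with each of the two nearest-neighbor terms of $\hat{H}$, distinguishing commuting from anticommuting Pauli pairs. The key structural feature that makes the argument work is that, in both $\hat{H}_{2}$ and $\hat{H}_{3}$, the two bond operators involve mutually distinct Pauli labels at each site, so the boundary-extension contributions from the two coupling types are linearly independent and cannot cancel unless every coefficient vanishes. The nonvanishing of both coupling constants is essential here: if either were zero, extra symmetries would reemerge and additional local conservation laws would be allowed, so the hypothesis of the theorem cannot be weakened. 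The bookkeeping over all end-label patterns is routine but lengthy, and is the step I expect to be deferred to the Supplemental Material referenced as~\cite{Note2}.
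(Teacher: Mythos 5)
Your overall framework---expanding $\hat{Q}$ in Pauli strings, imposing $[\hat{H},\hat{Q}]=0$ order by order in the support size, and descending from $k$ to $2$---is the same Shiraishi-type strategy the paper uses. But there is a genuine gap at the step you call the ``central task.'' You claim that the range-$(k+1)$ output conditions alone form a nondegenerate linear system forcing every range-$k$ coefficient to vanish, on the grounds that the boundary-extension contributions from the two coupling types ``cannot cancel.'' This is false for both models. The $(k+1)$-local constraints only relate pairs of range-$k$ coefficients whose end labels match up with a Hamiltonian bond, and these relations telescope along chains. The paper's Propositions~\ref{proposition:Nonint_Period3_Step1} and \ref{proposition:Nonint_Period4_Step1} show exactly what survives: a one-parameter family of nonzero coefficients (e.g.\ $q_{X_j(Z)^{k-2}Y}^{(k)}$, $q_{Y_j(X)^{k-2}Z}^{(k)}$, and the $q_{Y_j(X)^nY(Z)^{k-n-3}Y}^{(k)}$ for Model~\ref{model:period3}, and the ``doubling products'' for Model~\ref{model:period4}), all mutually proportional with explicit ratios of coupling constants, which satisfy every range-$(k+1)$ condition identically. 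This is the same phenomenon as in Shiraishi's original XYZ-with-field argument, and it is why your claimed mechanism would not close the proof.

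The missing idea is the second stage of the argument: one must examine range-$k$ \emph{outputs}, which receive contributions both from the surviving range-$k$ coefficients and from range-$(k-1)$ coefficients of $\hat{Q}$. Summing a telescoping chain of these mixed relations (Eqs.~(\ref{eq:Nonint_period3_Prop2_1})--(\ref{eq:Nonint_period3_Prop2_4}) and (\ref{eq:Nonint_period4_Prop2_1})--(\ref{eq:Nonint_period4_Prop2_5}) in the supplement) eliminates the unknown $(k-1)$-local coefficients and leaves $(k-1)\,J^{yz}q_{X_1(Z)^{k-2}Y}^{(k)}=0$ (resp.\ the analogous relation for Model~\ref{model:period4}), which finally kills the surviving family. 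Without this second stage your descent never gets started, because the range-$k$ sector is not annihilated by the boundary conditions alone. A smaller technical point: your ansatz for $\hat{Q}$ keeps only strings of range exactly $k$ with $\mu_1,\mu_k\neq I$; a general $k$-local conserved quantity contains strings of all ranges $\ell\le k$, and these shorter strings are precisely what enters the second-stage relations, so they cannot be dropped from the outset.
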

Because it is known that integrable systems have many [$\mathcal{O}(N)$ number of] nontrivial local conserved quantities~\cite{Baxter1982,Korepin1993,Takahashi1999}, the above theorem implies that these models are indeed nonintegrable.
In addition, since Model~\ref{model:period3} (which is $J^{yx}=J^{zy}=h^{y}=0$ case of Model~\ref{model:period1}) is nonintegrable for any nonzero $J^{xy},J^{yz}$, Model~\ref{model:period1} will be nonintegrable, at least for nonaccidental values of the parameters~\cite{Note4}.

Combining Theorems~\ref{theorem:model} and \ref{theorem:nonintegrability} with the fact that EAP states are thermal as explained above, we finally obtain
an analytic expression of a thermal energy eigenstate of a nonintegrable system.
This highly contrasts with the recent progress made by H.~Tasaki~\cite{Tasaki2024} in attempts to prove the ETH. He proved, only with respect to special observables, that all energy eigenstates of a certain noninteracting (integrable) system are indistinguishable from the thermal state. However, problems in nonintegrable systems remained elusive in his work. In addition, the restriction on observables is essential in his result because the integrable system has many local observables by which energy eigenstates can be distinguished from the thermal state. By contrast, our approach is to construct, in a nonintegrable system, an energy eigenstate that is indistinguishable from a thermal state with respect to any local observables.

\textit{Finite-temperature state---}
So far, we have only considered states at $\beta = 0$. Can we describe thermal states at $\beta \neq 0$ using the EAP state? Here, we provide a method for constructing finite-temperature states using the EAP state. It should be noted that Hamiltonians considered here are not restricted to those having the EAP state as an eigenstate, which we have considered thus far.

Consider a system with translationally invariant finite-range interactions. Suppose that $\hat{H}$ is a real matrix with respect to the products of eigenstates of $\hat{\sigma}_j^z$, $\{ \ket{00\cdots0}, \ket{10\cdots0}, \ket{01\cdots0}, \cdots \}$. This class includes many well studied models in statistical mechanics, such as the Ising model and the Heisenberg model.

Here, we utilize the EAP state characterized by $p_j=q_j=0$, i.e., $\ket{1;00}$ in the notation introduced in Eq.~\eqref{eq:alt-notation}. Then, as will be discussed below, the imaginary-time evolution of the EAP state
\begin{align}
    \ket{\beta} \propto e^{- \frac{1}{4} \beta \hat{H}} \ket{1;00}
\end{align}
is locally indistinguishable from the Gibbs state $\hat{\rho}^\mathrm{can}$ at the inverse temperature $\beta$ with an exponentially small error, although it is not an eigenstate of $\hat{H}$~\footnote{Some readers may wonder, if the EAP state $\ket{1;00}$ is an energy eigenstate, then it is invariant under the imaginary-time evolution, and hence $\ket{\beta}$ cannot describe a finite-temperature state. However, by using Theorem~\ref{theorem:eigenstate-condition}, it immediately follows that $\ket{1;00}$ cannot be an energy eigenstate provided that $\hat{H}$ is translation invariant (for single-site translations) and is a real matrix.}.

Let $\hat{O}$ be a local observable of interest. Since both $\ket{\beta}$ and $\hat{\rho}^\mathrm{can}$ are translation invariant, without loss of generality, we can assume that $\hat{O}$ has support around $j=N/4$. We then introduce an approximation $\ket{\tilde{\beta}}$ for $\ket{\beta}$ as
\begin{align}
    \ket{\tilde{\beta}} \propto e^{- \frac{1}{4} \beta [\hat{H}-\hat{H}_\mathrm{int}]} \ket{1;00}.
\end{align}
Here $\hat{H}_\mathrm{int}$ represents the interaction between the left half $\{1,2,\cdots,N/2\}$ and the right half $\{N/2+1,N/2+2,\cdots,N\}$ of the whole system, defined as a sum of interaction terms when $\hat{H}$ is decomposed into a linear combination of the Pauli strings as in Eq.~\eqref{eq:Hamiltonian}.
The imaginary-time evolution is expected to be stable against local perturbations. Indeed, the Gibbs state, which is proportional to the imaginary-time evolution operator, is not affected by perturbations on infinitely distant points because systems under consideration are one dimensional and do not exhibit the first-order phase transition at finite temperature~\cite{Note4}. Therefore, since $\hat{H}_\mathrm{int}$ is a sum of local observables defined around $j=1$ or $N/2$ at a distance of $\mathcal{O}(N)$ from the support of $\hat{O}$, it is expected that $\ket{\tilde{\beta}}$ approximate $\ket{\beta}$ around $j=N/4$ in the limit of $N\to\infty$, i.e.,
\begin{align}
    \lim_{N\to\infty} \braket{\tilde{\beta}|\hat{O}|\tilde{\beta}}
    = \lim_{N\to\infty} \braket{{\beta}|\hat{O}|{\beta}}.
    \label{eq:beta-tilde_beta}
\end{align}
Under this assumption, we have the following~\cite{Note4}:
\begin{theorem} \label{theorem:finite-temperature}
Suppose that the interactions are of finite range (which means the range is bounded by a constant independent of $N$) and translation invariant and that $\hat{H}$ is represented by a real matrix in the basis formed by products of $\ket{0}_j$ and $\ket{1}_j$. Let $\hat{O}$ be an arbitrary local observable. Then, for any $\beta<\infty$, if Eq.~\eqref{eq:beta-tilde_beta} is satisfied, it holds that
\begin{align}
    \lim_{N\to\infty} \braket{\beta|\hat{O}|\beta}
    = \lim_{N\to\infty} \mathrm{Tr} [ \hat{\rho}^\mathrm{can} \hat{O} ].
    \label{eq:beta-state_Gibbs}
\end{align}
\end{theorem}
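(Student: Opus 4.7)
The plan is to exploit two structural features, the antipodal Bell-pair form of $\ket{1;00}$ (which makes its reduction to either half of the chain the maximally mixed state) together with the reality of $\hat{H}$, to rewrite $\braket{\tilde{\beta}|\hat{O}|\tilde{\beta}}$ as a canonical average on just a half-chain, and then to identify this with the full-system Gibbs expectation using boundary-condition insensitivity of local thermal averages in one dimension.

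First, I split $\hat{H}=\hat{H}_L+\hat{H}_R+\hat{H}_\mathrm{int}$, where $\hat{H}_L$ and $\hat{H}_R$ collect the local terms of $\hat{H}$ supported entirely inside $L=\{1,\dots,N/2\}$ and $R=\{N/2+1,\dots,N\}$, respectively. Translation invariance implies that $\hat{H}_R$ is the $N/2$-site shift of $\hat{H}_L$, and the disjoint supports force $[\hat{H}_L,\hat{H}_R]=0$, so $e^{-\beta(\hat{H}-\hat{H}_\mathrm{int})/4}=e^{-\beta\hat{H}_L/4}\,e^{-\beta\hat{H}_R/4}$. The key step is to show that $e^{-\beta\hat{H}_R/4}\ket{1;00}=e^{-\beta\hat{H}_L/4}\ket{1;00}$. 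Expanding $\hat{H}_L$ in the Pauli basis, reality of $\hat{H}$ forces every Pauli string appearing in $\hat{H}_L$ to contain an even number of $\hat{\sigma}^y$ factors (since $\hat{\sigma}^y$ is the only imaginary Pauli matrix in the computational basis). For each such string on $R$, applying Eq.~\eqref{eq:sigma-EAP} site by site, with $\omega^x=\omega^z=+1$ and $\omega^y=-1$ for $\ket{1;00}$, carries it back onto $L$ without a sign change because the $\omega^y$'s multiply to $+1$. Linearity then gives $\hat{H}_R\ket{1;00}=\hat{H}_L\ket{1;00}$, and a short induction using $[\hat{H}_L,\hat{H}_R]=0$ promotes this to $\hat{H}_R^n\ket{1;00}=\hat{H}_L^n\ket{1;00}$ for every $n$, whence the exponential identity follows term by term.

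Consequently $\ket{\tilde\beta}\propto e^{-\beta\hat{H}_L/2}\ket{1;00}$. Since the support of $\hat{O}$ lies in $L$, both $\hat{O}$ and $e^{-\beta\hat{H}_L/2}$ act trivially on $R$, and the identity $\mathrm{Tr}_R\ket{1;00}\bra{1;00}=\hat{I}_L/2^{N/2}$ combined with cyclicity of the trace gives
\begin{equation}
\braket{\tilde\beta|\hat{O}|\tilde\beta}=\frac{\mathrm{Tr}_L[e^{-\beta\hat{H}_L}\hat{O}]}{\mathrm{Tr}_L[e^{-\beta\hat{H}_L}]},
\end{equation}
which is the canonical expectation of $\hat{O}$ for the open-boundary half-chain Hamiltonian $\hat{H}_L$.

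To conclude, note that $\hat{O}$ sits at distance $\Theta(N)$ from every boundary of $L$ and from every term in $\hat{H}_\mathrm{int}$. Hence the usual boundary-condition insensitivity of local thermal averages in short-range one-dimensional systems without first-order phase transitions, the very property already invoked in the paper to justify Eq.~\eqref{eq:beta-tilde_beta}, guarantees that the half-chain and full-chain Gibbs averages of $\hat{O}$ share the same $N\to\infty$ limit. Combining this with the hypothesis~\eqref{eq:beta-tilde_beta} yields~\eqref{eq:beta-state_Gibbs}. The algebraic reductions above are automatic once the reality of $\hat{H}$ and the maximally entangled structure of $\ket{1;00}$ are in hand; the main obstacle is this final analytic step, where the no-first-order-transition hypothesis actually enters, typically via a cluster-expansion or transfer-matrix argument establishing equivalence of ensembles/boundary conditions for local observables.
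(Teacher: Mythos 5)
Your proof is correct and follows essentially the same route as the paper's: decompose $\hat H-\hat H_{\mathrm{int}}$ into two commuting open-boundary half-chain Hamiltonians, use the reality of $\hat H$ (equivalently, the even number of $\hat\sigma^y$ factors in each Pauli string) to collapse $\ket{\tilde{\beta}}$ onto $e^{-\beta\hat H_L/2}\ket{1;00}$, identify the result with the open-boundary half-chain Gibbs average via $\mathrm{Tr}_R\ket{1;00}\bra{1;00}\propto\hat I_L$, and finish with the boundary-condition independence of one-dimensional Gibbs states. The only cosmetic difference is that you transfer $e^{-\beta\hat H_R/4}$ to the left half by applying Eq.~\eqref{eq:sigma-EAP} term by term (the $\omega^y=-1$ signs cancelling in pairs), whereas the paper performs the equivalent transpose manipulation on the matrix elements of $e^{-\beta\hat H_{N/2,\mathrm{OBC}}/4}$ in the computational basis.
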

In other words, $\ket{\beta}$ is a thermal pure state at finite temperature although it is not an energy eigenstate.
We should emphasize that this state is a pure state of a closed system, in contrast to a purified Gibbs state~\cite{Verstraete2004,Feiguin2005}, which is a pure state of an extended system involving an ancillary system. In the imaginary-time evolved EAP state $\ket{\beta}$, for any local subsystem, its complement plays the role of an ancilla while itself in the correct thermal equilibrium. This property is by no means trivial and relies on the conditions on the Hamiltonian in Theorem~\ref{theorem:finite-temperature}.

We finally confirm the validity of Eq.~\eqref{eq:beta-state_Gibbs} by numerically testing our prediction on the quantum Ising chain, defined by the Hamiltonian
\begin{align}
    \hat{H} = - \sum_{j=1}^{N} \hat{\sigma}_{j}^z \hat{\sigma}_{j+1}^z - h^x \sum_{j=1}^{N} \hat{\sigma}_{j}^x - h^z \sum_{j=1}^{N} \hat{\sigma}_{j}^z.
\end{align}
This model is known to be integrable when $h^z = 0$ and nonintegrable when $h^x \neq 0, h^z \neq 0$~\cite{Chiba2024}. To ensure that the results do not depend on integrability, we investigate both the integrable case ($h^x=1, h^z=0$) and nonintegrable case ($h^x=1, h^z=1$).
We plot in Fig.~\ref{fig:beta-state_Gibbs} the $N$ dependence of the difference in the expectation value of the transverse magnetization $\hat{m}^x = \frac{1}{N} \sum_{j=1}^{N} \hat{\sigma}_{j}^x$ between the Gibbs state and the imaginary-time evolved EAP state $\ket{\beta}$.
For the integrable case, the expectation value in the Gibbs state is evaluated at the thermodynamic limit using the exact solution.
For the nonintegrable case, it is evaluated for the same system size as that of $\ket{\beta}$ using the exact diagonalization.
In both cases,
$\ket{\beta}$
is obtained by applying the imaginary-time evolution operator, expanded using a Taylor series, to the EAP state.
We can confirm that the expectation value in $\ket{\beta}$ converges to the correct value. Furthermore, the convergence is exponentially fast. Thus, by utilizing the EAP state evolved in imaginary time, one can calculate the thermal equilibrium values of local observables.

Here, we discuss the relation with the cTPQ state~\cite{Sugiura2013}, which is also a type of thermal pure states. The cTPQ state is the imaginary-time evolution of a Haar random state. The Haar random state is (almost) maximally entangled and is locally indistinguishable from the maximally mixed state, as is the EAP state. However, while constructing the TPQ state requires an imaginary-time evolution for $\beta/2$, constructing $\ket{\beta}$ requires only half of that, $\beta/4$.
In addition, unlike the cTPQ state, $\ket{\beta}$ does not entail statistical uncertainties. This means that generating a single $\ket{\beta}$ suffices, without incurring any sampling costs.

\begin{figure}
    \centering
    \includegraphics[width=\linewidth]{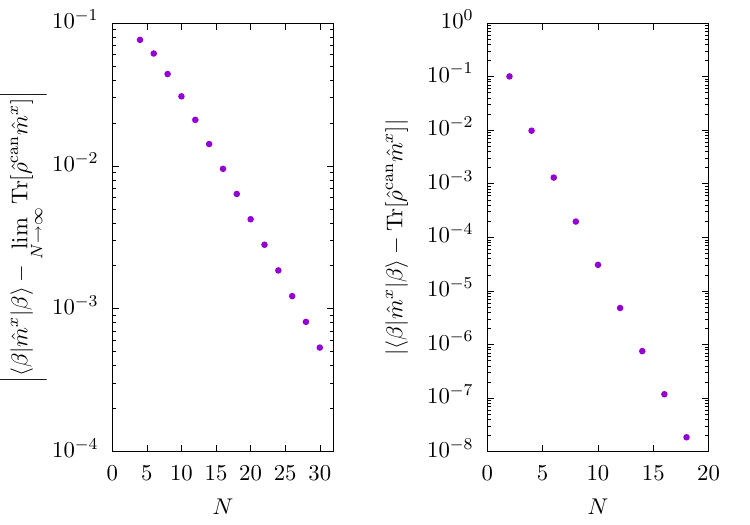}
    \caption{$N$ dependence of the difference in the expectation value of the transverse magnetization between the Gibbs state and the imaginary-time evolved EAP state for the quantum Ising model.
    (left) Integrable case ($h^x=1,h^z=0$), showing the difference from the exact value in the thermodynamic limit. (right) Nonintegrable case ($h^x=1,h^z=1$), showing the difference from the Gibbs state with the same system size.
    We set the inverse temperature to $\beta = 1$ in both cases.}
    \label{fig:beta-state_Gibbs}
\end{figure}

\textit{Discussion---}
We have studied a class of volume-law states, which we call entangled antipodal pair (EAP) states, and thoroughly characterized these states by providing the necessary and sufficient conditions that Hamiltonians must satisfy in order to have an EAP state as an eigenstate. Moreover, we have rigorously shown that some of such Hamiltonians are nonintegrable. Our EAP states are indistinguishable from the Gibbs state at infinite temperature. In other words, we have written down
analytic expressions for thermal energy eigenstates of nonintegrable many-body systems. We have also devised a method for constructing finite-temperature thermal states using an EAP state.

Some readers may be concerned that the energy eigenspace containing the EAP eigenstate is excessively large because, with an exponentially large number of states, it can be not so difficult to obtain a highly entangled state as their linear combination, even when each is hardly entangled~\cite{Wei2024}. However, additional numerical calculations~\cite{Note4} show that in Model~\ref{model:period1}, the degeneracy
at zero energy is only $2$ in the momentum sector~\footnote{Furthermore, by extending the locality of interactions from two to three, we have also obtained an example of a pair of the EAP eigenstate and the nonintegrable Hamiltonian that does not have degeneracy in the corresponding momentum sector~\cite{Note4}.}.
In addition, we can also confirm that all states in the degenerate eigenspace
have a volume-law entanglement with the maximal coefficient $\log 2$~\cite{Note4}. Hence, the nontriviality of our findings is not diminished by the degeneracy.

Our result suggests that theoretical analysis may be feasible even for thermal eigenstates of nonintegrable systems and may pave the way for giving a provable example of the ETH.
Since EAP states themselves can give only a few of eigenstates at infinite temperature, constructing general eigenstates remains challenging.
However, as we have shown, one can obtain thermal states at arbitrary temperature by applying an imaginary-time evolution, a low-complexity operation that can be well approximated via a matrix product operator with small bond dimension~\cite{Kuwahara2021}, to the EAP state. This implies that the expressivity of states derived from EAP states is quite high.
Thus, EAP states would be one of the most promising starting points for constructing general eigenstates, including those at finite temperature.
Furthermore, it will provide theoretical methodologies not only for thermalization, but also for various areas of quantum statistical mechanics that use thermal pure states, such as the formulation of statistical mechanics and finite temperature simulations.

\textit{Acknowledgments---} We are grateful to N.~Shiraishi, A.~Shimizu, H.~Katsura, K.~Fujii, K.~Mizuta, H.~Kono, M.~Yamaguchi, and Z.~Wei for useful discussions.
We would like to especially thank H.~K.~Park and A.~N.~Ivanov for careful reading and valuable comments on a draft of this Letter.
Y.Y. is supported by the Special Postdoctoral Researchers Program at RIKEN. Y.C. is supported by Japan Society for the Promotion of Science KAKENHI Grant No.~JP21J14313, the Special Postdoctoral Researchers Program at RIKEN, and JST ERATO Grant No.~JPMJER2302, Japan.

\bibliographystyle{apsrev4-2}
\bibliography{document}

\newcommand{\beginsupplement}{%
	\setcounter{table}{0}
	\renewcommand{\thetable}{S\arabic{table}}%
	\setcounter{figure}{0}
	\renewcommand{\thefigure}{S\arabic{figure}}%
	\setcounter{section}{0}
	\renewcommand{\thesection}{\Roman{section}}%
	\setcounter{equation}{0}
	\renewcommand{\theequation}{S\arabic{equation}}%
    \setcounter{page}{1}
}
\clearpage
\onecolumngrid
\beginsupplement

\begin{center}
	\textbf{\large Supplemental Material for\\ ``Exact Thermal Eigenstates of Nonintegrable Spin Chains at Infinite Temperature''}
\end{center}
\vspace{2mm}

\section{Proof of Theorem~\ref{theorem:eigenstate-condition}}
For the proof of Theorem~\ref{theorem:eigenstate-condition}, the following lemma is crucial:
\begin{lemma} \label{lemma:ExpValue_X_Y_ell}
Let $X$ and $Y$ be subsets of $\Lambda$ satisfying $D(X)< N/4$ and $D(Y)< N/4$. For any EAP state $\ket{\mathrm{EAP}}$, we have
\begin{align}
    \braket{\mathrm{EAP}|
        \bigotimes_{j \in X} \hat{\sigma}_{j}^{\mu_j}
        \bigotimes_{k \in Y} \hat{\sigma}_{k}^{\nu_k}
    |\mathrm{EAP}}
    = \begin{cases}
        1 & \text{when $Y=X$ and $\nu_{k}=\mu_{k}$ for all $k \in Y$}\\
        \displaystyle \prod_{j \in X} \omega_{j}^{\mu_{j}} & \text{when $Y=X+N/2$ and $\nu_{k}=\mu_{k-N/2}$ for all $k \in Y$}\\
        0 & \text{otherwise}
    \end{cases}.
    \label{eq:ExpValue_X_Y_ell}
\end{align}
\end{lemma}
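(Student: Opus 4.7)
The plan is to exploit the factorized Bell-pair structure
\begin{align*}
    \ket{\mathrm{EAP}} = \bigotimes_{j=1}^{N/2} \ket{\Phi_{p_j q_j}}_{j, j+N/2},
\end{align*}
reduce the expectation value to a product of independent Bell-pair contributions, and then use the diameter bound $D(X), D(Y) < N/4$ to pin down which configurations of $(X, \vec{\mu}, Y, \vec{\nu})$ give a nonzero result.

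First I would rewrite $\bigotimes_{j \in X} \hat{\sigma}_{j}^{\mu_j} \bigotimes_{k \in Y} \hat{\sigma}_{k}^{\nu_k}$ as a single-site tensor product $\bigotimes_{s=1}^{N} \hat{O}_s$, where each $\hat{O}_s$ collects the Pauli factors acting on site $s$. Because $D(X) < N/4 < N/2$ forces $|X \cap \{j, j+N/2\}| \le 1$ on every Bell pair (and likewise for $Y$), at most one factor from $X$ and one from $Y$ meet at any site, so every $\hat{O}_s$ is proportional to either $\hat{I}$ or a single Pauli matrix. The expectation then factorizes as
\begin{align*}
    \braket{\mathrm{EAP}| \bigotimes_{s} \hat{O}_s |\mathrm{EAP}}
    = \prod_{j=1}^{N/2} \braket{\Phi_{p_j q_j}|\, \hat{O}_j \otimes \hat{O}_{j+N/2} \,|\Phi_{p_j q_j}},
\end{align*}
and the Bell-pair factor $\braket{\Phi_{pq}|\hat{A} \otimes \hat{B}|\Phi_{pq}}$ vanishes unless $\hat{A}$ and $\hat{B}$ are proportional to the same single-qubit Pauli ($\hat{I}$ counted as its own type), evaluating to $1$ when both are $\hat{I}$ and to $\omega_j^\mu$ when both are $\hat{\sigma}^\mu$. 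Hence for a nonzero total, every Bell pair must be touched by $X$ and $Y$ together or by neither, and on each touched pair the Pauli labels of the $X$-site and the $Y$-site must agree.

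For every touched Bell pair $j$, let $d_X(j), d_Y(j) \in \{L, R\}$ record which site of the pair belongs to $X$ (resp.\ $Y$). The main obstacle is to show that the two patterns $d_X(j) = d_Y(j)$ and $d_X(j) \neq d_Y(j)$ cannot coexist across different touched pairs; this is precisely where $D(X), D(Y) < N/4$ is essential. I would argue by contradiction: if $j_1$ satisfies $d_X(j_1) = d_Y(j_1)$ while $j_2$ satisfies $d_X(j_2) \neq d_Y(j_2)$, then after relabeling $X \leftrightarrow Y$ and $L \leftrightarrow R$ as needed, $X$ contains two sites at separation $|j_1 - j_2|$ and $Y$ contains two sites at separation $|j_1 - j_2 - N/2|$; the triangle inequality $|a| + |a - N/2| \ge N/2$ then gives $D(X) + D(Y) \ge N/2$, contradicting the hypothesis.

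It only remains to read off the two surviving cases. If $d_X = d_Y$ uniformly, then $X$ and $Y$ touch identical sites; Pauli matching forces $X = Y$ with $\vec{\nu} = \vec{\mu}$, each Bell pair contributes $1$, and the total is $1$. If $d_X \neq d_Y$ uniformly, then $Y = X + N/2$ with $\nu_k = \mu_{k-N/2}$, each touched pair contributes $\omega_s^{\mu_s}$ for its $X$-element $s$ (using the periodicity $\omega_{j+N/2}^\mu = \omega_j^\mu$ implicit in Eq.~\eqref{eq:sigma-EAP}), and the product over all touched pairs is $\prod_{j \in X} \omega_j^{\mu_j}$. Any other configuration of $(X, \vec{\mu}, Y, \vec{\nu})$ violates the touch-matching or Pauli-label condition and yields $0$, completing the case split in Eq.~\eqref{eq:ExpValue_X_Y_ell}.
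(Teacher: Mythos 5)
Your proof is correct, and it reaches the same conclusion by a somewhat different organization of the key step. The paper does not factorize the expectation value over Bell pairs explicitly; instead it quarters the lattice into $\Lambda_0,\dots,\Lambda_3$, places $X\subset\Lambda_0$ without loss of generality, kills any $Y$ meeting $\Lambda_1\cup\Lambda_3$ via an unpaired Pauli on a Bell pair, and then handles the two surviving cases $Y\subset\Lambda_0$ and $Y\subset\Lambda_2$, the latter by using Eq.~\eqref{eq:sigma-EAP} to shift $Y$ onto $Y-N/2$. In that scheme the "no mixed alignment" conclusion (your $d_X=d_Y$ versus $d_X\neq d_Y$ dichotomy) comes for free from the quartering, whereas you establish it with an explicit triangle-inequality argument $|a|+|a-N/2|\ge N/2$ applied to the diameters. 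Your route is more granular (you compute the per-pair matrix elements $\braket{\Phi_{pq}|\hat{A}\otimes\hat{B}|\Phi_{pq}}$ from scratch rather than invoking Eq.~\eqref{eq:sigma-EAP} to translate operators) and it isolates exactly what the locality hypothesis buys: your argument really only needs $D(X),D(Y)<N/2$ together with $D(X)+D(Y)<N/2$, a slight weakening of the paper's $D(X),D(Y)<N/4$ that is consonant with the Remark following the paper's proof of Theorem~\ref{theorem:eigenstate-condition}. The paper's version is shorter because the WLOG plus quartering collapses your case analysis into a positional one, but both proofs rest on the same two facts: each Bell pair is hit at most once by $X$ and once by $Y$, and a pair carrying an unmatched Pauli annihilates the expectation value.
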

\begin{proof}
We divide the lattice $\Lambda$ into four equally sized parts, $\Lambda_{a}:=\{aN/4+1,aN/4+2,...,(a+1)N/4\}$ $(a=0,1,2,3)$.
Since $D(X)<N/4$, without loss of generality, we can take $X\subset \Lambda_{0}$.
Since $D(Y)<N/4$, $Y$ cannot have any intersection with both $\Lambda_{0}$ and $\Lambda_{2}$,
or with both $\Lambda_{1}$ and $\Lambda_{3}$.

Suppose that $Y\cap\Lambda_{1}\neq \emptyset$ and let $k^*\in Y\cap\Lambda_{1}$. 
Because no Pauli operator acts on its antipodal site $k^*+N/2\in \Lambda_{3}$ in the left hand side of Eq.~(\ref{eq:ExpValue_X_Y_ell}), we have
\begin{align}
    &\braket{\mathrm{EAP}|
        \bigotimes_{j \in X} \hat{\sigma}_{j}^{\mu_j}
        \bigotimes_{k \in Y} \hat{\sigma}_{k}^{\nu_k}
    |\mathrm{EAP}}
    = \braket{\mathrm{EAP}|
        \left(
            \bigotimes_{j \in X} \hat{\sigma}_{j}^{\mu_j}
        \right)
        \left(
            \bigotimes_{k \in Y\setminus \{k^*\}} \hat{\sigma}_{k}^{\nu_k}
        \right)
        \hat{\sigma}_{k^*}^{\nu_{k^*}}
    |\mathrm{EAP}}
    =0.
\end{align}
If $Y\cap\Lambda_{3}\neq \emptyset$, we can obtain the same result in almost the same manner.
Therefore, in the following, we only need to consider the two cases $Y\subset\Lambda_{0}$ and $Y\subset\Lambda_{2}$.

Next we consider the case of $Y\subset\Lambda_{0}$. 
In the left hand side of Eq.~(\ref{eq:ExpValue_X_Y_ell}), at most two Pauli operators act on each site $j\in\Lambda_{0}$, and no Pauli operator acts on its antipodal site $j+N/2\in \Lambda_{2}$.
Therefore, unless $Y=X$ and $\nu_{k}=\mu_{k}$ for all $k\in Y$, 
the left hand side of Eq.~(\ref{eq:ExpValue_X_Y_ell}) becomes zero.
On the other hand, if $Y=X$ and $\nu_{k}=\mu_{k}$ for all $k \in Y$, we obviously have
\begin{align}
    \braket{\mathrm{EAP}|
        \bigotimes_{j \in X} \hat{\sigma}_{j}^{\mu_j}
        \bigotimes_{k \in Y} \hat{\sigma}_{k}^{\nu_k}
    |\mathrm{EAP}}
    = \braket{\mathrm{EAP}|\mathrm{EAP}}
    = 1.
\end{align}

Finally we consider the case of $Y\subset\Lambda_{2}$. 
In the left hand side of Eq.~(\ref{eq:ExpValue_X_Y_ell}), at most one Pauli operator acts on each site $j\in\Lambda_{0}$, and at most one Pauli operator acts on its antipodal site $j+N/2\in \Lambda_{2}$.
Using Eq.~\eqref{eq:sigma-EAP} of the main text, we have
\begin{align}
    \braket{\mathrm{EAP}|
        \bigotimes_{j \in X} \hat{\sigma}_{j}^{\mu_j}
        \bigotimes_{k \in Y} \hat{\sigma}_{k}^{\nu_k}
    |\mathrm{EAP}}
    = \braket{\mathrm{EAP}|
        \bigotimes_{j \in X} \hat{\sigma}_{j}^{\mu_j}
        \bigotimes_{k \in Y} \hat{\sigma}_{k-N/2}^{\nu_k}
    |\mathrm{EAP}}
    \prod_{k \in Y} \omega_k^{\nu_k}.
\end{align}
Applying the arguments of the previous paragraph, we can obtain the following: Unless $Y-N/2=X$ and $\nu_{k}=\mu_{k-N/2}$ for all $k\in Y$, 
the left hand side of Eq.~(\ref{eq:ExpValue_X_Y_ell}) becomes zero.
On the other hand, if $Y-N/2=X$ and $\nu_{k}=\mu_{k-N/2}$ for all $k\in Y$, we have
\begin{align}
    \braket{\mathrm{EAP}|
        \bigotimes_{j \in X} \hat{\sigma}_{j}^{\mu_j}
        \bigotimes_{k \in Y} \hat{\sigma}_{k}^{\nu_k}
    |\mathrm{EAP}}
    = \braket{\mathrm{EAP}|\mathrm{EAP}}\prod_{k \in Y} \omega_k^{\nu_k}
    = \prod_{j \in X} \omega_j^{\mu_j}.
\end{align}
\end{proof}

\begin{proof}[Proof of Theorem~\ref{theorem:eigenstate-condition}]
In order to show that three statements are equivalent, we need to show that 
(\ref{statement:eigenstate-condition_1}) $\Rightarrow$ (\ref{statement:eigenstate-condition_2}), 
(\ref{statement:eigenstate-condition_2}) $\Rightarrow$ (\ref{statement:eigenstate-condition_3}), 
and (\ref{statement:eigenstate-condition_3}) $\Rightarrow$ (\ref{statement:eigenstate-condition_1}).

\textit{(\ref{statement:eigenstate-condition_1}) $\Rightarrow$ (\ref{statement:eigenstate-condition_2}):}
We assume that an EAP state $\ket{\mathrm{EAP}}$ is an eigenstate of $\hat{H}$ with eigenvalue $\lambda$, i.e.,
\begin{align}
    \hat{H} \ket{\mathrm{EAP}} = \lambda \ket{\mathrm{EAP}}.
\end{align}
Then, taking the inner product with $\ket{\mathrm{EAP}}$ and substituting Eq.~\eqref{eq:Hamiltonian}, we have
\begin{align}
    \lambda
    = \braket{\mathrm{EAP}|\hat{H}|\mathrm{EAP}}
    = \sum_{X(\subset\Lambda)} \sum_{\vec{\mu}\in\{x,y,z\}^{X}}
    J_{X}^{\vec{\mu}}
    \braket{\mathrm{EAP}|\bigotimes_{j \in X} \hat{\sigma}_{j}^{\mu_{j}}|\mathrm{EAP}}
    = 0,
\end{align}
where the last equality follows from Lemma~\ref{lemma:ExpValue_X_Y_ell}.

\textit{(\ref{statement:eigenstate-condition_2}) $\Rightarrow$ (\ref{statement:eigenstate-condition_3}):}
We assume that an EAP state $\ket{\mathrm{EAP}}$ is an eigenstate of $\hat{H}$ with eigenvalue $0$, i.e.,
\begin{align}
    \hat{H} \ket{\mathrm{EAP}} = 0.
\end{align}
Then, substituting Eq.~\eqref{eq:Hamiltonian}, we have
\begin{align}
    \sum_{Y(\subset\Lambda)} \sum_{\vec{\nu}\in\{x,y,z\}^{Y}}
    J_{Y}^{\vec{\nu}} \bigotimes_{k \in Y} \hat{\sigma}_{k}^{\nu_{k}} \ket{\mathrm{EAP}}
    = 0.
\end{align}
Taking the inner product with $\displaystyle \bra{\mathrm{EAP}} \bigotimes_{j \in X} \hat{\sigma}_{j}^{\mu_{j}}$, we get
\begin{align}
    J_{X}^{\vec{\mu}} + J_{Y}^{\vec{\nu}} \prod_{j \in X} \omega_{j}^{\mu_{j}} = 0,
\end{align}
where $Y=X+N/2$ and $\nu_{k}=\mu_{k-N/2}$ for $k \in Y$.
Since $(\omega_j^{\mu_j})^{-1} = \omega_j^{\mu_j}$,
this implies Eq.~\eqref{eq:eigenstate-condition}.

\textit{(\ref{statement:eigenstate-condition_3}) $\Rightarrow$ (\ref{statement:eigenstate-condition_1}):}
We assume that Eq.~\eqref{eq:eigenstate-condition} holds.
Using Eqs.~\eqref{eq:sigma-EAP} and \eqref{eq:Hamiltonian}, we have
\begin{align}
    \hat{H} \ket{\mathrm{EAP}}
    &= \sum_{X(\subset\Lambda)} \sum_{\vec{\mu}\in\{x,y,z\}^{X}}
    J_{X}^{\vec{\mu}} \bigotimes_{j \in X} \hat{\sigma}_{j}^{\mu_{j}} \ket{\mathrm{EAP}}\\
    &= \sum_{X(\subset\Lambda)} \sum_{\vec{\mu}\in\{x,y,z\}^{X}}
    J_{X}^{\vec{\mu}} \prod_{j \in X} \omega_{j}^{\mu_{j}} \bigotimes_{j \in X} \hat{\sigma}_{j+N/2}^{\mu_{j}} \ket{\mathrm{EAP}}.
\end{align}
Then, substituting Eq.~\eqref{eq:eigenstate-condition}, we obtain
\begin{align}
    \hat{H} \ket{\mathrm{EAP}}
    &= - \sum_{X(\subset\Lambda)} \sum_{\vec{\mu}\in\{x,y,z\}^{X}}
    J_{X+N/2}^{\vec{\mu}} \bigotimes_{j \in X} \hat{\sigma}_{j+N/2}^{\mu_{j}} \ket{\mathrm{EAP}}
    = - \hat{H} \ket{\mathrm{EAP}},
\end{align}
which is equivalent to statement~(\ref{statement:eigenstate-condition_2}). It obviously implies statement~(\ref{statement:eigenstate-condition_1}).
\end{proof}
Remark: The above proof also shows that, if we only need to show (\ref{statement:eigenstate-condition_3}) $\Rightarrow$ (\ref{statement:eigenstate-condition_2}), we can relax the condition on the locality of interactions in the main text, 
``$J_{X}^{\vec{\mu}}=0$ for all subset $X$ with $D(X)\ge N/4$'' to ``$J_{X}^{\vec{\mu}}=0$ for all subset $X$ with $D(X)\ge N/2$.''

\section{List of translation-invariant and nearest-neighbor-interacting Hamiltonians (Theorem~\ref{theorem:model})}
We provide a list of translation-invariant (for single-site translations) and nearest-neighbor-interacting Hamiltonians having an EAP state as an eigenstate. First, we exclude the case of free spins, as it is trivial. Next, for some cases where only one of $J^{\mu\nu}$ is non-zero, we find that $2^{N/2}$ EAP states are degenerate, so we also exclude such cases. Then pairs of the EAP state and the Hamiltonian are limited to the following five types (and their equivalents obtained by appropriate permutations of directions of the Pauli matrices). It can be readily confirmed through direct calculations that the pairs of the EAP state and the Hamiltonian listed below satisfy Eq.~\eqref{eq:eigenstate-condition} or \eqref{eq:Reduced_eigenstate-condition}.

\subsection{Case where the EAP state is invariant under $1$-site translation}
The EAP state $\ket{1;00}$ is an eigenstate of the Hamiltonian defined by
\begin{align}
    \hat{H} = \sum_{j=1}^{N} \left(
        J^{xy} \hat{\sigma}_{j}^{x} \hat{\sigma}_{j+1}^{y}
        + J^{yx} \hat{\sigma}_{j}^{y} \hat{\sigma}_{j+1}^{x}
        + J^{yz} \hat{\sigma}_{j}^{y} \hat{\sigma}_{j+1}^{z}
        + J^{zy} \hat{\sigma}_{j}^{z} \hat{\sigma}_{j+1}^{y}
        + h^{y} \hat{\sigma}_{j}^{y}
    \right),
\end{align}
for arbitrary values of $J^{xy},J^{yx},J^{yz},J^{zy},h^{y}$.
This model is Model~\ref{model:period1} in Theorem~\ref{theorem:model}.

\subsection{Case where the EAP state is invariant under $2$-site translation}
Suppose that $N/2$ is a multiple of two.

The EAP state $\ket{2;01,10}$ is an eigenstate of the Hamiltonian defined by
\begin{align}
    \hat{H} = \sum_{j=1}^{N} \left(
        J^{xx} \hat{\sigma}_{j}^{x} \hat{\sigma}_{j+1}^{x}
        + J^{zz} \hat{\sigma}_{j}^{z} \hat{\sigma}_{j+1}^{z}
    \right),
\end{align}
for arbitrary values of $J^{xx},J^{zz}$.
This model can be mapped onto free fermions via the Jordan-Wigner transformation.

In addition, the EAP state $\ket{2;10,11}$ is an eigenstate of the Hamiltonian defined by
\begin{align}
    \hat{H} = \sum_{j=1}^{N} \left(
        J^{xx} \hat{\sigma}_{j}^{x} \hat{\sigma}_{j+1}^{x}
        + J^{yy} \hat{\sigma}_{j}^{y} \hat{\sigma}_{j+1}^{y}
        + J^{xy} \hat{\sigma}_{j}^{x} \hat{\sigma}_{j+1}^{y}
        + J^{yx} \hat{\sigma}_{j}^{y} \hat{\sigma}_{j+1}^{x}
        + h^{z} \hat{\sigma}_{j}^{z}
    \right),
    \label{eq:H_period2_11}
\end{align}
for arbitrary values of $J^{xx},J^{yy},J^{xy},J^{yx}, h^{z}$.
This model can also be mapped onto free fermions via the Jordan-Wigner transformation.

\subsection{Case where the EAP state is invariant under $3$-site translation}
Suppose that $N/2$ is a multiple of three.

The EAP state $\ket{3;01,10,11}$ is an eigenstate of the Hamiltonian defined by
\begin{align}
    \hat{H} = \sum_{j=1}^{N} \left(
        J^{xy} \hat{\sigma}_{j}^{x} \hat{\sigma}_{j+1}^{y}
        + J^{yz} \hat{\sigma}_{j}^{y} \hat{\sigma}_{j+1}^{z}
    \right),
\end{align}
for arbitrary values of $J^{xy},J^{yz}$.
This model is Model~\ref{model:period3} in Theorem~\ref{theorem:model}. As shown in Theorem~\ref{theorem:nonintegrability}, this model is nonintegrable.

\subsection{Case where the EAP state is invariant under $4$-site translation}
Suppose that $N/2$ is a multiple of four.

The EAP state $\ket{4;00,01,10,11}$ is an eigenstate of the Hamiltonian defined by
\begin{align}
    \hat{H} = \sum_{j=1}^{N} \left(
        J^{xx} \hat{\sigma}_{j}^{x} \hat{\sigma}_{j+1}^{x}
        + J^{yz} \hat{\sigma}_{j}^{y} \hat{\sigma}_{j+1}^{z}
    \right)
\end{align}
for arbitrary values of $J^{xx},J^{yz}$.
This model is Model~\ref{model:period4} in Theorem~\ref{theorem:model}. As shown in Theorem~\ref{theorem:nonintegrability}, this model is nonintegrable.

\section{Proof of Theorem~\ref{theorem:nonintegrability}}

First we define a $k$-local conserved quantity (which is the same as one given in Ref.~\cite{Chiba2024}) by the operator $\hat{Q}$ that commutes with the Hamiltonian
\begin{align}
    [\hat{Q},\hat{H}]=0
    \label{eq:Commutation_Q_H}
\end{align}
and can be written as
\begin{align}
    \hat{Q} = \sum_{\ell=1}^{k}\sum_{\boldsymbol{A}^{\ell}}q_{\boldsymbol{A}^{\ell}_{j}}^{(\ell)}\hat{\boldsymbol{A}}^{\ell}_{j} + q_{I}^{(0)}\hat{I}.
    \label{eq:k-local}
\end{align}
Here $\boldsymbol{A}^{\ell}$ represents a sequence of symbols, $A^{1},A^{2},...,A^{\ell}$ satisfying
\begin{align}
    A^{1},A^{\ell}\in\{X,Y,Z\}
    \label{eq:Nonint_RangeA1}\\
    A^{2},...,A^{\ell-1}\in\{X,Y,Z,I\}
    \label{eq:Nonint_RangeA2}
\end{align}
and $\hat{\boldsymbol{A}}^{\ell}_{j}$ represents the product of the corresponding Pauli operators on the sites $\{j,j+1,...,j+\ell-1\}$:
\begin{align}
    \hat{\boldsymbol{A}}^{\ell}_{j}=\hat{A}^{1}_{j}\hat{A}^{2}_{j+1}...\hat{A}^{\ell}_{j+\ell-1}.
\end{align}
In Eq.~(\ref{eq:k-local}), $q_{\boldsymbol{A}^{\ell}_{j}}^{(\ell)}\in\mathbb{R}$ are the expansion coefficients. [We add the superscript $(\ell)$ in order to emphasize its value.] The crucial point of Eq.~(\ref{eq:k-local}) is that $\hat{Q}$ does not include $\hat{\boldsymbol{A}}^{\ell}_{j}$ with $\ell>k$.

Now we give the precise expression of Theorem~\ref{theorem:nonintegrability} of the main text, which is represented by the following two theorems:
\begin{theoremNonint} \label{theorem:Nonint_Period3}
    In Model~\ref{model:period3} with $J^{xy},J^{yz}\neq 0$, and for $k\le N/2$, 
    there is no $k$-local conserved quantity that is linearly independent of the Hamiltonian and the identity.
\end{theoremNonint}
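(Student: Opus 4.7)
The plan is to follow the Shiraishi-style strategy for proving absence of local conserved quantities. Expand $\hat{Q}$ as in Eq.~\eqref{eq:k-local} and translate the commutation relation $[\hat{Q},\hat{H}_{2}]=0$ into a homogeneous linear system on the coefficients $q^{(\ell)}_{\boldsymbol{A}^{\ell}_{j}}$. Since $\hat{H}_{2}$ consists only of two-body terms $\sigma^{x}\sigma^{y}$ and $\sigma^{y}\sigma^{z}$, commuting $\hat{\boldsymbol{A}}^{\ell}_{j}$ with $\hat{H}_{2}$ produces Pauli strings whose support lies on an interval of length ranging between $\ell-1$ and $\ell+1$, with the maximal length $\ell+1$ arising precisely when the Hamiltonian term sticks out by one site beyond an endpoint of $\hat{\boldsymbol{A}}^{\ell}_{j}$.

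First, I would isolate the range-$(k+1)$ contributions to $[\hat{Q},\hat{H}_{2}]$. These can only come from the range-$k$ part of $\hat{Q}$ and must vanish independently. Because the two bond terms are $\sigma^{x}\sigma^{y}$ and $\sigma^{y}\sigma^{z}$, the constraints on each endpoint symbol $A^{1}$ and $A^{k}$ split into a handful of simple algebraic equations (e.g.\ an endpoint $X$ is taken to $Z$ by one bond and to $I$ by the other). The hypothesis $J^{xy},J^{yz}\neq 0$ is used precisely here to prevent either relation from collapsing trivially, so the top-layer coefficients are forced into a highly restricted shape (essentially, an endpoint must be $Y$ unless paired with a specific interior symbol).

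Next, propagate downward: at each layer $\ell = k, k-1,\ldots,2$, the range-$\ell$ Pauli strings in $[\hat{Q},\hat{H}_{2}]$ receive contributions from both the range-$\ell$ and the range-$(\ell-1)$ coefficients of $\hat{Q}$, and vanishing of these gives linear relations that determine (up to the already constrained form above) the range-$(\ell-1)$ coefficients in terms of the range-$\ell$ ones. Induction in $k-\ell$ shows that the surviving coefficient space contracts at every step. The range-$2$ and range-$1$ equations then pin down the only admissible combinations to $\alpha\hat{H}_{2}+\beta\hat{I}$. The bound $k\le N/2$ ensures that this layer-by-layer analysis is not spoiled by periodic wrap-around, so that coefficients at different layers are truly independent degrees of freedom.

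I expect the main obstacle to be the case analysis at the intermediate layers: many Pauli strings can in principle appear, and one must systematically exclude each by exploiting the specific commutation pattern of $\{X,Y,Z\}$ against the two bond types. The structural asymmetry between the allowed bonds $\{xy,yz\}$ and forbidden ones (e.g.\ no $xz$ or $zx$) is what makes the elimination complete; an honest proof therefore reduces to a careful but finite bookkeeping exercise, organized as a descending induction on $\ell$, with the top-layer analysis supplying the base case and the $\ell=1,2$ layers giving the final identification with $\hat{H}_{2}$ and $\hat{I}$.
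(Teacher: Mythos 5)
Your setup is the same as the paper's: expand $\hat{Q}$ in Pauli strings, use $[\hat{Q},\hat{H}_{2}]=0$, and begin with the $(k+1)$-local outputs, which can only be produced by a bond term sticking out past an endpoint of a $k$-local input. That first stage is carried out in the paper (Lemmas~\ref{lemma:Nonint_period3_Z...(k)=0} and \ref{lemma:Nonint_period3_XX...(k)=0}, Proposition~\ref{proposition:Nonint_Period3_Step1}) and it does exactly what you describe: it kills most top-layer coefficients and leaves a one-parameter family, all proportional to $q^{(k)}_{X_{1}(Z)^{k-2}Y}$.

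The gap is in your second stage. You say that the $\ell$-local output equations ``determine the range-$(\ell-1)$ coefficients in terms of the range-$\ell$ ones'' and that the coefficient space therefore ``contracts at every step.'' That inference does not follow: relations that merely express lower-locality coefficients in terms of the surviving top-layer coefficient leave that coefficient as a free parameter, and you would end up with a candidate nontrivial $k$-local conserved quantity for every $k$ rather than a proof that none exists. (This is precisely the situation in integrable models, where such chains of relations are consistent and higher conserved quantities do survive.) The essential missing idea is that the system of $k$-local output equations is \emph{overdetermined}: the paper derives a closed chain of relations, Eqs.~(\ref{eq:Nonint_period3_Prop2_1})--(\ref{eq:Nonint_period3_Prop2_4}), each of whose right-hand sides equals $J^{yz}q^{(k)}_{X_{1}(Z)^{k-2}Y}$ (or twice it), while the left-hand sides telescope to zero when summed over appropriately chosen sites. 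This yields $(k-1)J^{yz}q^{(k)}_{X_{1}(Z)^{k-2}Y}=0$, hence $q^{(k)}=0$, and only then does descending induction on $k$ reduce everything to the $2$-local case, where a short direct computation identifies the span of $\hat{H}_{2}$ and $\hat{I}$. Without an argument of this kind --- exhibiting a consistency condition that forces the surviving top-layer coefficient to vanish --- the proof does not close.
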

\begin{theoremNonint} \label{theorem:Nonint_Period4}
    In Model~\ref{model:period4} with $J^{xx},J^{yz}\neq 0$, and for $k\le N/2$, 
    there is no $k$-local conserved quantity that is linearly independent of the Hamiltonian and the identity.
\end{theoremNonint}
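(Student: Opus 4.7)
The plan is to follow the Shiraishi-type strategy for excluding local conserved quantities, in the form used in Ref.~\cite{Chiba2024}. Suppose, toward contradiction, that a $k$-local $\hat{Q}$ expanded as in Eq.~\eqref{eq:k-local} commutes with $\hat{H}_3$ and is not a linear combination of $\hat{H}_3$ and $\hat{I}$, with $k \le N/2$. I would expand the commutator equation $[\hat{Q},\hat{H}_3]=0$ in the Pauli string basis and organize the resulting operator identity by the length of the output strings. The argument then descends from the longest output down to the shortest, extracting at each level a homogeneous system of linear equations on the coefficients $q^{(\ell)}_{\boldsymbol{A}^\ell_j}$.

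The base step is the analysis of length-$(k+1)$ outputs. Only length-$k$ operators in $\hat{Q}$ can generate such strings, and only when a two-body term of $\hat{H}_3$ overlaps a length-$k$ Pauli string of $\hat{Q}$ in exactly one of its two endpoints; two-site overlaps and non-overlaps give outputs of length at most $k$. Writing the vanishing of the length-$(k+1)$ coefficients yields a local recursion that links $q^{(k)}_{\boldsymbol{A}^k_j}$ to $q^{(k)}_{\boldsymbol{A}^k_{j\pm 1}}$ with the endpoint symbols rotated by the Pauli commutators associated with the $\sigma^x\sigma^x$ and $\sigma^y\sigma^z$ channels. Because both $J^{xx}$ and $J^{yz}$ are nonzero, the two channels act simultaneously on the boundary letters, and propagating the recursion along the ring yields a closed system whose only solution compatible with $k \le N/2$ is $q^{(k)} \equiv 0$.

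Next I would induct downward in $\ell$: assuming $q^{(\ell')} = 0$ for all $\ell' > \ell$, the length-$(\ell+1)$ outputs in $[\hat{Q},\hat{H}_3]$ depend only on the surviving $q^{(\ell)}$, and the same recursion-and-wraparound argument forces $q^{(\ell)} \equiv 0$. Iterating this descent down to $\ell = 3$ removes every term of range three or more from $\hat{Q}$. The residual equations at $\ell = 2, 1, 0$ are then handled by direct matching: a straightforward inspection shows that the surviving $q^{(2)}$, $q^{(1)}$, and $q^{(0)}$ can satisfy $[\hat{Q},\hat{H}_3]=0$ only when $\hat{Q}$ is a linear combination of $\hat{H}_3$ and $\hat{I}$, contradicting the starting assumption. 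The corresponding statement for Model~\ref{model:period3} follows from the same template, with $\sigma^x\sigma^y$ replacing $\sigma^x\sigma^x$; the two interaction channels in $\hat{H}_2$ share the letter $\sigma^y$ at one endpoint, which slightly alters the bookkeeping but not the structure of the argument.

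The principal obstacle will be the base step, which must be verified uniformly over every admissible assignment of the interior symbols $A^2,\ldots,A^{k-1} \in \{X,Y,Z,I\}$. For some assignments the two-endpoint recursions partially decouple, and one must check that no residual translational mode survives the wraparound on the ring. This is precisely where the hypothesis $k \le N/2$ is used: it guarantees that the two endpoints of a length-$k$ Pauli string in $\hat{Q}$ never fall within the range of a single Hamiltonian term, so no accidental self-cancellation between an endpoint contribution and its periodic image can rescue a nonzero solution. Once the base step is in hand, the inductive descent and the final $\ell \le 2$ matching are essentially mechanical.
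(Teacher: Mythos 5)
Your overall framework (expand $[\hat{Q},\hat{H}_3]=0$ in the Pauli-string basis, organize by output length, descend in locality, finish with a direct check at $\ell\le 2$) matches the paper's strategy, but your base step contains a genuine gap that the paper's proof is specifically built to repair. You claim that the length-$(k+1)$ output equations alone, propagated along the chain, force $q^{(k)}_{\boldsymbol{A}^k_j}\equiv 0$. This is false: those equations form a homogeneous linear system with a nontrivial kernel. In the paper's Proposition~\ref{proposition:Nonint_Period4_Step1}, the $(k+1)$-local analysis only eliminates the coefficients whose strings are \emph{not} ``doubling products'' built from the $\hat{\sigma}^x\hat{\sigma}^x$ and $\hat{\sigma}^y\hat{\sigma}^z$ channels; the doubling-product coefficients survive as a consistent one-parameter family, all proportional to $q^{(k)}_{Y_1(X)^{k-2}Z}$ with ratios fixed by products of $J^{xx}$ and $J^{yz}$. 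There is no wraparound contradiction for $k\le N/2$ because the recursion never closes into a cycle; the role of $k\le N/2$ is only to keep the endpoint analysis clean, not to kill this family. (Indeed, for an integrable model the same family would extend to an actual conserved quantity, so no argument at the $(k+1)$-output level can distinguish the two cases.)

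The missing idea is the paper's second part: one must examine \emph{$k$-local} outputs, where contributions from the surviving $k$-local doubling-product inputs mix with contributions from $(k-1)$-local inputs. This produces a chain of inhomogeneous relations [Eqs.~(\ref{eq:Nonint_period4_Prop2_1})--(\ref{eq:Nonint_period4_Prop2_5})] whose left-hand sides telescope when summed over the chain, leaving
\begin{align}
0=-(k-1)\frac{(J^{xx})^2}{J^{yz}}\,q^{(k)}_{Y_1(X)^{k-2}Z},
\end{align}
and hence $q^{(k)}_{Y_1(X)^{k-2}Z}=0$ since $k\ge 3$ and $J^{xx},J^{yz}\neq 0$. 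Only after this does $q^{(k)}\equiv 0$ follow, and the downward induction you describe then requires repeating \emph{both} steps at each level. Without the $k$-local-output telescoping argument your proof does not close; as written it would ``prove'' nonintegrability of integrable chains as well, which signals that a step must be unsound.
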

In the remaining of this section, we prove these theorems by adapting the theoretical approach to prove the absence of local conserved quantities, which was introduced by N.~Shiraishi~\cite{Shiraishi2019}. There are only a few examples of such proofs~\cite{Shiraishi2019,Chiba2024,Park2024}. This approach starts from solving Eq.~(\ref{eq:Commutation_Q_H}) with respect to the coefficients with largest locality, $q_{\boldsymbol{A}^{k}_{j}}^{(k)}$, and showing that $q_{\boldsymbol{A}^{k}_{j}}^{(k)}=0$. When solving Eq.~(\ref{eq:Commutation_Q_H}), we need to calculate many commutators such as
\begin{align}
    &[q_{X_{j}IX}^{(3)}\hat{X}_{j}\hat{I}_{j+1}\hat{X}_{j+2},\sum_{j^{\prime}}J^{yz}\hat{Y}_{j^{\prime}}\hat{Z}_{j^{\prime}+1}] \nonumber\\
    &=q_{X_{j}IX}^{(3)}J^{yz}
    [\hat{X}_{j}\hat{I}_{j+1}\hat{X}_{j+2},
    \hat{Y}_{j-1}\hat{Z}_{j}+\hat{Y}_{j}\hat{Z}_{j+1}+\hat{Y}_{j+1}\hat{Z}_{j+2}+\hat{Y}_{j+2}\hat{Z}_{j+3}]\\
    &=2i q_{X_{j}IX}^{(3)}J^{yz}\bigl(
        -\hat{Y}_{j-1}\hat{Y}_{j}\hat{I}_{j+1}\hat{X}_{j+2}
        +\hat{Z}_{j}\hat{Z}_{j+1}\hat{X}_{j+2}
        -\hat{X}_{j}\hat{Y}_{j+1}\hat{Y}_{j+2}
        +\hat{X}_{j}\hat{I}_{j+1}\hat{Z}_{j+2}\hat{Z}_{j+3}
    \bigr).
    \label{eq:Nonint_Commutator_Example}
\end{align}
For simplicity of notation, we write $q_{X_{j}IX}^{(3)}$ in place of $q_{X_{j}I_{j+1}X_{j+2}}^{(3)}$. In order to express such calculations efficiently, we use the following diagrammatic notation:
\begin{align}
    \begin{array}{rlllllll}
         &       &X&I&X\\
         &Y      &Z& & \\ \hline
        -&Y_{j-1}&Y&I&X
    \end{array}
    \quad
    \begin{array}{rlllllll}
         &X    &I&X\\
         &Y    &Z& \\ \hline
         &Z_{j}&Z&X
    \end{array}
    \quad
    \begin{array}{rlllllll}
         &X    &I&X\\
         &     &Y&Z\\ \hline
        -&X_{j}&Y&Y
    \end{array}
    \quad
    \begin{array}{rlllllll}
         &X    &I&X& \\
         &     & &Y&Z\\ \hline
         &X_{j}&I&Z&Z
    \end{array}.
\end{align}
These four diagrams correspond to the four terms in Eq.~(\ref{eq:Nonint_Commutator_Example}). In each diagram, the first row represents the term from $\hat{Q}$, the second row the term from $\hat{H}$, and the third row the result of the commutator.
For simplicity of notation, we add the site index only for the leftmost operators in the third row. In addition, we call the first row of the diagram ``$\ell$-local input'', and the third row of the diagram ``$\ell$-local output'', when they consist of $X$, $Y$, $Z$, and $I$ on $\ell$ consecutive sites.

\subsection{Proof of Theorem~\ref{theorem:Nonint_Period3}}
This subsection proves Theorem~\ref{theorem:Nonint_Period3}.
Throughout this subsection, we consider Model~\ref{model:period3} and assume $J^{xy},J^{yz}\neq 0$ and $k\le N/2$.
(The reason for the assumption $k\le N/2$ is the same as one discussed in Sec.~VI~A of Ref.~\cite{Chiba2024}.)
\begin{proof}
The proof of Theorem~\ref{theorem:Nonint_Period3} is divided into three parts.
The first part investigates the coefficients with largest locality, $q_{\boldsymbol{A}^{k}_{j}}^{(k)}$. For the coefficients of the form $q_{Z_{j}A^{2}...A^{k-1}X}^{(k)}$, we have
\begin{align}
\begin{array}{rlllllll}
 &Z      &A^{2}&...&A^{k-1}&X& \\
 & & &...& &Y&Z \\ \hline
 &Z_{j}&A^{2}&...&A^{k-1}&Z&Z
\end{array}.
\label{eq:Nonint_Period3_Z...ZZ(k+1)}
\end{align}
Because a $(k+1)$-local output can be obtained only when the Hamiltonian term is applied to the edges of $k$-local inputs, there are at most two $k$-local inputs that contribute to one $(k+1)$-local output.
However, since the left end of the output of Eq.~(\ref{eq:Nonint_Period3_Z...ZZ(k+1)}) is $Z$, the other contribution does not exist.
Furthermore, from Eq.~(\ref{eq:Commutation_Q_H}), the sum of all contribution to the output $Z_{j}A^{2}...A^{k-1}ZZ$ must vanish, and therefore we have
\begin{align}
    J^{yz}q_{Z_{j}A^{2}...A^{k-1}X}^{(k)}=0.
\end{align}
In a similar manner, we can obtain the following lemma:
\begin{lemma}\label{lemma:Nonint_period3_Z...(k)=0}
    For $2\le k\le N/2$, the solution of Eq.~(\ref{eq:Commutation_Q_H}) satisfies
    \begin{align}
        q_{Z_{j}A^{2}...A^{k-1}A^{k}}^{(k)}&=0\\
        q_{A^{1}_{j}A^{2}...A^{k-1}X}^{(k)}&=0
    \end{align}
    for all $j\in\Lambda$.
    Here the symbols $A^{1},A^{2},...,A^{k-1},A^{k}$ that are not specified can be any symbols satisfying Eqs.~(\ref{eq:Nonint_RangeA1}) and (\ref{eq:Nonint_RangeA2}).
\end{lemma}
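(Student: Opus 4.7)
The plan is to exploit the asymmetry of the Hamiltonian in Model~\ref{model:period3}: in both terms $\hat{\sigma}_j^x\hat{\sigma}_{j+1}^y$ and $\hat{\sigma}_j^y\hat{\sigma}_{j+1}^z$, the left Pauli factor lies in $\{X,Y\}$ and the right Pauli factor lies in $\{Y,Z\}$, so $Z$ never appears as a left factor and $X$ never appears as a right factor. Combined with the observation already used for Eq.~(\ref{eq:Nonint_Period3_Z...ZZ(k+1)}) that a $(k+1)$-local term in $[\hat{Q},\hat{H}]$ can arise only from a single edge-overlap of a $k$-local input with a two-local Hamiltonian term, this immediately says that a $(k+1)$-local output whose leftmost symbol is $Z$ can only be produced by right-edge extensions, and a $(k+1)$-local output whose rightmost symbol is $X$ can only be produced by left-edge extensions.

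For the first identity $q^{(k)}_{Z_j A^2\cdots A^{k-1} A^k}=0$, I would right-extend the input $Z_j A^2 \cdots A^{k-1} A^k$ by a suitable Hamiltonian term to obtain a $(k+1)$-local output whose leftmost symbol is $Z$: specifically, $Z_j A^2 \cdots A^{k-1} Z Z$ (via $J^{yz}$) when $A^k=X$, as in Eq.~(\ref{eq:Nonint_Period3_Z...ZZ(k+1)}); $Z_j A^2 \cdots A^{k-1} Z Y$ (via $J^{xy}$) when $A^k=Y$; and $Z_j A^2 \cdots A^{k-1} Y Y$ (via $J^{xy}$) when $A^k=Z$. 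The leftmost $Z$ rules out any left-extension contribution, and among right-extensions the choice of Hamiltonian together with the two rightmost output symbols uniquely fixes $A^k$, while the leftmost $Z$ and the middle symbols $A^2,\dots,A^{k-1}$ are transmitted directly from input to output. Hence exactly one input contributes to each such output, and setting its coefficient to zero yields the claim.

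For the second identity $q^{(k)}_{A^1_j A^2 \cdots A^{k-1} X}=0$, I would mirror the argument using a $(k+1)$-local output whose rightmost symbol is $X$: by the asymmetry only left-edge extensions of $k$-local inputs contribute. A short case analysis of the four left-edge commutators shows that the leftmost two output symbols $(B^0,B^1)$ uniquely determine both the Hamiltonian used and the leftmost input symbol $A^1$: $(X,Z)$ picks $A^1=X$ via $J^{xy}$, $(Y,X)$ picks $A^1=Y$ via $J^{yz}$, and $(X,X)$ picks $A^1=Z$ via $J^{xy}$ (a case already subsumed by the first identity). So for each $A^1$ there is a $(k+1)$-local output to which a unique input contributes, and the claim follows.

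The analysis is routine case-checking once the asymmetry observation is in hand; the main thing to be careful about is the range of $k$. For $k\le N/2$ (so $k+1\le N$) the $(k+1)$-site window supporting the output and the two $k$-site windows supporting the right- and left-extension inputs are all genuinely distinct subsets of the periodic lattice, so the corresponding Pauli strings are linearly independent elements of $\hat{Q}$. This is exactly what the hypothesis $k\le N/2$ secures, and is the only place where periodic wrap-around could in principle identify different input strings.
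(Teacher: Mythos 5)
Your proof is correct and follows essentially the same route as the paper: the paper displays only the single diagram~(\ref{eq:Nonint_Period3_Z...ZZ(k+1)}) for the case $A^{k}=X$ and asserts the rest ``in a similar manner,'' while you make explicit the underlying mechanism (left Hamiltonian factors lie in $\{X,Y\}$ and right factors in $\{Y,Z\}$, so a $(k+1)$-local output beginning with $Z$ or ending with $X$ admits only one edge-extension) and carry out the same case analysis on $A^{k}$ and on $(B^{0},B^{1})$. No gaps; the division by the nonzero couplings $J^{xy},J^{yz}$ is implicit but harmless.
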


Next we examine the coefficients of the form $q_{X_{j}A^{2}...A^{k-1}Y}^{(k)}$.
They have the following contributions
\begin{align}
    \begin{array}{rlllllll}
         &X&A^{2}&A^{3}&...&A^{k-1}&Y& \\
         & & & &...& &X&Y \\ \hline
        -&X_{j}&A^{2}&A^{3}&...&A^{k-1}&Z&Y
    \end{array}.
    \label{eq:Nonint_Period3_X...ZY(k+1)}
\end{align}
When $A^{2}=I,Y$, this $(k+1)$-local output does not have the other contribution.
When $A^{2}=X$, it has the other contribution,
\begin{align}
    \begin{array}{rlllllll}
         & &Z&A^{3}&...&A^{k-1}&Z&Y \\
         &X&Y& &...& & &  \\ \hline
        -&X_{j}&X&A^{3}&...&A^{k-1}&Z&Y
    \end{array},
\end{align}
which however vanishes from Lemma~\ref{lemma:Nonint_period3_Z...(k)=0}.
In a similar manner, we can obtain the following lemma:
\begin{lemma}\label{lemma:Nonint_period3_XX...(k)=0}
    For $3\le k\le N/2$, the solution of Eq.~(\ref{eq:Commutation_Q_H}) satisfies
    \begin{align}
        q_{X_{j}A^{2}...A^{k-1}A^{k}}^{(k)}&=0\quad \text{for }A^{2}=I,Y,X\\
        q_{Y_{j}A^{2}...A^{k-1}A^{k}}^{(k)}&=0\quad \text{for }A^{2}=I,Z\\
        q_{A^{1}_{j}A^{2}...A^{k-1}Z}^{(k)}&=0\quad \text{for }A^{k-1}=I,Y,Z\\
        q_{A^{1}_{j}A^{2}...A^{k-1}Y}^{(k)}&=0\quad \text{for }A^{k-1}=I,X
    \end{align}
    for all $j\in\Lambda$.
    Here the symbols $A^{1},A^{2},...,A^{k-1},A^{k}$ that are not specified can be any symbols satisfying Eqs.~(\ref{eq:Nonint_RangeA1}) and (\ref{eq:Nonint_RangeA2}).
\end{lemma}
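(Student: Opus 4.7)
The plan is to repeat, eight times over, the mechanism illustrated in the excerpt for $q_{X_j A^2 \cdots A^{k-1} Y}^{(k)}$. Throughout I rely on the structural observation already used in Lemma~\ref{lemma:Nonint_period3_Z...(k)=0}: because the Hamiltonian of Model~\ref{model:period3} consists of only the two bonds $XY$ and $YZ$, a Hamiltonian term acting at the left edge of a $k$-local input (extending one site leftward) produces $X$ or $Y$ (never $Z$) at the leftmost site of the resulting $(k+1)$-local output, and symmetrically a right-edge action produces $Y$ or $Z$ (never $X$) at the rightmost site. Hence outputs whose leftmost symbol is $Z$, or whose rightmost symbol is $X$, receive contributions from only one side, which is precisely why the preceding lemma could eliminate $q_{Z_j\cdots}^{(k)}$ and $q_{\cdots X}^{(k)}$.

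For each target coefficient $q^{(k)}$ in Lemma~\ref{lemma:Nonint_period3_XX...(k)=0}, I will choose a diagnostic $(k+1)$-local output $O$ obtained by acting on the targeted input with one Hamiltonian bond, enumerate the (at most two) $k$-local inputs that can also produce $O$, and argue that every competitor is eliminated by one of two escape routes: either it fails to exist because reaching the required interior symbol would demand a Pauli commutator equal to $I$, which never happens; or its coefficient has already been forced to zero by Lemma~\ref{lemma:Nonint_period3_Z...(k)=0} because the competing input either starts with $Z$ or ends with $X$. Since $[\hat{Q},\hat{H}]=0$ forces the total coefficient of $O$ to vanish and $J^{xy}, J^{yz}\neq 0$, the targeted coefficient must itself be zero.

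Concretely, the block $q_{X_j A^2\cdots A^{k-1} A^k}^{(k)}=0$ with $A^2\in\{I,X,Y\}$ breaks into subcases by $A^k$: the subcase $A^k=X$ is already contained in the previous lemma, the subcase $A^k=Y$ is the diagram in the excerpt, and the remaining subcase $A^k=Z$ uses the diagnostic output $X_j A^2\cdots A^{k-1} Y_{j+k-1} Y_{j+k}$ obtained from a right-edge $XY$ action; the only potential left-edge competitor, which arises when $A^2=X$, starts with $Z$ and is killed by the previous lemma. The block $q_{Y_j\cdots}^{(k)}=0$ with $A^2\in\{I,Z\}$ is the same analysis with the roles of the two Hamiltonian bonds exchanged at the left edge. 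The two right-end blocks are mirror images of the left-end blocks under the left--right flip combined with $X\leftrightarrow Z$, a symmetry that swaps $XY\leftrightarrow YZ$ and therefore preserves the Hamiltonian up to constants; they follow by the identical argument applied to left-edge Hamiltonian actions, now invoking $q_{\cdots X}^{(k)}=0$ from the previous lemma to eliminate the single nontrivial right-edge competitor.

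The main obstacle is purely bookkeeping rather than conceptual. One must exhaustively verify, for each of the eight target families, that every potential competitor falls into exactly one of the two escape routes, and one must choose each diagnostic output so that the ``commutator equals $I$'' obstruction kicks in for precisely the borderline values of $A^2$ (or $A^{k-1}$) listed in the lemma; the delicate verification is for $A^2=X$ (and its mirror $A^{k-1}=Z$), where a genuine competing input exists and one must match its leading (resp.\ trailing) symbol with the corresponding case of the preceding lemma. I would organize the full argument as a single table whose rows are the eight families and whose columns are ``diagnostic output / potential competitors / reason each competitor is eliminated,'' so that the symmetry across the cases is manifest and no case is accidentally omitted.
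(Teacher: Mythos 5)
Your proposal is correct and follows essentially the same route as the paper's proof: the paper likewise fixes a diagnostic $(k+1)$-local output generated by a single edge action of one Hamiltonian bond, notes that the unique possible competing $k$-local input either cannot arise from any Pauli commutator (the $A^2=I,Y$ type cases) or is a coefficient beginning with $Z$ / ending with $X$ already annihilated by Lemma~\ref{lemma:Nonint_period3_Z...(k)=0} (the $A^2=X$ type cases), and then concludes from $[\hat{Q},\hat{H}]=0$ with $J^{xy},J^{yz}\neq 0$. The paper only displays the $q_{X_{j}A^{2}\cdots A^{k-1}Y}^{(k)}$ family and dismisses the remaining ones ``in a similar manner,'' so your explicit subcase table and the left--right mirror argument are a faithful completion of the same proof rather than a different one.
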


Furthermore, we can obtain relation between two of the remaining $k$-local inputs as in
\begin{align}
    \begin{array}{rlllllll}
         &X&Z&A^{3}&...&A^{k-1}&Y& \\
         & & & &...& &X&Y \\ \hline
        -&X_{j}&Z&A^{3}&...&A^{k-1}&Z&Y
    \end{array}
    \quad
    \begin{array}{rlllllll}
         & &X&A^{3}&...&A^{k-1}&Z&Y \\
         &X&Y& &...& & &  \\ \hline
         &X_{j}&Z&A^{3}&...&A^{k-1}&Z&Y
    \end{array},
\end{align}
which results in
\begin{align}
    -J^{xy}q_{X_{j}ZA^{3}...A^{k-1}Y}^{(k)}
    +J^{xy}q_{X_{j+1}A^{3}...A^{k-1}ZY}^{(k)}
    =0.
\end{align}
If $A^{3}$ is not $Z$, we have $q_{X_{j+1}A^{3}...A^{k-1}ZY}^{(k)}=0$ from Lemma~\ref{lemma:Nonint_period3_XX...(k)=0}, and hence $q_{X_{j}ZA^{3}...A^{k-1}Y}^{(k)}=0$.
By using such a relation, we can shift the symbols $A^{3},...,A^{k-1}$ to the left and we can determine these symbols.
As a result, we can obtain the following proposition:
\begin{proposition}\label{proposition:Nonint_Period3_Step1}
    For any $j\in\Lambda$ and for any $\boldsymbol{A}^{k}$, the solution of Eq.~(\ref{eq:Commutation_Q_H}) satisfies
    \begin{align}
        q_{\boldsymbol{A}^{k}_{j}}^{(k)}=0,
    \end{align}
    except for 
    \begin{align}
        q_{X_{j}(Z)^{k-2}Y}^{(k)},\quad 
        q_{Y_{j}(X)^{k-2}Z}^{(k)},\quad
        q_{Y_{j}(X)^{n}Y(Z)^{k-n-3}Y}^{(k)}\quad \text{with }n=0,1,...,k-3.
    \end{align}
    In addition, these remaining coefficients are independent of the site $j$ and satisfy
    \begin{align}
        \frac{q_{Y_{j}(X)^{n}Y(Z)^{k-n-3}Y}^{(k)}}{(J^{xy})^{k-n-2}(J^{yz})^{n+1}}
        =-\frac{q_{Y_{j}(X)^{k-2}Z}^{(k)}}{(J^{yz})^{k-1}}
        =-\frac{q_{X_{j}(Z)^{k-2}Y}^{(k)}}{(J^{xy})^{k-1}}
        =-\frac{q_{X_{1}(Z)^{k-2}Y}^{(k)}}{(J^{xy})^{k-1}}\quad \text{for }n=0,1,...,k-3.
    \end{align}
    Here we used a shorthand notation of a sequence of symbols
    \begin{align}
        (A)^{m}:=\underbrace{AA...A}_{m\, \mathrm{times}}\quad \text{ for }A=X,Y,Z,I.
    \end{align}
\end{proposition}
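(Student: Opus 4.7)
The plan is to build systematically on the two preceding lemmas. After applying Lemma~\ref{lemma:Nonint_period3_Z...(k)=0} and Lemma~\ref{lemma:Nonint_period3_XX...(k)=0}, the only $k$-local coefficients that can still be nonzero are those whose first two symbols $(A^1,A^2)$ lie in $\{(X,Z),(Y,X),(Y,Y)\}$ and whose last two symbols $(A^{k-1},A^k)$ lie in $\{(X,Z),(Y,Y),(Z,Y)\}$. The remaining task is to determine which middle sequences $A^3,\ldots,A^{k-2}$ survive and to extract the linear relations among the surviving coefficients.

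The main tool is the shift argument already illustrated in the excerpt: for each $(k+1)$-local output, the commutator condition \eqref{eq:Commutation_Q_H} imposes a linear relation among at most two $k$-local coefficients, corresponding to the two positions where a Hamiltonian term can be attached to the edges of a $k$-local input. If one of the two coefficients is already forced to vanish by the lemmas, the relation forces the other to vanish and constrains the middle symbols. Concretely, I would first apply this to inputs of the form $XZA^3\cdots A^{k-1}Y$: the $(k+1)$-local output $XZA^3\cdots A^{k-1}ZY$ receives a second contribution from $XA^3\cdots A^{k-1}ZY$ shifted one site to the right, which vanishes by Lemma~\ref{lemma:Nonint_period3_XX...(k)=0} unless $A^3=Z$. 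Iterating the argument and looking at outputs acting on successively more interior positions propagates a $Z$ through the entire middle, so the only survivor of this family is $q_{X_j(Z)^{k-2}Y}^{(k)}$. The symmetric argument applied to $YA^2\cdots A^{k-2}XZ$ inputs leaves only $q_{Y_j(X)^{k-2}Z}^{(k)}$. For the remaining inputs starting with $YX$ or $YY$ and ending with $YY$ or $ZY$, the shift argument applied from both edges collapses the middle into a single ``interface'' $Y$ flanked by a run of $X$'s on the left and a run of $Z$'s on the right, producing exactly the family $q_{Y_j(X)^n Y(Z)^{k-n-3}Y}^{(k)}$.

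Once the surviving patterns are isolated, the ratio relations follow from the same machinery. Each $(k+1)$-local output that links two surviving coefficients supplies a linear relation whose coefficients are $\pm J^{xy}$ or $\pm J^{yz}$, depending on which Hamiltonian term is attached; the ratio of the two coefficients is therefore determined up to a factor of $J^{xy}/J^{yz}$ or its reciprocal. Stepping $n$ up and down within the family $q_{Y_j(X)^n Y(Z)^{k-n-3}Y}^{(k)}$, and then closing the chain at the boundaries $n=0$ and $n=k-3$, which couple to $q_{Y_j(X)^{k-2}Z}^{(k)}$ and $q_{X_j(Z)^{k-2}Y}^{(k)}$ respectively, yields precisely the chain of ratios stated in the proposition. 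Site-independence is obtained in parallel, from diagrams linking a given input at site $j$ to its translate at site $j+1$.

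The main obstacle is bookkeeping: one must enumerate every $(k+1)$-local output that can be generated from a candidate surviving pattern, verify that it has at most two contributing $k$-local inputs, and carefully track the signs and coupling constants produced by the Pauli commutators. Particular care is needed at the boundary cases $n=0$ and $n=k-3$ of the $Y(X)^nY(Z)^{k-n-3}Y$ family, where several $(k+1)$-local outputs simultaneously connect this family to the two ``edge'' families $q_{X_j(Z)^{k-2}Y}^{(k)}$ and $q_{Y_j(X)^{k-2}Z}^{(k)}$; one must verify that the resulting overdetermined system of equalities is internally consistent and expresses every surviving coefficient in terms of the single free parameter $q_{X_1(Z)^{k-2}Y}^{(k)}$.
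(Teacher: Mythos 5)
Your proposal follows essentially the same route as the paper: it uses Lemmas~\ref{lemma:Nonint_period3_Z...(k)=0} and \ref{lemma:Nonint_period3_XX...(k)=0} to pin down the admissible edge symbols, then applies the same shift argument (each $(k+1)$-local output receives at most two contributions, so a vanishing partner forces the other coefficient to vanish or fixes the next middle symbol) to isolate the surviving patterns $X(Z)^{k-2}Y$, $Y(X)^{k-2}Z$, and $Y(X)^{n}Y(Z)^{k-n-3}Y$, and finally reads off the site-independence and the $J^{xy}/J^{yz}$ ratio chain from the same family of diagrams. This matches the paper's proof in both strategy and level of detail, so no further comparison is needed.
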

Therefore we only need to show that one of these remaining coefficients is zero.

As the second part of the proof, we examine the coefficient $q_{X_{j}(Z)^{k-2}Y}^{(k)}(=q_{X_{1}(Z)^{k-2}Y}^{(k)})$.
We consider the contribution from $q_{X_{j}(Z)^{k-2}Y}^{(k)}$ to a $k$-local output which can also include the contribution from $(k-1)$-local inputs.
For instance,
\begin{align}
    \begin{array}{rlllllll}
         &X&(Z)^{k-4}&Z&Z&Y \\
         & & &Y&Z& \\ \hline
        -&X_{j}&(Z)^{k-4}&X&I&Y
    \end{array}
    \quad
    \begin{array}{rlllllll}
         & &X&(Z)^{k-5}&X&I&Y \\
         &X&Y& & & & \\ \hline
         &X_{j}&Z&(Z)^{k-5}&X&I&Y
    \end{array}
\end{align}
are the only contribution to the $k$-local output $X_{j}(Z)^{k-4}XIY$, and hence we have
\begin{align}
    J^{xy}q_{X_{j+1}(Z)^{k-5}XIY}^{(k-1)}
    =J^{yz}q_{X_{1}(Z)^{k-2}Y}^{(k)}.
    \label{eq:Nonint_period3_Prop2_1}
\end{align}
For coefficients of $(k-1)$-local inputs, we obtain the following:
All contributions to $k$-local output $X_{j}(Z)^{k-n-5}XI(Z)^{n+1}Y$ (for $n=0,...,k-6$) are given by
\begin{gather}
    \begin{array}{rllllllllll}
         &X&(Z)^{k-n-5}&X&I&(Z)^{n}&Y& \\
         & & & & & &X&Y\\ \hline
        -&X_{j}&(Z)^{k-n-5}&X&I&(Z)^{n}&Z&Y
    \end{array}
    \quad
    \begin{array}{rllllllllll}
         & &X&(Z)^{k-n-6}&X&I&(Z)^{n+1}&Y \\
         &X&Y& & & & & \\ \hline
         &X_{j}&Z&(Z)^{k-n-5}&X&I&(Z)^{n+1}&Y
    \end{array}
    \nonumber\\
    \begin{array}{rllllllllll}
         &X&(Z)^{k-n-5}&Z&Z&(Z)^{n+1}&Y \\
         & & &Y&Z& & \\ \hline
        -&X_{j}&(Z)^{k-n-5}&X&I&(Z)^{n+1}&Y
    \end{array}
    \quad
    \begin{array}{rllllllllll}
         &X&(Z)^{k-n-5}&X&X&X&(Z)^{n}&Y \\
         & & & &X&Y& & \\ \hline
         &X_{j}&(Z)^{k-n-5}&X&I&Z&(Z)^{n}&Y
    \end{array},
\end{gather}
which result in 
\begin{align}
    J^{xy}\bigl(
    q_{X_{j+1}(Z)^{k-n-6}XI(Z)^{n+1}Y}^{(k-1)}
    -q_{X_{j}(Z)^{k-n-5}XI(Z)^{n}Y}^{(k-1)}
    \bigr)
    =J^{yz}q_{X_{1}(Z)^{k-2}Y}^{(k)}
    \quad\text{for all }n=0,...,k-6.
    \label{eq:Nonint_period3_Prop2_2}
\end{align}
For the coefficient $q_{X_{j}XI(Z)^{k-5}Y}^{(k-1)}$, which appears in $n=k-6$ case of the above equation, we can obtain another relation
\begin{align}
    J^{xy}\bigl(
    -q_{Z_{j+1}I(Z)^{k-4}Y}^{(k-1)}
    -q_{X_{j}XI(Z)^{k-5}Y}^{(k-1)}
    \bigr)
    =J^{yz}q_{X_{1}(Z)^{k-2}Y}^{(k)},
    \label{eq:Nonint_period3_Prop2_3}
\end{align}
in a similar manner.
We can also obtain the following relation for the coefficient $q_{Z_{j}I(Z)^{k-4}Y}^{(k-1)}$,
\begin{align}
    J^{xy}q_{Z_{j}I(Z)^{k-4}Y}^{(k-1)}
    =2J^{yz}q_{X_{1}(Z)^{k-2}Y}^{(k)},
    \label{eq:Nonint_period3_Prop2_4}
\end{align}
by considering the contributions to $Z_{j}I(Z)^{k-3}Y$,
\begin{gather}
    \begin{array}{rllllllllll}
         &Z&I&(Z)^{k-4}&Y& \\
         & & & &X&Y \\ \hline
        -&Z_{j}&I&(Z)^{k-4}&Z&Y
    \end{array}
    \quad
    \begin{array}{rllllllllll}
         &Z&X&X&(Z)^{k-4}&Y \\
         & &X&Y& & \\ \hline
         &Z_{j}&I&Z&(Z)^{k-4}&Y
    \end{array}
    \quad
        \begin{array}{rllllllllll}
         &Y&Y&(Z)^{k-3}&Y \\
         &X&Y& & \\ \hline
        -&Z_{j}&I&(Z)^{k-3}&Y
    \end{array}
    \quad
    \begin{array}{rllllllllll}
         &X&Z&(Z)^{k-3}&Y \\
         &Y&Z& & \\ \hline
         &Z_{j}&I&(Z)^{k-3}&Y
    \end{array}.
\end{gather}
Because the sum of the left-hand sides of Eqs.~(\ref{eq:Nonint_period3_Prop2_1}), (\ref{eq:Nonint_period3_Prop2_2})--(\ref{eq:Nonint_period3_Prop2_4}) (by choosing the site $j$ appropriately) become zero, we have
\begin{align}
    0&=
    J^{xy}q_{X_{1}(Z)^{k-5}XIY}^{(k-1)}
    +\sum_{n=0}^{k-6}J^{xy}\bigl(
    q_{X_{n+2}(Z)^{k-n-6}XI(Z)^{n+1}Y}^{(k-1)}
    -q_{X_{n+1}(Z)^{k-n-5}XI(Z)^{n}Y}^{(k-1)}
    \bigr)
    \nonumber\\
    &\hspace{10pt}+J^{xy}\bigl(
    -q_{Z_{k-3}I(Z)^{k-4}Y}^{(k-1)}
    -q_{X_{k-4}XI(Z)^{k-5}Y}^{(k-1)}
    \bigr)
    +J^{xy}q_{Z_{k-3}I(Z)^{k-4}Y}^{(k-1)}\\
    &=(k-1)J^{yz}q_{X_{1}(Z)^{k-2}Y}^{(k)}.
\end{align}
Thus we obtain the following proposition:
\begin{proposition}\label{proposition:Nonint_Period3_Step2}
    For $3\le k\le N/2$, the solution of Eq.~(\ref{eq:Commutation_Q_H}) satisfies
    \begin{align}
        q_{X_{1}(Z)^{k-2}Y}^{(k)}=0.
    \end{align}
\end{proposition}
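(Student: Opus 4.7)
The plan is to descend one level of locality from Proposition~\ref{proposition:Nonint_Period3_Step1}. That proposition pins down every surviving top-locality coefficient as an explicit scalar multiple of $q_{X_{1}(Z)^{k-2}Y}^{(k)}$, so a single scalar now controls the entire $\ell=k$ layer of $\hat{Q}$. To extract a constraint on that scalar, one examines the coefficients of $k$-local Pauli strings (rather than $(k+1)$-local ones) in $[\hat{Q},\hat{H}]=0$. A $k$-local output can arise either from a $k$-local input whose support fully contains a two-site Hamiltonian bond, or from a $(k-1)$-local input whose support is extended by one site by a bond hanging off the edge. By Proposition~\ref{proposition:Nonint_Period3_Step1}, every contribution of the first kind is already known as a fixed multiple of $q_{X_{1}(Z)^{k-2}Y}^{(k)}$, so each commutator equation reads schematically ``(combination of $(k-1)$-local coefficients)\,$=$\,(known scalar)\,$\cdot q_{X_{1}(Z)^{k-2}Y}^{(k)}$''.

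The key step is to choose a family of $k$-local outputs whose $(k-1)$-local contributions cancel telescopically when the corresponding equations are added. A natural family is the set of outputs of shape $X_{j}(Z)^{a}XI(Z)^{b}Y$ with $a+b+5=k$: applying either bond $\hat{\sigma}^{x}\hat{\sigma}^{y}$ or $\hat{\sigma}^{y}\hat{\sigma}^{z}$ inside a neighboring $(k-1)$-local input of the same shape shifts the internal $XI$ pattern by one site, producing a telescope between adjacent values of $b$. Two endpoint equations, obtained from $X_{j}(Z)^{k-5}XIY$ (where the $XI$ sits at the right edge and only one $(k-1)$-local term contributes) and $Z_{j}I(Z)^{k-3}Y$ (where Pauli cancellations such as $\hat{Y}\hat{X}=i\hat{Z}$ collapse several inputs onto the same output, yielding an extra combinatorial factor), must be appended to close the chain cleanly. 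Summing the whole chain eliminates every $(k-1)$-local coefficient in pairs and produces an equation of the form $(k-1)J^{yz}\, q_{X_{1}(Z)^{k-2}Y}^{(k)}=0$. Since $J^{yz}\neq 0$ and $k\ge 3$, the desired conclusion follows.

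The main obstacle will be the combinatorial bookkeeping: for each chosen $k$-local output one has to enumerate \emph{every} input/bond pairing that produces it, verify that contributions from the other surviving top-locality families, namely $q_{Y_{j}(X)^{k-2}Z}^{(k)}$ and $q_{Y_{j}(X)^{n}Y(Z)^{k-n-3}Y}^{(k)}$, do not contaminate the chosen output strings (which forces one to work only with outputs whose symbol pattern cannot be reached from those families via a single Hamiltonian bond), and correctly account for the endpoint equations where unusual Pauli cancellations introduce extra prefactors. The assumption $k\le N/2$ is used implicitly to ensure that the $k$-local string $X_{j}(Z)^{k-2}Y$ does not wrap around the periodic lattice and collide with its own tail, which would otherwise merge several distinct output sites and invalidate the telescoping count.
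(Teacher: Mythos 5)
Your plan is essentially the paper's own proof: the same telescoping family of $k$-local outputs $X_{j}(Z)^{a}XI(Z)^{b}Y$, the same two endpoint equations from $X_{j}(Z)^{k-5}XIY$ and $Z_{j}I(Z)^{k-3}Y$ (the latter indeed carrying the factor $2$ from two coincident contributions), and the same final identity $(k-1)J^{yz}\,q_{X_{1}(Z)^{k-2}Y}^{(k)}=0$. The only slip is the length count ($a+b+4=k$, not $a+b+5=k$), which does not affect the argument.
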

By combining Propositions~\ref{proposition:Nonint_Period3_Step1} and \ref{proposition:Nonint_Period3_Step2}, we have
\begin{align}
    q_{\boldsymbol{A}^{k}_{j}}^{(k)}=0\quad \text{for all }j\text{ and }\boldsymbol{A}^{k}.
    \label{eq:Nonint_Period3_q^k=0}
\end{align}
This means that $\hat{Q}$ is a $(k-1)$-local conserved quantity.
Applying the same argument to $k-1$, $k-2$,..., and $3$-local conserved quantity, we can show that any $k$-local conserved quantity with $k\le N/2$ have to be a $2$-local conserved quantity.

As the third part of the proof, we analyze the coefficients $q_{\boldsymbol{A}^{\ell}_{j}}^{(\ell)}$ with $\ell\le k$ in the case of $k\le 2$.
From Lemma~\ref{lemma:Nonint_period3_Z...(k)=0}, we only need to consider the coefficients of the form $q_{X_{j}Y}^{(2)},q_{X_{j}Z}^{(2)},q_{Y_{j}Y}^{(2)},q_{Y_{j}Z}^{(2)}$ and $q_{A_{j}}^{(1)}$.
Furthermore, the coefficient $q_{X_{j}Z}^{(2)}$ vanishes because
\begin{align}
    \begin{array}{rlllllll}
         &X&Z&  \\
         & &X&Y \\ \hline
        &X_{j}&Y&Y
    \end{array}
\end{align}
is the only contribution to the $3$-local output $X_{j}YY$.
The coefficient $q_{Y_{j}Y}^{(2)}$ vanishes by a similar reason.
In addition, because $2$-local output comes only from $1$-local input,
we can easily show that $q_{X_{j}}^{(1)}=q_{Y_{j}}^{(1)}=q_{Z_{j}}^{(1)}=0$.
For the remaining coefficients $q_{X_{j}Y}^{(2)}$ and $q_{Y_{j}Z}^{(2)}$, we can easily show that they are independent of the site $j$ and are related to each other by
\begin{align}
    J^{xy}q_{Y_{j}Z}^{(2)}-J^{yz}q_{X_{j+1}Y}^{(2)}=0,
\end{align}
which results in the following proposition:
\begin{proposition}\label{proposition:Nonint_Period3_Step3}
    Any $2$-local conserved quantity $\hat{Q}$ can be written as
    \begin{align}
        \hat{Q}=a\hat{H}+b\hat{I},
    \end{align}
    with arbitrary constants $a,b\in\mathbb{R}$.
\end{proposition}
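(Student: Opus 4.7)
By the first two parts of the argument, any $k$-local conserved quantity with $k\le N/2$ has all its coefficients of locality strictly greater than two equal to zero, so the remaining task is to analyze $\hat{Q}=q^{(0)}_{I}\hat{I}+\sum_{j,A}q^{(1)}_{A_{j}}\hat{A}_{j}+\sum_{j,A^{1}A^{2}}q^{(2)}_{A^{1}_{j}A^{2}}\hat{A}^{1}_{j}\hat{A}^{2}_{j+1}$. The plan is to run the same diagrammatic bookkeeping as in the earlier parts, but now restricted to the two smallest locality classes so that the case enumeration is entirely finite.

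First, Lemma~\ref{lemma:Nonint_period3_Z...(k)=0} applied with $k=2$ kills every $q^{(2)}_{Z_{j}\cdot}$ and every $q^{(2)}_{\cdot X}$, leaving only four candidate coefficients $q^{(2)}_{X_{j}Y}$, $q^{(2)}_{X_{j}Z}$, $q^{(2)}_{Y_{j}Y}$, $q^{(2)}_{Y_{j}Z}$. For each of the two ``wrong'' ones I would pick a $3$-local output that receives a unique contribution: the output $\hat{X}_{j}\hat{Y}_{j+1}\hat{Y}_{j+2}$ is produced only by $[\hat{X}_{j}\hat{Z}_{j+1},J^{xy}\hat{X}_{j+1}\hat{Y}_{j+2}]$, forcing $q^{(2)}_{X_{j}Z}=0$; similarly $\hat{Y}_{j}\hat{Z}_{j+1}\hat{Y}_{j+2}$ is produced only by $[\hat{Y}_{j}\hat{Y}_{j+1},J^{xy}\hat{X}_{j+1}\hat{Y}_{j+2}]$, forcing $q^{(2)}_{Y_{j}Y}=0$.

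Next I would kill the $1$-local coefficients. Once $q^{(2)}_{X_{\cdot}Z}$ and $q^{(2)}_{Y_{\cdot}Y}$ have been removed, the two remaining $2$-local candidates $\hat{X}_{j}\hat{Y}_{j+1}$ and $\hat{Y}_{j}\hat{Z}_{j+1}$ happen to commute with both nearest-neighbor Hamiltonian terms supported on $(j,j+1)$, so no surviving $2$-local input contributes to any $2$-local output. Every $2$-local output of $[\hat{Q},\hat{H}]$ therefore comes entirely from $1$-local inputs. Now the Pauli strings $\hat{Z}_{j}\hat{Z}_{j+1}$, $\hat{X}_{j}\hat{X}_{j+1}$, and $\hat{Z}_{j}\hat{Y}_{j+1}$ each receive exactly one such contribution, from $q^{(1)}_{X_{j}}$, $q^{(1)}_{Z_{j+1}}$, and $q^{(1)}_{Y_{j}}$ respectively, so demanding each to vanish yields $q^{(1)}_{X_{j}}=q^{(1)}_{Y_{j}}=q^{(1)}_{Z_{j}}=0$ for every $j$.

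Finally, among the $3$-local outputs generated by $\hat{X}_{j}\hat{Y}_{j+1}$ and $\hat{Y}_{j}\hat{Z}_{j+1}$ I would read off site-independence of $q^{(2)}_{X_{j}Y}$ and of $q^{(2)}_{Y_{j}Z}$---for instance, from $\hat{X}_{j-1}\hat{Z}_{j}\hat{Y}_{j+1}$, to which only $q^{(2)}_{X_{j-1}Y}$ via $J^{xy}\hat{X}_{j}\hat{Y}_{j+1}$ and $q^{(2)}_{X_{j}Y}$ via $J^{xy}\hat{X}_{j-1}\hat{Y}_{j}$ contribute---together with one cross-relation such as $J^{xy}q^{(2)}_{Y_{j}Z}=J^{yz}q^{(2)}_{X_{j+1}Y}$, extracted from an output like $\hat{Y}_{j-1}\hat{Y}_{j}\hat{Y}_{j+1}$. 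Setting $a:=q^{(2)}_{X_{\cdot}Y}/J^{xy}=q^{(2)}_{Y_{\cdot}Z}/J^{yz}$ and $b:=q^{(0)}_{I}$ then yields $\hat{Q}=a\hat{H}+b\hat{I}$. The main obstacle is purely bookkeeping: for each chosen output one must enumerate every surviving diagram and verify that none was overlooked once the higher-locality coefficients have been set to zero. Since the locality is capped at two and the Hamiltonian contains only two types of terms, the enumeration is finite and mechanical, so no new idea is required beyond the diagrammatic method already developed.
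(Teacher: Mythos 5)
Your proposal is correct and follows essentially the same route as the paper's own (rather terse) third part of the proof: apply Lemma~\ref{lemma:Nonint_period3_Z...(k)=0} at $k=2$, kill $q^{(2)}_{X_jZ}$ and $q^{(2)}_{Y_jY}$ via uniquely-sourced $3$-local outputs (your choices $X_jYY$ and $Y_jZY$ match or parallel the paper's diagrams), eliminate the $1$-local coefficients from uniquely-sourced $2$-local outputs, and extract site-independence together with the cross-relation $J^{xy}q^{(2)}_{Y_jZ}=J^{yz}q^{(2)}_{X_{j+1}Y}$ to conclude $\hat{Q}=a\hat{H}+b\hat{I}$. The diagrams you list all check out, and your explicit observation that the surviving $2$-local inputs commute with the same-support Hamiltonian terms (so $2$-local outputs come only from $1$-local inputs) is a detail the paper leaves implicit.
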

From Eq.~(\ref{eq:Nonint_Period3_q^k=0}) and Proposition~\ref{proposition:Nonint_Period3_Step3}, we obtain Theorem~\ref{theorem:Nonint_Period3}.
\end{proof}

\subsection{Proof of Theorem~\ref{theorem:Nonint_Period4}}
This subsection proves Theorem~\ref{theorem:Nonint_Period4}.
Throughout this subsection, we consider Model~\ref{model:period4} and assume $J^{xx},J^{yz}\neq 0$ and $k\le N/2$.
(The reason for the assumption $k\le N/2$ is the same as one discussed in Sec.~VI~A of Ref.~\cite{Chiba2024}.)
\begin{proof}
The proof of Theorem~\ref{theorem:Nonint_Period3} is divided into three parts.
The first part investigates the coefficients with largest locality, $q_{\boldsymbol{A}^{k}_{j}}^{(k)}$.
In a manner similar to the proof of Lemma~\ref{lemma:Nonint_period3_Z...(k)=0}, we can show the following lemma:
\begin{lemma}\label{lemma:Nonint_period4_Z...(k)=0}
    For $2\le k\le N/2$, the solution of Eq.~(\ref{eq:Commutation_Q_H}) satisfies
    \begin{align}
        q_{Z_{j}A^{2}...A^{k-1}A^{k}}^{(k)}&=0\\
        q_{A^{1}_{j}A^{2}...A^{k-1}Y}^{(k)}&=0
    \end{align}
    for all $j\in\Lambda$.
    Here the symbols $A^{1},A^{2},...,A^{k-1},A^{k}$ that are not specified can be any symbols satisfying Eqs.~(\ref{eq:Nonint_RangeA1}) and (\ref{eq:Nonint_RangeA2}).
\end{lemma}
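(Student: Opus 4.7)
The plan is to mirror, essentially verbatim, the diagrammatic commutator argument used to establish Lemma~\ref{lemma:Nonint_period3_Z...(k)=0} for Model~\ref{model:period3}, but now applied to the Hamiltonian of Model~\ref{model:period4}, $\hat{H}_3 = \sum_j (J^{xx}\hat{\sigma}_j^x\hat{\sigma}_{j+1}^x + J^{yz}\hat{\sigma}_j^y\hat{\sigma}_{j+1}^z)$. The key structural observation to exploit is that $\hat{H}_3$ places only $X$ or $Y$ in the left slot of its nearest-neighbor terms and only $X$ or $Z$ in the right slot. As in the Period~\ref{model:period3} argument, a $(k+1)$-local output admits at most two contributing $k$-local inputs, corresponding to $\hat{H}_3$ being applied to either edge of a $k$-local input. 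Here, an output whose leftmost operator is $Z$ cannot arise from a shifted-right input extending leftward, because the left slot of $\hat{H}_3$ is never $Z$; symmetrically, an output whose rightmost operator is $Y$ cannot arise from a shifted-left input extending rightward, because the right slot of $\hat{H}_3$ is never $Y$.

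For the first identity, $q_{Z_j A^2 \cdots A^k}^{(k)} = 0$, I will take the input $Z_j A^2 \cdots A^k$ on $\{j, \ldots, j+k-1\}$ and commute it with $\hat{H}_3$ on $\{j+k-1, j+k\}$ to produce a $(k+1)$-local output on $\{j, \ldots, j+k\}$ whose leftmost entry is still $Z$. By the uniqueness observation, this is the only contributing $k$-local input, so the vanishing of the output is equivalent to the vanishing of the coefficient. For each $A^k \in \{X,Y,Z\}$ one picks an interaction term giving a nonzero right-edge commutator: $J^{yz}$ for $A^k = X$ (producing a tail $Z\,Z$), $J^{xx}$ for $A^k = Y$ (tail $Z\,X$), and either term for $A^k = Z$. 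The second identity, $q_{A^1 A^2 \cdots A^{k-1} Y}^{(k)} = 0$, is obtained by the mirror-image procedure: extend the input to the left via $\hat{H}_3$ on $\{j-1, j\}$, so that the rightmost entry of the resulting $(k+1)$-local output remains $Y$; symmetric uniqueness isolates our input, and choosing $J^{yz}$ for $A^1 = X$, $J^{xx}$ for $A^1 = Z$, or either term for $A^1 = Y$ guarantees a nonvanishing left-edge commutator, forcing the claimed identity.

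I do not expect a substantive obstacle, since the proof is a direct transcription of the Period~\ref{model:period3} argument once one replaces the operator content of $\hat{H}_2$ by that of $\hat{H}_3$. The only thing to verify carefully is that every remaining case of $A^k$ (respectively $A^1$) admits at least one interaction term that yields a nonzero edge commutator, which the list above confirms. The standing hypothesis $k \le N/2$, inherited from the discussion in Sec.~VI~A of Ref.~\cite{Chiba2024}, ensures that the rightward and leftward extensions of the support do not wrap around the ring, so that the diagrammatic bookkeeping of the $(k+1)$-local outputs is unambiguous.
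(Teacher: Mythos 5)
Your proposal is correct and takes essentially the same approach as the paper, which itself only states that this lemma follows ``in a manner similar to'' the period-3 case. Your transcription --- using that $\hat{H}_3$'s left slots are $\{X,Y\}$ and right slots are $\{X,Z\}$, so a $(k+1)$-local output beginning with $Z$ (resp.\ ending with $Y$) has a unique contributing $k$-local input, together with the correct choice of interaction term for each edge symbol --- is exactly the intended argument.
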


Furthermore, we can show the following lemma in a manner similar to the proof of Lemma~\ref{lemma:Nonint_period3_XX...(k)=0}:
\begin{lemma}\label{lemma:Nonint_period4_XX...(k)=0}
    For $3\le k\le N/2$, the solution of Eq.~(\ref{eq:Commutation_Q_H}) satisfies
    \begin{align}
        q_{X_{j}A^{2}...A^{k-1}A^{k}}^{(k)}&=0\quad \text{for }A^{2}=I,Y,X\\
        q_{Y_{j}A^{2}...A^{k-1}A^{k}}^{(k)}&=0\quad \text{for }A^{2}=I,Z\\
        q_{A^{1}_{j}A^{2}...A^{k-1}X}^{(k)}&=0\quad \text{for }A^{k-1}=I,X,Z\\
        q_{A^{1}_{j}A^{2}...A^{k-1}Z}^{(k)}&=0\quad \text{for }A^{k-1}=I,Y
    \end{align}
    for all $j\in\Lambda$.
    Here the symbols $A^{1},A^{2},...,A^{k-1},A^{k}$ that are not specified can be any symbols satisfying Eqs.~(\ref{eq:Nonint_RangeA1}) and (\ref{eq:Nonint_RangeA2}).
\end{lemma}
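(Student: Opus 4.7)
The plan is to adapt the diagrammatic argument used for Lemma~\ref{lemma:Nonint_period3_XX...(k)=0} to Model~\ref{model:period4}, swapping the edge operator $\hat\sigma^x\hat\sigma^y$ of Model~\ref{model:period3} with the present model's $\hat\sigma^x\hat\sigma^x$ term while keeping the $\hat\sigma^y\hat\sigma^z$ term. The skeleton is identical: for each of the four claimed vanishing classes, I would pick a $(k+1)$-local output that the candidate $k$-local input produces by commuting with one Hamiltonian edge term, enumerate every other $k$-local input whose commutator with $\hat H_3$ could land on the same output, and verify that every such competitor coefficient is already forced to zero by Lemma~\ref{lemma:Nonint_period4_Z...(k)=0} or by a previously handled case of the present lemma. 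The resulting commutation constraint then becomes a single-term equation with a nonzero coefficient ($J^{xx}$ or $J^{yz}$), forcing the target $k$-local coefficient to vanish.

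Concretely, for $q^{(k)}_{X_j A^2\ldots A^k}$ with $A^2 \in \{I,Y,X\}$ I would commute with $J^{yz}\hat\sigma^y_{j-1}\hat\sigma^z_j$, producing a $(k+1)$-local output whose left pair is $Y_{j-1}Y_j$; any other left-edge contribution would have to come from a $k$-local input starting with $Z$, which vanishes by Lemma~\ref{lemma:Nonint_period4_Z...(k)=0}, while right-edge contributions from one-site-shifted inputs are pinned down by the shape of $A^3\ldots A^k$ exactly as in the Model~\ref{model:period3} proof. For $q^{(k)}_{Y_j A^2\ldots A^k}$ with $A^2 \in \{I,Z\}$ I would switch to the edge term $J^{xx}\hat\sigma^x_{j-1}\hat\sigma^x_j$, producing output starting with $X_{j-1}Z_j$ and repeating the enumeration. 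The last two classes are mirror-symmetric: I extend the output by one site on the right using whichever of $XX$ or $YZ$ attaches nontrivially to the rightmost $X$ or $Z$, and invoke the vanishing of $q^{(k)}_{\ldots Y}$ to kill the competitors. Where residual body symbols remain, paired-diagram consistency equations of the same type as those shown just before Lemma~\ref{lemma:Nonint_period3_XX...(k)=0} would propagate the cancellation through the interior.

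The main obstacle will be careful sign and overlap bookkeeping rather than any conceptual novelty. Because $\hat\sigma^x\hat\sigma^x$ commutes with $\hat\sigma^x$ on either site (whereas $\hat\sigma^x\hat\sigma^y$ did not), several diagrams that were nonzero in Model~\ref{model:period3} collapse here, and a few that vanished in Model~\ref{model:period3} become nontrivial, so the earlier derivation cannot be transcribed verbatim. In particular, for any class whose edge term is $XX$ one must verify that no competing left-edge contribution survives from an input starting with $X$ (the commutator with the edge $XX$ term is zero), and for any class whose edge term is $YZ$ one must confirm that competitors starting with $Y$ produce outputs with a different second symbol than the target. Each of the four parallel sub-cases reduces to inspecting a small finite family of three-row diagrams of the kind displayed in the Model~\ref{model:period3} proof, so the verification, though repetitive, should be essentially mechanical.
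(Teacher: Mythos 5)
Your overall strategy---produce a $(k+1)$-local output from the target $k$-local input by a single edge commutator, note that at most one other $k$-local input can contribute to that output, and kill the competitor with Lemma~\ref{lemma:Nonint_period4_Z...(k)=0}---is exactly the paper's route (the paper proves this lemma ``in a manner similar to'' Lemma~\ref{lemma:Nonint_period3_XX...(k)=0}). However, your concrete diagrams extend the support on the wrong side in every class, and this breaks the argument. To constrain the \emph{second} symbol $A^{2}$ (the first two displayed equations) the Hamiltonian term must be applied at the \emph{right} edge of the input: then the output keeps $A^{1}_{j}A^{2}_{j+1}$ at its left end, the sole competitor is the one-site-right-shifted input whose first symbol $B$ is forced by $[B,\cdot]\propto A^{2}$, and that competitor either cannot exist ($A^{2}=I,X$ against the $XX$ term; $A^{2}=I,Z$ against the $YZ$ term) or begins with $Z$ and dies by Lemma~\ref{lemma:Nonint_period4_Z...(k)=0}. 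Symmetrically, to constrain $A^{k-1}$ the term must act at the \emph{left} edge, so that the competitor's \emph{last} symbol is forced and falls into the $q^{(k)}_{\cdots Y}=0$ class.

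You do the opposite. For $q^{(k)}_{X_{j}A^{2}\cdots A^{k}}$ you commute with $Y_{j-1}Z_{j}$, giving the output $Y_{j-1}Y_{j}A^{2}\cdots A^{k}$. For that output the left-edge contribution is automatically unique (the pair $YY$ forces the $YZ$ term and first input symbol $X$; no input starting with $Z$ can contribute, since $[Z,Z]=0$, so the scenario you invoke Lemma~\ref{lemma:Nonint_period4_Z...(k)=0} for does not arise), but the dangerous competitor is now the \emph{right}-edge one, an input of the form $Y_{j-1}Y_{j}A^{2}\cdots A^{k-2}C_{j+k-2}$. Nothing available kills coefficients beginning with $YY$: Lemma~\ref{lemma:Nonint_period4_Z...(k)=0} is silent, the $Y$-starting clause of the present lemma requires second symbol $I$ or $Z$, and genuinely surviving doubling-product coefficients such as $q^{(k)}_{\overline{Y_{j}X\cdots}}\propto q^{(k)}_{Y_{j}Y_{j+1}X\cdots}$ do begin with $YY$. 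Concretely, for the target $X_{j}Y_{j+1}Z_{j+2}Z_{j+3}$ ($k=4$, $A^{2}=Y$) your diagram yields only a linear relation between $q^{(4)}_{X_{j}YZZ}$ and the uncontrolled $q^{(4)}_{Y_{j-1}YYX}$, proving neither zero, whereas the right-edge diagram (term $X_{j+3}X_{j+4}$, output $X_{j}YZYX$, competitor forced to start with $Z_{j+1}$) closes immediately. The same reversal afflicts your last two classes, where you extend to the right and then cannot invoke $q^{(k)}_{\cdots Y}=0$ because the forced competitor ends in $X$ or $Z$, not $Y$. The repair is mechanical---swap the sides on which you attach the Hamiltonian term---but as written the proposed diagrams do not establish the lemma.
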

By shifting the symbols $A^{3},...,A^{k-1}$ to the left as we did to obtain Proposition~\ref{proposition:Nonint_Period3_Step1}, many coefficients can be shown to be zero.
To explain the result, we introduce a version of ``doubling product.''
It was originally introduced by N.~Shiraishi~\cite{Shiraishi2019}.
Our version is modified for analyzing Model~\ref{model:period4} as follows:
We call a sequence of the Pauli operators doubling product, if it can by written as $\overline{A^{1}_{j}A^{2}...A^{n}}$, where
\begin{align}
    \overline{A^{1}_{j}}
    &:= \begin{cases}
        X_{j}X_{j+1}\quad&\text{when }A^{1}=X\\
        Y_{j}Z_{j+1}\quad&\text{when }A^{1}=Y
    \end{cases}
    \label{eq:Period4_DoublingProduct1}\\
    \overline{A^{1}_{j}A^{2}...A^{n}A^{n+1}}
    &:= \begin{cases}
        c\overline{A^{1}_{j}A^{2}...A^{n}}\times X_{j+n}X_{j+n+1}
        \quad&\text{when }A^{n+1}=X,\ A^{n}\neq X\\
        c\overline{A^{1}_{j}A^{2}...A^{n}}\times Y_{j+n}Z_{j+n+1}
        \quad&\text{when }A^{n+1}=Y
    \end{cases}.
    \label{eq:Period4_DoublingProduct2}
\end{align}
Here $c$ is chosen from $\{\pm i\}$ to make its coefficient $1$.
In addition, we introduce $J_{\overline{A^{1}...A^{n}}}$ by
\begin{align}
    J_{\overline{A^{1}}}
    &:= \begin{cases}
        J^{xx}\quad\text{when }A^{1}=X,\\
        J^{yz}\quad\text{when }A^{1}=Y
    \end{cases}\\
    J_{\overline{A^{1}...A^{n}A^{n+1}}}
    &:= \begin{cases}
        J_{\overline{A^{1}...A^{n}}}
        \times J^{xx}
        \quad&\text{when }A^{n}=Y,\ A^{n+1}=X\\
        J_{\overline{A^{1}...A^{n}}}
        \times J^{yz}
        \quad&\text{when }A^{n}=X,\ A^{n+1}=Y\\
       -J_{\overline{A^{1}...A^{n}}}
        \times J^{yz}
        \quad&\text{when }A^{n}=Y,\ A^{n+1}=Y
    \end{cases}.
\end{align}
Then we can obtain the following proposition:
\begin{proposition}\label{proposition:Nonint_Period4_Step1}
    For any $j\in\Lambda$ and for any $\boldsymbol{A}^{k}$ other than doubling product, the solution of Eq.~(\ref{eq:Commutation_Q_H}) satisfies
    \begin{align}
        q_{\boldsymbol{A}^{k}_{j}}^{(k)}=0.
    \end{align}
    For the case where $\boldsymbol{A}^{k}$ is given by a doubling product Eqs.~(\ref{eq:Period4_DoublingProduct1}) and (\ref{eq:Period4_DoublingProduct2}), these remaining coefficients are independent of the site $j$ and are related to each other by
    \begin{align}
        \frac{q_{\overline{A^{1}_{j}...A^{k-1}}}^{(k)}}{J_{\overline{A^{1}...A^{k-1}}}}
        =\frac{q_{Y_{j}(X)^{k-2}Z}^{(k)}}{J^{yz}(-J^{yz})^{k-2}}
        =\frac{q_{Y_{1}(X)^{k-2}Z}^{(k)}}{J^{yz}(-J^{yz})^{k-2}}
        \label{eq:Nonint_Period4_Step1_q/J}
    \end{align}
    for any $j\in\Lambda$.
\end{proposition}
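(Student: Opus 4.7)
The plan is to mirror the three-step structure used for Proposition~\ref{proposition:Nonint_Period3_Step1}. After Lemmas~\ref{lemma:Nonint_period4_Z...(k)=0} and \ref{lemma:Nonint_period4_XX...(k)=0}, the surviving sequences $\boldsymbol{A}^{k}$ must start with $XZ\ldots$, $YX\ldots$, or $YY\ldots$ and end with $\ldots YX$, $\ldots XZ$, or $\ldots ZZ$. These are precisely the opening and closing patterns produced by the doubling-product construction of Eqs.~(\ref{eq:Period4_DoublingProduct1})--(\ref{eq:Period4_DoublingProduct2}), in which every block is $XX$ or $YZ$.

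First, I would prove that any surviving interior pattern which does not arise as the string of a doubling product forces $q^{(k)}_{\boldsymbol{A}^{k}_{j}}=0$. For each such sequence I would construct a commutator diagram whose $(k+1)$-local output receives only one nonzero contribution (every other potential contributor being killed by the two lemmas), namely the one from $q^{(k)}_{\boldsymbol{A}^{k}_{j}}$ itself. When two contributions exist, a shift-by-one argument analogous to the one used for $q^{(k)}_{X_{j}ZA^{3}\ldots A^{k-1}Y}$ in the Model~\ref{model:period3} proof reduces each interior symbol to a case already settled. The key forbidden interior transition is $(A^{n},A^{n+1})=(X,X)$, which is excluded by the doubling-product rule.

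Next, for each legitimate doubling product $\overline{A^{1}_{j}\ldots A^{k-1}}$, I would select a $(k+1)$-local output reachable by exactly two inputs: $\overline{A^{1}_{j}\ldots A^{k-1}}$ with $\hat{H}$ acting on one edge, and a neighboring doubling product differing in its last block with $\hat{H}$ acting on the opposite edge. Setting their sum to zero yields a first-order recursion on the length of the doubling product, and translation invariance $q^{(k)}_{\overline{A^{1}_{j}\ldots A^{k-1}}}=q^{(k)}_{\overline{A^{1}_{1}\ldots A^{k-1}}}$ emerges from the same diagrams (by choosing $j$ as in the corresponding step of the Model~\ref{model:period3} proof). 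Iterating the recursion from $\overline{Y_{j}(Y)^{k-2}}=Y(X)^{k-2}Z$ anchors every coefficient to $q^{(k)}_{Y_{1}(X)^{k-2}Z}$ with precisely the factors appearing in the definition of $J_{\overline{A^{1}\ldots A^{k-1}}}$.

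The hardest part will be bookkeeping the normalizing phases $c\in\{\pm i\}$ absorbed into each doubling product together with the asymmetric $-J^{yz}$ sign in the $(A^{n},A^{n+1})=(Y,Y)$ branch of the recursion for $J_{\overline{A^{1}\ldots A^{n+1}}}$. These two conventions must conspire through every commutator to reproduce the definitions verbatim, and it is this conspiracy that makes the statement nontrivial. I would handle it by induction on the length of the doubling product, organizing cases by the last pair $(A^{n},A^{n+1})\in\{(Y,X),(X,Y),(Y,Y)\}$, so that only three sign patterns ever need to be verified and then reused throughout.
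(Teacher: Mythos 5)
Your proposal takes essentially the same route as the paper: starting from Lemmas~\ref{lemma:Nonint_period4_Z...(k)=0} and \ref{lemma:Nonint_period4_XX...(k)=0}, the paper likewise ``shifts the symbols $A^{3},\dots,A^{k-1}$ to the left'' exactly as in the period-3 case, using two-term relations on $(k+1)$-local outputs to annihilate every non-doubling-product coefficient and to chain the surviving ones back to $q_{Y_{1}(X)^{k-2}Z}^{(k)}$ with the ratios $J_{\overline{A^{1}\dots A^{k-1}}}$. The paper presents this step at no greater level of detail than you do (it is a one-sentence adaptation of Proposition~\ref{proposition:Nonint_Period3_Step1} plus the definition of the doubling product), so your plan is in substance its proof.
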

Therefore we only need to show that one of these remaining coefficients is zero.

As the second part of the proof, we examine the coefficient $q_{Y_{1}(X)^{k-2}Z}^{(k)}$.
We consider the contribution from the $k$-local input $Y_{j}(X)^{k-2}Z$ to a $k$-local output which can also include the contribution from $(k-1)$-local inputs. For instance,
\begin{align}
    \begin{array}{rlllllll}
         &Y&X&(X)^{k-3}&Z \\
         &X&X& & \\ \hline
        -&Z_{j}&I&(X)^{k-3}&Z
    \end{array}
    \quad
    \begin{array}{rlllllll}
         &X&Z&(X)^{k-3}&Z \\
         &Y&Z& & \\ \hline
         &Z_{j}&I&(X)^{k-3}&Z
    \end{array}
    \quad
    \begin{array}{rlllllll}
         &Z&Y&Y&(X)^{k-4}&Z \\
         & &Y&Z& & \\ \hline
         &Z_{j}&I&X&(X)^{k-4}&Z
    \end{array}
    \quad
    \begin{array}{rlllllll}
         &Z&I&(X)^{k-4}&Z& \\
         & & & &Y&Z\\ \hline
        -&Z_{j}&I&(X)^{k-4}&X&Z
    \end{array}
\end{align}
are the only contribution to the $k$-local output $Z_{j}I(X)^{k-3}Z$.
Note that the contribution from the third diagram vanishes because of Lemma~\ref{lemma:Nonint_period4_Z...(k)=0}, and the contribution from the second diagram satisfies
\begin{align}
    q_{X_{j}Z(X)^{k-3}Z}^{(k)}=-\frac{J^{xx}}{J^{yz}}q_{Y_{1}(X)^{k-2}Z}^{(k)}
\end{align}
because $XZ(X)^{k-3}Z$ can be written as a doubling product $\overline{X(Y)^{k-2}}$ and Eq.~(\ref{eq:Nonint_Period4_Step1_q/J}) in Proposition~\ref{proposition:Nonint_Period4_Step1} is applicable.
Hence we have
\begin{align}
    J^{yz}q_{Z_{j}I(X)^{k-4}Z}^{(k-1)}
    =-2J^{xx}q_{Y_{1}(X)^{k-2}Z}^{(k)}.
    \label{eq:Nonint_period4_Prop2_1}
\end{align}
In a similar manner, we can obtain
\begin{align}
    J^{yz}q_{X_{j}YI(X)^{k-5}Z}^{(k-1)}
    -J^{xx}q_{Z_{j+1}I(X)^{k-4}Z}^{(k-1)}
    &=-\frac{(J^{xx})^2}{J^{yz}}q_{Y_{1}(X)^{k-2}Z}^{(k)}
    \label{eq:Nonint_period4_Prop2_2}\\
    -J^{yz}q_{Y_{j}YYI(X)^{k-6}Z}^{(k-1)}
    -J^{yz}q_{X_{j+1}YI(X)^{k-5}Z}^{(k-1)}
    &=-\frac{(J^{xx})^2}{J^{yz}}q_{Y_{1}(X)^{k-2}Z}^{(k)}
    \label{eq:Nonint_period4_Prop2_3}\\
    -J^{yz}q_{Y_{j}(X)^{n+1}YYI(X)^{k-7-n}Z}^{(k-1)}
    +J^{yz}q_{Y_{j+1}(X)^{n}YYI(X)^{k-6-n}Z}^{(k-1)}
    &=-\frac{(J^{xx})^2}{J^{yz}}q_{Y_{1}(X)^{k-2}Z}^{(k)}\quad\text{for }n=0,...,k-7
    \label{eq:Nonint_period4_Prop2_4}\\
    J^{yz}q_{Y_{j}(X)^{k-6}YYIZ}^{(k-1)}
    &=-\frac{(J^{xx})^2}{J^{yz}}q_{Y_{1}(X)^{k-2}Z}^{(k)}
    \label{eq:Nonint_period4_Prop2_5}
\end{align}
from the diagrams 
\begin{gather}
    \begin{array}{rlllllll}
     & &Z&I&(X)^{k-4}&Z \\
     &X&X& & & \\ \hline
    -&X_{j}&Y&I&(X)^{k-4}&Z
    \end{array}
    \quad
    \begin{array}{rlllllll}
     &X&Y&I&(X)^{k-5}&Z& \\
     & & & & &Y&Z \\ \hline
    -&X_{j}&Y&I&(X)^{k-5}&X&Z
    \end{array}
    \quad
    \begin{array}{rlllllll}
     &X&Z&X&(X)^{k-4}&Z \\
     & &X&X& & \\ \hline
     &X_{j}&Y&I&(X)^{k-4}&Z
    \end{array}
    \quad
    \begin{array}{rlllllll}
     &X&Y&Y&Y&(X)^{k-5}&Z& \\
     & & &Y&Z& & \\ \hline
    -&X_{j}&Y&I&X&(X)^{k-5}&Z
    \end{array},
\end{gather}
\begin{gather}
    \begin{array}{rlllllll}
     & &X&Y&I&(X)^{k-5}&Z \\
     &Y&Z& & & \\ \hline
    -&Y_{j}&Y&Y&I&(X)^{k-5}&Z
    \end{array}
    \quad
    \begin{array}{rlllllll}
     &Y&Y&Y&I&(X)^{k-6}&Z& \\
     & & & & & &Y&Z \\ \hline
    -&Y_{j}&Y&Y&I&(X)^{k-6}&X&Z
    \end{array}
    \quad
    \begin{array}{rlllllll}
     &Y&Y&Z&X&(X)^{k-5}&Z \\
     & & &X&X& & \\ \hline
     &Y_{j}&Y&Y&I&(X)^{k-5}&Z
    \end{array}
    \quad
    \begin{array}{rlllllll}
     &Y&Y&Y&Y&Y&(X)^{k-6}&Z\\
     & & & &Y&Z& & \\ \hline
     &Y_{j}&Y&Y&I&X&(X)^{k-6}&Z
    \end{array},
\end{gather}
\begin{gather}
    \begin{array}{rlllllllllll}
     & &Y&(X)^{n}&Y&Y&I&(X)^{k-6-n}&Z \\
     &Y&Z& & & & & & \\ \hline
     &Y_{j}&X&(X)^{n}&Y&Y&I&(X)^{k-6-n}&Z
    \end{array}
    \quad
    \begin{array}{rlllllllllll}
     &Y&(X)^{n+1}&Y&Y&I&(X)^{k-7-n}&Z& \\
     & & & & & & &Y&Z\\ \hline
     &Y_{j}&(X)^{n+1}&Y&Y&I&(X)^{k-7-n}&X&Z
    \end{array}
    \nonumber\\[5pt]
    \begin{array}{rlllllllllll}
     &Y&(X)^{n+1}&Y&Z&X&(X)^{k-6-n}&Z \\
     & & & &X&X& & \\ \hline
     &Y_{j}&(X)^{n+1}&Y&Y&I&(X)^{k-6-n}&Z
    \end{array}
    \quad
    \begin{array}{rlllllllllll}
     &Y&(X)^{n+1}&Y&Y&Y&Y&(X)^{k-7-n}&Z\\
     & & & & &Y&Z& & \\ \hline
     &Y_{j}&(X)^{n+1}&Y&Y&I&X&(X)^{k-7-n}&Z
    \end{array},
\end{gather}
\begin{gather}
    \begin{array}{rlllllll}
     & &Y&(X)^{k-6}&Y&Y&I&Z \\
     &Y&Z& & & & & \\ \hline
     &Y_{j}&X&(X)^{k-6}&Y&Y&I&Z
    \end{array}
    \quad
    \begin{array}{rlllllll}
     &Y&(X)^{k-5}&Y&Z&X&Z \\
     & & & &X&X& \\ \hline
     &Y_{j}&(X)^{k-5}&Y&Y&I&Z
    \end{array}
    \quad
    \begin{array}{rlllllll}
     &Y&(X)^{k-6}&Y&Y&X&Y\\
     & & & & &X&X\\ \hline
    -&Y_{j}&(X)^{k-6}&Y&Y&I&Z
    \end{array},
\end{gather}
respectively.
Because the sum of the left-hand sides of Eq.~(\ref{eq:Nonint_period4_Prop2_1}) $\times J^{xx}/J^{yz}$ and of Eqs.~(\ref{eq:Nonint_period4_Prop2_2})--(\ref{eq:Nonint_period4_Prop2_5}) (by choosing the site $j$ appropriately) become zero, we have
\begin{align}
    0
    &= J^{yz}q_{Y_{1}(X)^{k-6}YYIZ}^{(k-1)}
    +\sum_{n=0}^{k-7}J^{yz}\bigl(
    -q_{Y_{n+1}(X)^{k-n-6}YYI(X)^{n}Z}^{(k-1)}
    +q_{Y_{n+2}(X)^{k-n-7}YYI(X)^{n+1}Z}^{(k-1)}
    \bigr)
    \nonumber\\
    &\hspace{10pt}+J^{yz}\bigl(
    -q_{Y_{k-5}YYI(X)^{k-6}Z}^{(k-1)}
    -q_{X_{k-4}YI(X)^{k-5}Z}^{(k-1)}
    \bigr)
    +\bigl(J^{yz}q_{X_{k-4}YI(X)^{k-5}Z}^{(k-1)}
    -J^{xx}q_{Z_{k-3}I(X)^{k-4}Z}^{(k-1)}\bigr)
    \nonumber\\
    &\hspace{10pt}+J^{xx}q_{Z_{k-3}I(X)^{k-4}Z}^{(k-1)}\\
    &=-(k-1)\frac{(J^{xx})^2}{J^{yz}}q_{Y_{1}(X)^{k-2}Z}^{(k)}.
\end{align}
Thus we obtain the following proposition:
\begin{proposition}\label{proposition:Nonint_Period4_Step2}
    For $3\le k\le N/2$, the solution of Eq.~(\ref{eq:Commutation_Q_H}) satisfies
    \begin{align}
        q_{Y_{1}(X)^{k-2}Z}^{(k)}=0.
    \end{align}
\end{proposition}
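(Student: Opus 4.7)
The plan is to adapt the strategy that established Proposition~\ref{proposition:Nonint_Period3_Step2}: construct a linear combination of the constraints implied by $[\hat{Q},\hat{H}]=0$ whose telescoping cancellations eliminate every $(k-1)$-local coefficient and leave only a nonzero multiple of $q_{Y_{1}(X)^{k-2}Z}^{(k)}$. Concretely, I would scan the $k$-local outputs reachable from the target $k$-local input $Y_{j}(X)^{k-2}Z$ and isolate a family that receives contributions from at most one or two $(k-1)$-local coefficients. The natural candidates, suggested by the commutator diagrams in the preceding part of the proof, are the outputs $Z_{j}I(X)^{k-3}Z$, $X_{j}YI(X)^{k-4}Z$, the one-parameter family $Y_{j}(X)^{n+1}YYI(X)^{k-6-n}Z$ for $n=0,\ldots,k-7$, and the terminal output $Y_{j}(X)^{k-5}YYIZ$, exactly mirroring the family used in the Period~3 case.

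For each such output I would enumerate every $(k-1)$- or $k$-local input whose commutator with a Hamiltonian term lands on it. Lemmas~\ref{lemma:Nonint_period4_Z...(k)=0} and~\ref{lemma:Nonint_period4_XX...(k)=0} kill most $k$-local candidates outright, and Proposition~\ref{proposition:Nonint_Period4_Step1} pins down each surviving doubling-product contribution as a definite multiple of $q_{Y_{1}(X)^{k-2}Z}^{(k)}$ via the ratio~\eqref{eq:Nonint_Period4_Step1_q/J}. The crucial hidden ingredient is the secondary $k$-local contribution $q_{X_{j}Z(X)^{k-3}Z}^{(k)}$ to the first output, which is the doubling product $\overline{X(Y)^{k-2}}$ and therefore equals $-(J^{xx}/J^{yz})\,q_{Y_{1}(X)^{k-2}Z}^{(k)}$; this is what produces the factor of $2$ in~\eqref{eq:Nonint_period4_Prop2_1}. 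The resulting constraints are precisely Eqs.~\eqref{eq:Nonint_period4_Prop2_1}--\eqref{eq:Nonint_period4_Prop2_5}.

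Next I would form the linear combination Eq.~\eqref{eq:Nonint_period4_Prop2_1}$\,\cdot J^{xx}/J^{yz}$ plus Eqs.~\eqref{eq:Nonint_period4_Prop2_2}--\eqref{eq:Nonint_period4_Prop2_5}, after shifting the site index in each equation so that the $(k-1)$-local coefficient $q_{Z_{j}I(X)^{k-4}Z}^{(k-1)}$, $q_{X_{j}YI(X)^{k-5}Z}^{(k-1)}$ and each member of the $Y_{j}(X)^{n}YYI(\cdots)Z$ family cancels between two consecutive equations. The left-hand sides then telescope to zero while the right-hand sides add up to $-(k-1)(J^{xx})^{2}/J^{yz}\cdot q_{Y_{1}(X)^{k-2}Z}^{(k)}$. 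Since $J^{xx},J^{yz}\neq 0$ and $k\geq 3$, the coefficient is nonzero and we conclude $q_{Y_{1}(X)^{k-2}Z}^{(k)}=0$.

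The main obstacle is the combinatorial bookkeeping: one has to verify that the proposed family of $k$-local outputs is closed in the sense that, after applying Lemmas~\ref{lemma:Nonint_period4_Z...(k)=0}--\ref{lemma:Nonint_period4_XX...(k)=0} and the doubling-product identification of Proposition~\ref{proposition:Nonint_Period4_Step1}, the only surviving contributions are the ones listed, and one has to get the signs and $\pm i$ factors from the doubling-product convention~\eqref{eq:Period4_DoublingProduct1}--\eqref{eq:Period4_DoublingProduct2} exactly right. Small-$k$ boundary cases (for example $k=3,4,5$, where the generic range $n=0,\ldots,k-7$ is empty) must be treated separately by directly inspecting the degenerate family; fortunately in those cases the number of surviving $k$-local outputs is so small that the telescoping reduces to a one- or two-term identity.
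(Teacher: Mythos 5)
Your proposal is correct and follows essentially the same route as the paper's proof: the same family of $k$-local outputs ($Z_{j}I(X)^{k-3}Z$, $X_{j}YI(X)^{k-4}Z$, the telescoping $Y$-family, and the terminal output), the same key observation that $X_{j}Z(X)^{k-3}Z=\overline{X(Y)^{k-2}}$ supplies the factor of $2$ via Proposition~\ref{proposition:Nonint_Period4_Step1}, and the same linear combination of Eqs.~(\ref{eq:Nonint_period4_Prop2_1})--(\ref{eq:Nonint_period4_Prop2_5}) yielding $0=-(k-1)(J^{xx})^{2}/J^{yz}\cdot q_{Y_{1}(X)^{k-2}Z}^{(k)}$. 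What remains is only the diagrammatic bookkeeping you already flag, which the paper carries out explicitly.
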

By combining Propositions~\ref{proposition:Nonint_Period3_Step1} and \ref{proposition:Nonint_Period3_Step2}, we have
\begin{align}
    q_{\boldsymbol{A}^{k}_{j}}^{(k)}=0
    \quad \text{for all }j\text{ and }\boldsymbol{A}^{k}.
    \label{eq:Nonint_Period4_q^k=0}
\end{align}
This means that $\hat{Q}$ is a $(k-1)$-local conserved quantity.
Applying the same argument to $k-1$, $k-2$,..., and $3$-local conserved quantity, we can show that any $k$-local conserved quantity with $k\le N/2$ have to be a $2$-local conserved quantity.

As the third part of the proof, it is straightforward to show the following proposition:
\begin{proposition}\label{proposition:Nonint_Period4_Step3}
    Any $2$-local conserved quantity $\hat{Q}$ can be written as
    \begin{align}
        \hat{Q}=a\hat{H}+b\hat{I},
    \end{align}
    with arbitrary constants $a,b\in\mathbb{R}$.
\end{proposition}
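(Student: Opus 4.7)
The plan is to follow the same strategy used in Proposition~\ref{proposition:Nonint_Period3_Step3}, adapted to $\hat{H}_{3}=\sum_{j}(J^{xx}\hat{\sigma}_{j}^{x}\hat{\sigma}_{j+1}^{x}+J^{yz}\hat{\sigma}_{j}^{y}\hat{\sigma}_{j+1}^{z})$. Lemma~\ref{lemma:Nonint_period4_Z...(k)=0} applied with $k=2$ immediately gives $q_{Z_{j}A}^{(2)}=0$ and $q_{A_{j}Y}^{(2)}=0$, so that a generic $2$-local conserved $\hat{Q}$ is parameterized by the four $2$-local coefficients $q_{X_{j}X}^{(2)}$, $q_{X_{j}Z}^{(2)}$, $q_{Y_{j}X}^{(2)}$, $q_{Y_{j}Z}^{(2)}$, the three $1$-local coefficients $q_{X_{j}}^{(1)}$, $q_{Y_{j}}^{(1)}$, $q_{Z_{j}}^{(1)}$, and the identity coefficient.

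I would next isolate the $3$-local output $Y_{j-1}Y_{j}Z_{j+1}$ of $[\hat{Q},\hat{H}_{3}]$. Enumerating all structural ways to produce this Pauli string by commuting an allowed $2$-local input with a nearest-neighbor term of $\hat{H}_{3}$, the only contributor is $[X_{j}Z_{j+1},Y_{j-1}Z_{j}]$, so $q_{X_{j}Z}^{(2)}=0$. A parallel argument on the output $Y_{j}Z_{j+1}Z_{j+2}$, uniquely sourced by $[Y_{j}X_{j+1},Y_{j+1}Z_{j+2}]$, gives $q_{Y_{j}X}^{(2)}=0$. With these two coefficients removed, four further $3$-local outputs, namely $Y_{j-1}Y_{j}X_{j+1}$, $X_{j}Z_{j+1}Z_{j+2}$, $X_{j-1}Z_{j}Z_{j+1}$, and $Y_{j}Y_{j+1}X_{j+2}$, each receive contributions only from $q_{X_{j}X}^{(2)}$ and $q_{Y_{j}Z}^{(2)}$, and their vanishing yields relations of the form $J^{yz}q_{X_{j}X}^{(2)}=J^{xx}q_{Y_{j\pm1}Z}^{(2)}$.

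Those relations on their own merely force $q_{X_{j}X}^{(2)}$ and $q_{Y_{j}Z}^{(2)}$ to be period-$2$ in $j$, which would spuriously leave a second conserved operator. The decisive extra input is the $3$-local output $Y_{j-1}X_{j}Z_{j+1}$, which is produced both by $q_{Y_{j}Z}^{(2)}$ through $[Y_{j}Z_{j+1},Y_{j-1}Z_{j}]$ and by $q_{Y_{j-1}Z}^{(2)}$ through $[Y_{j-1}Z_{j},Y_{j}Z_{j+1}]$; requiring the cancellation of these two contributions forces $q_{Y_{j}Z}^{(2)}=q_{Y_{j-1}Z}^{(2)}$. This promotes the period-$2$ structure to full $j$-independence, and the earlier relations then give $q_{X_{j}X}^{(2)}\equiv(J^{xx}/J^{yz})q_{Y_{j}Z}^{(2)}$, so the $2$-local part of $\hat{Q}$ is a scalar multiple of $\hat{H}_{3}$.

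Finally, direct computation shows that $[X_{j}X_{j+1},\hat{H}_{3}]$ and $[Y_{j}Z_{j+1},\hat{H}_{3}]$ generate no $2$-local outputs (the same-site commutators $[XX,YZ]$, $[XX,XX]$, $[YZ,YZ]$ all vanish), so every $2$-local Pauli string in $[\hat{Q},\hat{H}_{3}]$ is produced solely by a $1$-local input. Each such string, for example $Y_{j-1}Y_{j}$, $Z_{j}Z_{j+1}$, $X_{j-1}Z_{j}$, $Y_{j-1}X_{j}$, $X_{j-1}Y_{j}$, $X_{j}Z_{j+1}$, is traceable to a unique $q_{X_{j}}^{(1)}$, $q_{Y_{j}}^{(1)}$, or $q_{Z_{j}}^{(1)}$, which must therefore vanish, and $\hat{Q}=a\hat{H}+b\hat{I}$ follows. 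The main obstacle is recognizing the extra relation supplied by the $Y_{j-1}X_{j}Z_{j+1}$ output: without it, a parity-alternating operator would appear to be conserved, so the proof must enumerate the richer commutator structure of $q_{Y_{j}Z}^{(2)}$ with care.
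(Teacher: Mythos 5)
Your proposal is correct and follows essentially the same route the paper takes: the paper omits the details for this proposition (calling it ``straightforward''), but its proof of the analogous Proposition~\ref{proposition:Nonint_Period3_Step3} for Model~\ref{model:period3} uses exactly this template of isolating uniquely-sourced $3$-local and $2$-local outputs of $[\hat{Q},\hat{H}]$, and your adaptation to $\hat{H}_{3}$ checks out, including the correct identification that Lemma~\ref{lemma:Nonint_period4_Z...(k)=0} at $k=2$ leaves only $q_{X_{j}X}^{(2)},q_{X_{j}Z}^{(2)},q_{Y_{j}X}^{(2)},q_{Y_{j}Z}^{(2)}$ and the $1$-local coefficients. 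Your observation that the relations $J^{yz}q_{X_{j}X}^{(2)}=J^{xx}q_{Y_{j\pm1}Z}^{(2)}$ alone leave a parity-alternating freedom, which is only removed by the cancellation in the output $Y_{j-1}X_{j}Z_{j+1}$ coming from the two overlapping $YZ$ commutators, is a genuine and correct refinement of the detail the paper leaves implicit.
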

From Eq.~(\ref{eq:Nonint_Period4_q^k=0}) and Proposition~\ref{proposition:Nonint_Period4_Step3}, we obtain Theorem~\ref{theorem:Nonint_Period4}.
\end{proof}

\section{Level spacing statistics of Model~\ref{model:period1}}
In the paragraph below Theorem~\ref{theorem:nonintegrability} of the main text, we explained that Model~\ref{model:period1} is expected to be nonintegrable.
To confirm this expectation, we numerically investigate the level spacing statistics~\cite{Mehta2004,Atas2013} of Model~\ref{model:period1} by exact diagonalization. Figure~\ref{fig:LevelSpacing_model1} plots the distribution of the ratio of consecutive level spacings~\cite{Atas2013}
\begin{align}
    r = \min \Bigl\{\frac{E_{n+2}^{q}-E_{n+1}^{q}}{E_{n+1}^{q}-E_{n}^{q}}, \frac{E_{n+1}^{q}-E_{n}^{q}}{E_{n+2}^{q}-E_{n+1}^{q}}\Bigr\},
\end{align}
constructed from eigenenergy $E_{n}^{q}$ in the eigenspace of translation with momentum $q=2\pi/N$~\footnote{To remove the possibility of accidental discrete symmetries, we avoid eigenspace of special momentum such as $0$ and $\pi$.}
(sorted in descending order). We set the parameters as $J^{xy}=e,J^{yx}=1,J^{yz}=\pi,J^{zy}=0,h^{y}=\ln 7$~\footnote{We can always take $J^{zy}=0$ by an appropriate rotation around $y$ axis.}. The plot is well described by the Gaussian unitary ensemble distribution~\cite{Atas2013} (dashed line) and well separated from the Poisson distribution~\cite{Atas2013} (dotted line). This shows that Model~\ref{model:period1} (with above mentioned parameters) has no nontrivial local conserved quantity, implying nonintegrability of the model.

\begin{figure}
    \centering
    \includegraphics{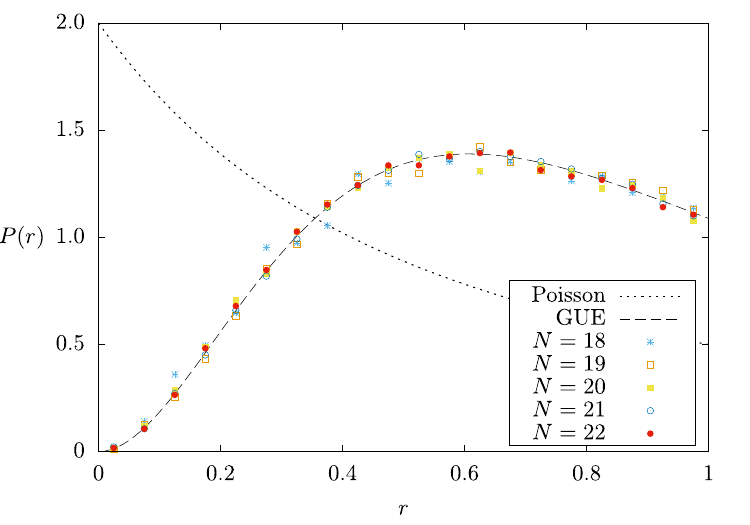}
    \caption{Distribution of the ratio of consecutive level spacings of Model~\ref{model:period1} in Theorem~\ref{theorem:model} in the main text. We use eigenenergies in the subspace of momentum $2\pi/N$.}
    \label{fig:LevelSpacing_model1}
\end{figure}

\section{Proof of Theorem~\ref{theorem:finite-temperature}}
\begin{proof}
Let $\vec{\sigma}=(\sigma_{1},\sigma_{2},\cdots,\sigma_{N})$ be a bit string of length $N$. Then, using $\vec{\sigma}$, we define the the computational basis as
\begin{align}
    \ket{\vec{\sigma}} &= \ket{\sigma_{1}} \otimes \ket{\sigma_{2}} \otimes \cdots \otimes \ket{\sigma_{N}}.
    \label{eq:ComputationalBasis}
\end{align}

As a preparation, we clarify the properties of the Hamiltonian satisfying the assumptions of the theorem. Since Pauli strings form an orthogonal basis of operators on the whole Hilbert space, the Hamiltonian $\hat{H}$ can be uniquely expressed as a linear combination of them:
\begin{align}
    \hat{H}
    &= \sum_{X(\subset\Lambda)} \sum_{\vec{\mu}\in\{x,y,z\}^{X}}
    J_{X}^{\vec{\mu}} \bigotimes_{j\in X} \hat{\sigma}_{j}^{\mu_{j}}.
    \label{eq:finite-temperature-Hamiltonian}
\end{align}
With this notation, $\hat{H}_\mathrm{int}$ in the main text is written as
\begin{align}
    \hat{H}_\mathrm{int}
    &= \sum_{\text{$X \cap L \neq \emptyset$ and $X \cap R \neq \emptyset$}} \sum_{\vec{\mu}\in\{x,y,z\}^{X}}
    J_{X}^{\vec{\mu}} \bigotimes_{j \in X} \hat{\sigma}_{j}^{\mu_{j}},
\end{align}
where $L=\{1,2,\cdots,N/2\}$ and $R=\{N/2+1,N/2+2,\cdots,N\}$. Since, $\hat{H}$ is translation invariant by assumption, we have
\begin{align}
    \hat{H} - \hat{H}_\mathrm{int}
    = \hat{H}_{N/2,\mathrm{OBC}} \otimes \hat{I}_R
    + \hat{I}_L \otimes \hat{H}_{N/2,\mathrm{OBC}}.
    \label{eq:H_int-H_OBC}
\end{align}
Here, $\hat{H}_{N/2,\mathrm{OBC}}$ is the Hamiltonian for the same system of length $N/2$, but with open boundary conditions rather than periodic boundary conditions:
\begin{align}
    \hat{H}_{N/2,\mathrm{OBC}}
    &= \sum_{X \subset L} \sum_{\vec{\mu}\in\{x,y,z\}^{X}}
    J_{X}^{\vec{\mu}} \bigotimes_{j \in X} \hat{\sigma}_{j}^{\mu_{j}}.
\end{align}
Under the complex conjugation with respect to the computational basis, the Pauli string behaves as
\begin{align}
    \bigotimes_{j \in X} \hat{\sigma}_{j}^{\mu_{j}}
    \longmapsto (-1)^{P_{X}^{\vec{\mu}}} \bigotimes_{j \in X} \hat{\sigma}_{j}^{\mu_{j}},
\end{align}
where $P_{X}^{\vec{\mu}} = |\{ \mu_j | j \in X, \mu_j = y \}|$ is the number of Pauli matrices along the $y$-direction, $\hat{\sigma}_j^y$, within the Pauli string. Thus, the complex conjugation transforms the Hamiltonian as
\begin{align}
    \hat{H}
    &= \sum_{X(\subset\Lambda)}
    \sum_{\substack{
        \vec{\mu}\in\{x,y,z\}^{X}\\
        \text{s.t. $P_{X}^{\vec{\mu}}$ is even}
    }}
    J_{X}^{\vec{\mu}} \bigotimes_{j \in X} \hat{\sigma}_{j}^{\mu_{j}}
    + \sum_{X(\subset\Lambda)}
    \sum_{\substack{
        \vec{\mu}\in\{x,y,z\}^{X}\\
        \text{s.t. $P_{X}^{\vec{\mu}}$ is odd}
    }}
    J_{X}^{\vec{\mu}} \bigotimes_{j \in X} \hat{\sigma}_{j}^{\mu_{j}} \nonumber\\
    \longmapsto \qquad
    \hat{H}^*
    &= \sum_{X(\subset\Lambda)}
    \sum_{\substack{
        \vec{\mu}\in\{x,y,z\}^{X}\\
        \text{s.t. $P_{X}^{\vec{\mu}}$ is even}
    }}
    J_{X}^{\vec{\mu}} \bigotimes_{j \in X} \hat{\sigma}_{j}^{\mu_{j}}
    - \sum_{X(\subset\Lambda)}
    \sum_{\substack{
        \vec{\mu}\in\{x,y,z\}^{X}\\
        \text{s.t. $P_{X}^{\vec{\mu}}$ is odd}
    }}
    J_{X}^{\vec{\mu}} \bigotimes_{j \in X} \hat{\sigma}_{j}^{\mu_{j}}.
\end{align}
Since the expansion in terms of Pauli strings is unique, for $\hat{H}$ to be a real matrix in the computational basis (i.e., $\hat{H}=\hat{H}^*$), $J_{X}^{\vec{\mu}}$ must be zero when $P_{X}^{\vec{\mu}}$ is odd. Hence, we obtain
\begin{align}
    \hat{H}_{N/2,\mathrm{OBC}}
    = \sum_{X \subset L}
    \sum_{\substack{
        \vec{\mu}\in\{x,y,z\}^{X}\\
        \text{s.t. $P_{X}^{\vec{\mu}}$ is even}
    }}
    J_{X}^{\vec{\mu}} \bigotimes_{j \in X} \hat{\sigma}_{j}^{\mu_{j}}.
\end{align}
Therefore, $\hat{H}_{N/2,\mathrm{OBC}}$ is also a real matrix in the computational basis.

We now proceed to prove Eq.~\eqref{eq:beta-state_Gibbs}. The EAP state $\ket{1;00}$ can be expanded in the computational basis for subsystems $L$ and $R$ as
\begin{align}
  \ket{1;00} \propto \sum_{\vec{\sigma}} \ket{\vec{\sigma}}_L \otimes \ket{\vec{\sigma}}_R.
\end{align}
Thus, using Eq.~\eqref{eq:H_int-H_OBC}, we have
\begin{align}
  \ket{\tilde{\beta}}
  &\propto \sum_{\vec{\sigma}}
  e^{- \frac{1}{4} \beta \hat{H}_{N/2,\mathrm{OBC}}} \ket{\vec{\sigma}}
  \otimes e^{- \frac{1}{4} \beta \hat{H}_{N/2,\mathrm{OBC}}} \ket{\vec{\sigma}} \nonumber\\
  &= \sum_{\vec{\sigma}}
  e^{- \frac{1}{4} \beta \hat{H}_{N/2,\mathrm{OBC}}} \ket{\vec{\sigma}}
  \otimes
  \left( \sum_{\vec{\sigma}'} {\ket{\vec{\sigma}'}\bra{\vec{\sigma}'}} \right)
  e^{- \frac{1}{4} \beta \hat{H}_{N/2,\mathrm{OBC}}} \ket{\vec{\sigma}} \nonumber\\
  &= \sum_{\vec{\sigma},\vec{\sigma}'}
  e^{- \frac{1}{4} \beta \hat{H}_{N/2,\mathrm{OBC}}} \ket{\vec{\sigma}}
  \otimes
  \braket{\vec{\sigma}'|e^{- \frac{1}{4} \beta \hat{H}_{N/2,\mathrm{OBC}}}|\vec{\sigma}} \ket{\vec{\sigma}'}.
  \label{eq:beta-tilde_purification}
\end{align}
Since $\hat{H}_{N/2,\mathrm{OBC}}$ is a real matrix with respect to the computational basis, it holds that
\begin{align}
  \braket{\vec{\sigma}|e^{- \frac{1}{4} \beta \hat{H}_{N/2,\mathrm{OBC}}}|\vec{\sigma}'}^*
  = \braket{\vec{\sigma}|e^{- \frac{1}{4} \beta \hat{H}_{N/2,\mathrm{OBC}}}|\vec{\sigma}'}
\end{align}
for any $\ket{\vec{\sigma}}$ and $\ket{\vec{\sigma}'}$.
Substituting this into Eq.~\eqref{eq:beta-tilde_purification}, we obtain
\begin{align}
  \ket{\tilde{\beta}}
  &\propto \sum_{\vec{\sigma},\vec{\sigma}'}
  e^{- \frac{1}{4} \beta \hat{H}_{N/2,\mathrm{OBC}}} \ket{\vec{\sigma}}\bra{\vec{\sigma}} e^{- \frac{1}{4} \beta \hat{H}_{N/2,\mathrm{OBC}}} \ket{\vec{\sigma}'}
  \otimes \ket{\vec{\sigma}'}
  = \sum_{\vec{\sigma}'}
  e^{-\frac{1}{2}\beta\hat{H}_{N/2,\mathrm{OBC}}} \ket{\vec{\sigma}'}
  \otimes \ket{\vec{\sigma}'}.
\end{align}
Therefore, for any observable $\hat{O}$ defined on the subsystem $L$, we get
\begin{align}
    \braket{\tilde{\beta}|\hat{O}|\tilde{\beta}}
    = \mathrm{Tr} [ \hat{\rho}_{N/2,\mathrm{OBC}}^\mathrm{can} \hat{O} ],
\end{align}
where $\hat{\rho}_{N/2,\mathrm{OBC}}^\mathrm{can}$ is the Gibbs state for $\hat{H}_{N/2,\mathrm{OBC}}$. Hence the thermodynamic limit yields
\begin{align}
    \lim_{N\to\infty} \braket{\tilde{\beta}|\hat{O}|\tilde{\beta}}
    = \lim_{N\to\infty} \mathrm{Tr} [ \hat{\rho}_{N/2,\mathrm{OBC}}^\mathrm{can} \hat{O} ].
\end{align}
In the thermodynamic limit, the Gibbs state converges to the KMS state regardless of whether periodic or open boundary conditions are imposed. Since we are now considering a one-dimensional system, there exists a unique KMS state at finite temperature~\cite{Araki1969,Araki1975}. Consequently, expectation values of local observables in the Gibbs state do not depend on boundary conditions in the thermodynamic limit. Thus, using Eq.~\eqref{eq:beta-tilde_beta}, we finally obtain
\begin{align}
    \lim_{N\to\infty} \braket{\beta|\hat{O}|\beta}
    = \lim_{N\to\infty} \mathrm{Tr} [ \hat{\rho}^\mathrm{can} \hat{O} ].
\end{align}
\end{proof}

\section{Degeneracy}
\subsection{Degeneracy in Model~\ref{model:period1}}
According to Theorem~\ref{theorem:model}, the EAP state $\ket{1;00}$ is an energy eigenstate with an eigenvalue $E=0$ of Model~\ref{model:period1} for arbitrary parameters. Without loss of generality, we can take $J^{zy}=0$ by an appropriate rotation around $y$ axis, so we set the parameters of Model~\ref{model:period1} as $J^{xy}=e,J^{yx}=1,J^{yz}=\pi,J^{zy}=0,h^{y}=\ln 7$. By the exact diagonalization, we find that $\ket{1;00}$ is doubly degenerate in the zero-momentum sector for $N=10,12,14,16$. Let us investigate entanglement properties of states in this eigenspace, which we will write as $\mathcal{H}_{k=0,E=0}$. Let $\ket{\perp}$ denote the state orthogonal to $\ket{1;00}$ in $\mathcal{H}_{k=0,E=0}$. All states in $\mathcal{H}_{k=0,E=0}$ can be expressed as a linear combination of $\ket{1;00}$ and $\ket{\perp}$:
\begin{align}
    \ket{\theta,\lambda}
    = \sqrt{1-\lambda} \ket{1;00} + e^{2 \pi i \theta} \sqrt{\lambda} \ket{\perp} \qquad (0 \leq \lambda \leq 1, 0 \leq \theta \leq 1).
    \label{eq:theta-lambda}
\end{align}

First, we investigate the bipartite entanglement in $\ket{\perp}$ (corresponding to the case of $\lambda=1$). We plot in Fig.~\ref{fig:bipartite-entanglement} the entanglement entropy $S_A$ between a subsystem $A=\{1,2,\cdots,\ell\}$ of length $\ell$ and its complement as a function of $\ell$. It can be seen that $\ket{\perp}$ is almost maximally entangled, but is different from EAP states.

Next, we confirm that all states in the eigenspace $\mathcal{H}_{k=0,E=0}$ are maximally entangled states. To investigate the coefficient of the volume-law scaling, we compute the entanglement entropy of $\ket{\theta,\lambda}$ between a subsystem of length $1$ and its complement for various $\lambda$ and $\theta$ and show in Fig.~\ref{fig:volume-low-coefficient} the deviation from the coefficient of the maximally entangled state, $1 - S_{A=\{1\}} / \log 2$. It can be observed that for all states in $\mathcal{H}_{k=0,E=0}$, the volume-law coefficients are significantly close to the maximal coefficient.

Thus, there are not any low entangled states in the eigenspace, and hence the EAP state is not a superposition of such states.

\begin{figure}
  \begin{minipage}[b]{0.58\columnwidth}
    \centering
    \includegraphics[width=\columnwidth]{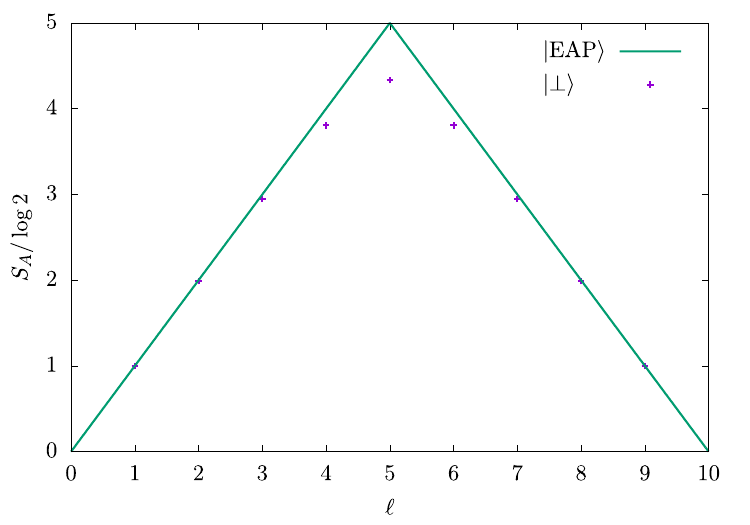}
    \caption{Entanglement entropy of the EAP state and its orthogonal degenerate state $\ket{\perp}$ between a subsystem $A=\{1,2,\cdots,\ell\}$ of length $\ell$ and its complement as a function of $\ell$ for Model~\ref{model:period1} with $N=10$. We set the parameters as $J^{xy}=e,J^{yx}=1,J^{yz}=\pi,J^{zy}=0,h^{y}=\ln 7$.}
    \label{fig:bipartite-entanglement}
  \end{minipage}
  \hfill
  \begin{minipage}[b]{0.38\columnwidth}
    \centering
    \includegraphics[width=\columnwidth]{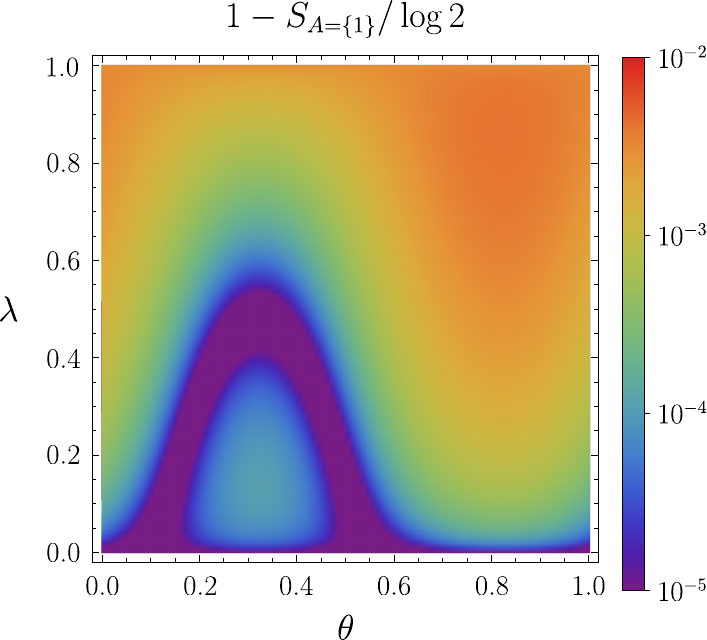}
    \caption{Deviation of the volume-law coefficient of the entanglement entropy of zero-energy eigenstates $\ket{\theta,\lambda}$ defined by Eq.~\eqref{eq:theta-lambda} in the zero-momentum sector of Model~\ref{model:period1} from that of the maximally entangled state. We set the parameters as $J^{xy}=e,J^{yx}=1,J^{yz}=\pi,J^{zy}=0,h^{y}=\ln 7$ and $N=10$.}
    \label{fig:volume-low-coefficient}
  \end{minipage}
\end{figure}

\subsection{Nondegenerate Hamiltonian with next-nearest-neighbor interactions}
In this subsection, by extending the Hamiltonian~(\ref{eq:H_period2_11}) to the next-nearest-neighbor interacting one, we provide a Hamiltonian having an EAP state as an eigenstate that is nondegenerate in the corresponding momentum sector.

Suppose that $\hat{H}$ is translation invariant and satisfies $J_{X}^{\vec{\mu}}=0$ for any subset $X$ with $D(X)\ge 4$. Then, it can be characterized by $48$ coupling constants, $J^{\mu\lambda\nu}$, $J^{\mu\nu}$ and $h^{\mu}$ ($\mu,\nu=x,y,z$ and $\lambda=x,y,z,0$, where $\hat{\sigma}^{0}:=\hat{1}$). From Theorem~\ref{theorem:eigenstate-condition} of the main text, it is straightforward to show that, when $N/2$ is a multiple of $2$, the EAP states $\ket{2;11,10}$ and $\ket{2;10,11}$ are eigenstates of $\hat{H}$ if and only if $\hat{H}$ can be written as
\begin{align}
    \hat{H} &= \sum_{j=1}^{N} \bigl(
        J^{xzx} \hat{\sigma}_{j}^{x} \hat{\sigma}_{j+1}^{z} \hat{\sigma}_{j+2}^{x}
        + J^{yzy} \hat{\sigma}_{j}^{y} \hat{\sigma}_{j+1}^{z} \hat{\sigma}_{j+2}^{y}
        + J^{xzy} \hat{\sigma}_{j}^{x} \hat{\sigma}_{j+1}^{z} \hat{\sigma}_{j+2}^{y}
        + J^{yzx} \hat{\sigma}_{j}^{y} \hat{\sigma}_{j+1}^{z} \hat{\sigma}_{j+2}^{x}
                \nonumber\\
        &\hspace{36pt}+ J^{zzz} \hat{\sigma}_{j}^{z} \hat{\sigma}_{j+1}^{z} \hat{\sigma}_{j+2}^{z}
        +J^{xx} \hat{\sigma}_{j}^{x} \hat{\sigma}_{j+1}^{x}
        + J^{yy} \hat{\sigma}_{j}^{y} \hat{\sigma}_{j+1}^{y}
        + J^{xy} \hat{\sigma}_{j}^{x} \hat{\sigma}_{j+1}^{y}
        + J^{yx} \hat{\sigma}_{j}^{y} \hat{\sigma}_{j+1}^{x}
        + h^{z} \hat{\sigma}_{j}^{z}
    \bigr).
    \label{eq:H_period2_11_NNN}
\end{align}
Here all parameters are arbitrary, and hence it is an extension of Eq.~(\ref{eq:H_period2_11}).

Because the EAP states $\ket{2;11,10}$ and $\ket{2;10,11}$ are related to each other by translation $\hat{\mathcal{T}}$ as
\begin{align}
    \hat{\mathcal{T}}\ket{2;11,10}
    &=+\ket{2;10,11},
    \label{eq:EAP_period2_11_Translation1}\\
    \hat{\mathcal{T}}\ket{2;10,11}
    &=-\ket{2;11,10},
    \label{eq:EAP_period2_11_Translation2}
\end{align}
their superposition $(\ket{2;11,10}\pm i\ket{2;10,11})/\sqrt{2}$ is included in the eigenspace of translation with the momentum $k=\pm \pi/2$. Therefore, we investigate degeneracy of energy eigenvalues in the subspace of $k=\pi/2$. We set $h^z=1$, $J^{xx}=e$, $J^{xy}=\pi$, $J^{yx}=\ln 2$, $J^{yy}=\ln 3$, $J^{xzx}=\ln 5$, $J^{xzy}=\ln 7$, $J^{yzx}=\ln 11$, $J^{yzy}=\ln 13$, and $J^{zzz}=\ln 17$. By exact diagonalization, we numerically find that, at least for $N=8,12,16,20$, the eigenvalue $E=0$ is nondegenerate in the subspace of $k=\pi/2$.

We also verify the nonintegrability of model~(\ref{eq:H_period2_11_NNN}). Because of the existence of $J^{zzz}$, model~(\ref{eq:H_period2_11_NNN}) is not mapped to a free fermionic (integrable) system by the Jordan-Wigner transformation. We confirm nonintegrability of model~(\ref{eq:H_period2_11_NNN}) by calculating the distribution of the ratio of consecutive level spacings, as in Fig.~\ref{fig:LevelSpacing_model1}. Figure~\ref{fig:LevelSpacing_NNN} plots the distribution constructed from energy eigenvalues with eigenmomentum $k=\pi/2$ and parity $\mathcal{P}=\pm 1$. (Here the eigenmomentum $k$ is related to the eigenvalue of translation $\lambda$ as $\lambda=e^{-ik}$, and the parity $\mathcal{P}$ is the eigenvalue of $\mathbb{Z}_{2}$ symmetry $\bigotimes_{j=1}^{N}\hat{\sigma}_{j}^{z}$.) This plot is well described by the Gaussian unitary ensemble distribution~\cite{Atas2013} (dashed line) and well separated from the Poisson distribution~\cite{Atas2013} (dotted line), indicating the nonintegrability of the model.

\begin{figure}
    \centering
    \includegraphics{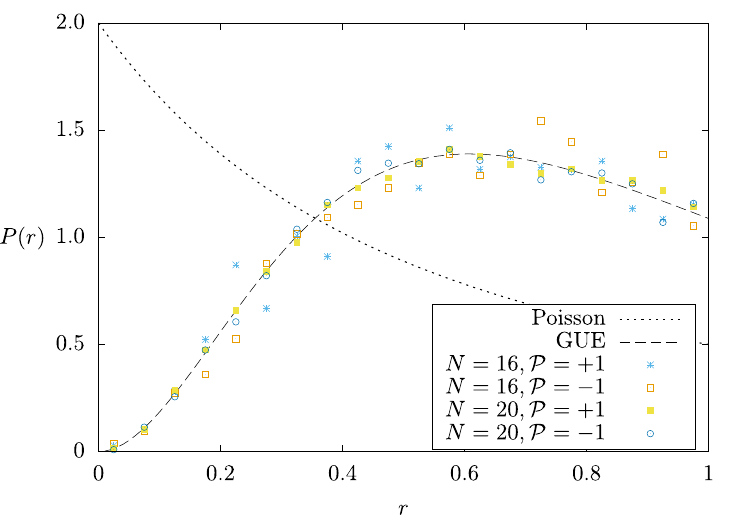}
    \caption{Distribution of the ratio of consecutive level spacings of model~(\ref{eq:H_period2_11_NNN}). We use eigenenergies in the subspace of momentum $k=\pi/2$ and parity $\mathcal{P}=\pm 1$ (regarding rotation by $\pi$ around $z$-axis). The parameters are given below Eq.~(\ref{eq:EAP_period2_11_Translation2}).}
    \label{fig:LevelSpacing_NNN}
\end{figure}

Note that, because the interference term between $\ket{2;11,10}$ and $\ket{2;10,11}$ does not affect the expectation values of local observables whose support size are less than $N/2$, any state described by a linear combination of $\ket{2;11,10}$ and $\ket{2;10,11}$ is locally indistinguishable from the maximally mixed state.

Combining all results of this subsection, we can say that the state $(\ket{2;11,10}\pm i\ket{2;10,11})/\sqrt{2}$ is a thermal eigenstate of the nonintegrable Hamiltonian~(\ref{eq:H_period2_11_NNN}), and is nondegenerate in the corresponding momentum sector~\footnote{The results of exact diagonalization also show that, at least for $N=8,16,20$, degeneracy at $E=0$ in the whole Hilbert space remains only two. This means that all eigenstates of the Hamiltonian~(\ref{eq:H_period2_11_NNN}) with eigenenergy $E=0$ are given by (linear combinations of) the EAP states $\ket{2;11,10}, \ket{2;10,11}$.}.

\section{Relation to previously constructed thermal eigenstates}
In this section, we discuss the relation between our Theorem~\ref{theorem:eigenstate-condition} and previously constructed thermal eigenstates of nonintegrable models~\cite{Udupa2023,Ivanov2024}.

\subsection{Relation to the results by A. Udupa, S. Sur, \textit{et al.}}

Now we discuss relation to the results by A. Udupa, S. Sur, \textit{et al.}~\cite{Udupa2023}.
They investigated the model described by the Hamiltonian
\begin{align}
    \hat{H}_{ZZZ}=\sum_{j=1}^{N}\bigl(J\hat{\sigma}_{j}^z\hat{\sigma}_{j+1}^z\hat{\sigma}_{j+2}^z+h\hat{\sigma}_{j}^x\bigr)
\end{align}
with the periodic boundary condition.
This model is expected to be nonintegrable because its level spacing distribution is described by the Wigner-Dyson distribution~\cite{Udupa2023}.

They showed that $\hat{H}_{ZZZ}$ has an EAP state $\ket{1;11}$ as an eigenstate with the eigenvalue $0$.
This fact can be readily verified by checking Eq.~\eqref{eq:eigenstate-condition} and using (\ref{statement:eigenstate-condition_3}) $\Rightarrow$ (\ref{statement:eigenstate-condition_2}) of our Theorem~\ref{theorem:eigenstate-condition}.

Note that $\hat{H}_{ZZZ}$ has exponentially large degeneracy, which is greater than or equal to $2^{N/2}$ as they showed, at the eigenvalue $0$.
They constructed some degenerate eigenstates other than $\ket{1;11}$, but not all of them.
Thus, their results do not exclude the possibility that, for another basis of zero energy eigenspace, each of the basis vectors becomes hardly entangled.
(By contrast, our Fig.~\ref{fig:volume-low-coefficient} shows that any superpositions of degenerate states have volume law entanglement with coefficients almost the same as the maximal value $\log 2$.)

Note also that, although they argued that constructed eigenstates including $\ket{1;11}$ are quantum many-body scar states, 
the EAP state $\ket{1;11}$ is thermal in the most fundamental sense as explained in the main text.
(We agree that some other degenerate eigenstates that contain singlet pairs between neighboring sites~\cite{Udupa2023} are many-body scar states because they can be distinguished from the Gibbs state at $\beta=0$ by a local observable on the neighboring sites.)

\subsection{Relation to the results by A. N. Ivanov and O. I. Motrunich}

Next, we discuss the relation to the results by A. N. Ivanov and O. I. Motrunich~\cite{Ivanov2024}.
They investigated the PXP model, which is described by the Hamiltonian
\begin{align}
    \hat{H}_{PXP}=\sum_{j=1}^{N}\hat{P}_{j-1}\hat{\sigma}_{j}^x\hat{P}_{j+1}
\end{align}
with the periodic boundary condition, where $\hat{P}_{j}=(1+\hat{\sigma}_j^{z})/2$.
Let $\hat{\mathcal{P}}_{\mathrm{Ryd}}$ be the projection operator to the nearest neighbor Rydberg blockaded subspace, which is spanned by the computational basis vectors $\ket{\vec{\sigma}}$ whose bitstrings $\vec{\sigma}$ do not contain neighboring ``$11$.''
[Here, the computational basis is defined by Eq.~\eqref{eq:ComputationalBasis}.]
This projection operator can be written as
\begin{align}
    \hat{\mathcal{P}}_{\mathrm{Ryd}}=\prod_{j=1}^{N}(1-\hat{Q}_{j}\hat{Q}_{j+1}),
\end{align}
where $\hat{Q}_{j}=1-\hat{P}_{j}$ and the periodic boundary condition is imposed again.

They showed that 
the projection of an EAP state, $\hat{\mathcal{P}}_{\mathrm{Ryd}}\ket{1;01}$, is an eigenstate of $\hat{H}_{PXP}$ (up to normalization) with the eigenvalue $0$.
This fact can be verified from our Theorem~\ref{theorem:eigenstate-condition} as follows. 
By checking Eq.~\eqref{eq:eigenstate-condition} and using (\ref{statement:eigenstate-condition_3}) $\Rightarrow$ (\ref{statement:eigenstate-condition_2}) of our Theorem~\ref{theorem:eigenstate-condition}, we can show that the EAP state $\ket{1;01}$ itself satisfies $\hat{H}_{PXP}\ket{1;01}=0$. 
Furthermore, we can show that each local term $\hat{P}_{j-1}\hat{\sigma}_{j}^x\hat{P}_{j+1}$ in $\hat{H}_{PXP}$ commutes with $\hat{\mathcal{P}}_{\mathrm{Ryd}}$ by naively calculating the commutator, and hence $\hat{H}_{PXP}$ also commutes with $\hat{\mathcal{P}}_{\mathrm{Ryd}}$. Thus we have
\begin{align}
    \hat{H}_{PXP}\hat{\mathcal{P}}_{\mathrm{Ryd}}\ket{1;01}
    =\hat{\mathcal{P}}_{\mathrm{Ryd}}\hat{H}_{PXP}\ket{1;01}
    =0,
\end{align}
which corresponds to the result explained above.

In addition, we show in the following that the state $\hat{\mathcal{P}}_{\mathrm{Ryd}}\ket{1;01}$ is thermal in the sense that it is indistinguishable from the maximally mixed state of the Rydberg blockaded subspace, whose density matrix is given by $\hat{\mathcal{P}}_{\mathrm{Ryd}}$ up to normalization.
We examine the reduced density matrix on consecutive $m$ sites $\{1,...,m\}$ ($m\le N/2$) for both the state $\hat{\mathcal{P}}_{\mathrm{Ryd}}\ket{1;01}$ and the maximally mixed state, and compare them.
Let $\mathcal{F}_{m}^{(\mathrm{o})}$ be the set of bitstrings that satisfy the Rydberg blockaded condition for the \emph{open} boundary condition.
As shown in Sec.~II of Supplemental Material of Ref.~\cite{Ivanov2024}, the reduced density matrix on sites $\{1,...,m\}$ ($m\le N/2$) for the state $\hat{\mathcal{P}}_{\mathrm{Ryd}}\ket{1;01}$ is given by (up to normalization)
\begin{align}
    \sum_{\vec{\sigma}\in\mathcal{F}_{m}^{(\mathrm{o})}}F_{\frac{N}{2}+2-m-\sigma_1-\sigma_m}\ket{\vec{\sigma}}\bra{\vec{\sigma}},
    \label{eq:RDM_ProjectedEAP}
\end{align}
where $F_n$ is the $n$-th Fibonacci number.
On the other hand, by a calculation similar to theirs, we can show that 
the reduced density matrix on the same sites for the maximally mixed state is given by (up to normalization)
\begin{align}
    \sum_{\vec{\sigma}\in\mathcal{F}_{m}^{(\mathrm{o})}}F_{N+2-m-\sigma_1-\sigma_m}\ket{\vec{\sigma}}\bra{\vec{\sigma}}.
    \label{eq:RDM_ProjectedMaxMix}
\end{align}
Hence, in the limit of $N\to\infty$ (with fixed $m$), Eqs.~\eqref{eq:RDM_ProjectedEAP} and \eqref{eq:RDM_ProjectedMaxMix} give the same reduced density matrices. 
This indicates that $\hat{\mathcal{P}}_{\mathrm{Ryd}}\ket{1;01}$ is indistinguishable from the maximally mixed state by the expectation values of any local observables.

Thus, the state $\hat{\mathcal{P}}_{\mathrm{Ryd}}\ket{1;01}$ studied in Ref.~\cite{Ivanov2024} is also an example of exact thermal eigenstates of nonintegrable systems (because the PXP model is known to be nonintegrable~\cite{Park2024}),
although they referred to the state $\hat{\mathcal{P}}_{\mathrm{Ryd}}\ket{1;01}$ as a quantum many-body scar state.
As explained in the main text, our terminology is based on one of the standard notions of thermal equilibrium, ``MITE~\cite{Mori2018}.'' 
\end{document}